\date{}
\newtheorem{theorem}{Theorem}[section]
\newtheorem{lemma}[theorem]{Lemma}
\newtheorem{claim}[theorem]{Claim}
\newtheorem{proposition}[theorem]{Proposition}
\newtheorem{observation}[theorem]{Observation}
\newtheorem{definition}[theorem]{Definition}
\newcommand{\longversion}[1]{}
\newcommand{\FPT}{\textsf{FPT}}
\newcommand{\untangle}{{\sf UT}}
\newcommand{\ep}{{\sf EP}}
\newcommand{\comp}{{\sf Comp}}
\newcommand{\per}{{\sf Per}}
\newcommand{\bigoh}{{\mathcal{O}}}
\newcommand{{\semiiso}}{{\sc SEMI-ISO}}
\newcommand{\erdosposa}{Erd\H{o}s-P\'{o}sa}
\newcommand{\cB}{\mathcal{B}}
\newcommand{\cP}{\mathcal{P}}
\newcommand{\cF}{\mathcal{F}}
\newcommand{\cW}{\mathcal{W}}
\newcommand{\cX}{\mathcal{X}}
\newcommand{\cZ}{\mathcal{Z}}
\newcommand{\bA}{\mathbf{A}}
\begin{document}
\title{The Half-integral {\erdosposa} Property for Non-null Cycles }
\author{
 Daniel Lokshtanov
 \thanks{Department of Informatics, University of Bergen \texttt{daniello@ii.uib.no}}\and 
  M. S. Ramanujan\thanks{Algorithms and Complexity Group, TU Wien \texttt{ramanujan@ac.tuwien.ac.at}}
  \and  
  Saket Saurabh  \thanks{The Institute of Mathematical Sciences\texttt{saket@imsc.res.in}}\addtocounter{footnote}{-3}\footnotemark
       }   
       
\maketitle
\thispagestyle{empty}

\begin{abstract} 
A Group Labeled Graph is a pair $(G,\Lambda)$ where $G$ is an oriented graph and $\Lambda$ is a mapping from the arcs of $G$ to elements of a group. A (not necessarily directed) cycle $C$ is called non-null if for any cyclic ordering of the arcs in $C$, the group element obtained by `adding' the labels on forward arcs and `subtracting' the labels on reverse arcs is not the identity element of the group.
Non-null cycles in group labeled graphs generalize several well-known graph structures, including odd cycles.

%
%A cycle $C$ in a graph $G$ labeled with the group $\Gamma$ is called non-null if, for any fixed circular ordering of the vertices in the cycle, when we add the 
%
%
%
%A cycle C in G has non-zero weight if for a given orientation of the cycle, when we add the
%labels of the forward directed edges and subtract the labels of the reverse directed edges, the total
%is non-zero.
A family $\cF$ of graphs has the {\erdosposa} property, if for every integer $k$ there exists an integer
$f(k, \cF)$ such that every graph $G$ contains either $k$ vertex-disjoint subgraphs each isomorphic to a
graph in $\cF$ or a set at most $f(k, \cF)$ vertices intersecting every subgraph isomorphic to
a graph in $\cF$.

In this paper, we prove that non-null cycles on Group Labeled Graphs have the $\frac{1}{2}$-integral {\erdosposa} property. That is, there is a function $f:{\mathbb N}\to {\mathbb N}$ such that for any $k\in {\mathbb N}$, any group labeled graph $(G,\Lambda)$ has a set of $k$ non-null cycles such that each vertex of $G$ appears in at most two of these cycles or there is a set of at most $f(k)$ vertices that intersects every non-null cycle. Since it is known that non-null cycles do not have the {\erdosposa} property in general, a $\frac{1}{2}$-integral {\erdosposa} result is the best one could hope for.
%
%Our result generalizes and improves the $\frac{1}{2}$-integral {\erdosposa} result of Reed [Combinatorica 99] for  odd cycles.
%We also prove that non-null cycles through \emph{prescribed} vertices have the {\erdosposa} property on planar graphs and the $\frac{1}{2}$-integral {\erdosposa} property in general graphs as long as the group size is fixed.
\end{abstract}

\newpage
\pagestyle{plain}
\setcounter{page}{1}

\section{Introduction}
%!TEX root = main-long-soct.tex

A family $\cF$ of graphs has the {\erdosposa} property, if for every integer $k$ there exists an integer
$f(k, \cF)$ such that every graph $G$ contains either $k$ vertex-disjoint subgraphs each isomorphic to a
graph in $\cF$ or a set at most $f(k, \cF)$ vertices intersecting every subgraph isomorphic to
a graph in $\cF$. This property is so named due to the famous result of Erd\H{o}s and P\'{o}sa \cite{Erdosposa65} who showed that the family of cycles in undirected graphs has this property. Since then, there has been a very long line of research identifying families of graphs which have the {\erdosposa} property.

Reed, Robertson, Seymour and Thomas \cite{ReedRST96} proved that the family of directed cycles also have this property, settling a 25 year old conjecture of Younger. Reed and Shepherd \cite{ReedS96} subsequently studied the same property for directed cycles in \emph{planar} digraphs and showed that the function $f(k,\cF)$ can be bounded as $\bigoh(k \log k \log \log k)$. There has also been a significant amount of research for families $\cF$ which are not only composed of cycles. Robertson and Seymour \cite{RobertsonS86}, in their Graph Minors project showed that when $\cF_H$  is a family of graphs containing a fixed graph $H$ as a minor, then $\cF_H$ has the {\erdosposa} property if and only if $H$ is planar. The bound on the function $f(k,\cF_H)$ given by Robertson and Seymour in this work was a large exponential and this was recently improved to $\bigoh(k \log^{\bigoh(1)}(k))$ (for a fixed $H$) by Chekuri and Chuzhoy \cite{ChekuriC14}.

One of the most well-studied class of graphs that \emph{does not} admit this property is the class of odd cycles. Lov\'{a}sz and Schrijver  showed that odd cycles in certain special grid-like graphs called \emph{Escher walls} do not have the {\erdosposa} property (see \cite{Reed99}). However, Reed \cite{Reed99} showed that the Escher walls are essentially the only obstructions and showed that (a) the {\erdosposa} property does hold for odd cycles on graphs that exclude a large enough Escher wall and (b) odd cycles have the $\frac{1}{2}$-integral {\erdosposa} property.  A family $\cF$ of graphs has the $\frac{1}{2}$-integral {\erdosposa} property, if for every integer $k$ there exists an integer
$f(k, \cF)$ such that every graph $G$ contains either $k$ subgraphs each isomorphic to a
graph in $\cF$ with the property that any vertex of $G$ appears in at most 2 of these subgraphs or a set at most $f(k, \cF)$ vertices intersecting every subgraph isomorphic to
a graph in $\cF$. There is an active line of research on such \emph{fractional} {\erdosposa} results for graph classes that do not admit the {\erdosposa} property \cite{KakimuraKK12,Huynharxiv}.

\interfootnotelinepenalty=10000

The family of non-null cycles in group-labeled graphs provide a natural way of generalizing the class of odd cycles in undirected graphs. A group-labeled graph is a pair $(G,\Lambda)$ where $G$ is an oriented graph and $\Lambda$ is simply a labeling of the arcs of $G$ by elements of some group $\Gamma$. A cycle in the undirected graph underlying $G$ is called \emph{non-null} if for any cyclic ordering of the arcs in this cycle, the group element obtained by `adding' the labels on forward arcs and `subtracting' the labels on backward arcs is not $1_\Gamma$, the identity element of the group $\Gamma$. We refer the reader to the section on preliminaries for the formal definitions. It is easy to see that if we take an arbitrary undirected graph $G$, an arbitrary orientation of the edges and construct the labeling $\Lambda$ by labeling each arc with the element $1$ from the group $\mathbb{Z}_2$, then the set of odd cycles in $G$ are precisely the set of non-null cycles in $(G,\Lambda)$. As a result, it follows that the family of non-null cycles in group-labeled graphs do not have the {\erdosposa} property in general. However, there are several results describing additional restrictions under which this family does have the required property. Kawarabayashi and Wollan \cite{KawarabayashiW06} showed that if the undirected graph underlying $G$ is sufficiently highly connected, then non-null cycles even for non-abelian group-labeled graphs have the {\erdosposa} property. Similarly, Wollan \cite{Wollan11} showed that if the graph is labeled with an abelian group without an element of order 2, then non-null cycles do have the {\erdosposa} property.  Huynh et al. \cite{Huynharxiv} in their original paper generalized this result for `doubly non-zero cycles' in graphs labeled with abelian groups. 
However, their requirement that the groups involved need to be mapped to abelian groups means that the result cannot be applied to the general case~\footnote{It has been brought to our attention that subsequent to our paper, they have managed to extend their result to the non-abelian case.  However, they do not provide a bound on their {\erdosposa} function, which appears to be worse than the bound proved in this paper.  }. 
%and it remained open whether the family of non-null cycles in graphs labeled with an arbitrary non-abelian group admitted a fractional {\erdosposa} property and if so, what the best achievable fraction is. 
In this work, we fill in this gap by showing that the $\frac{1}{2}$-integral {\erdosposa} result of Reed \cite{Reed99} for odd cycles can be generalized to the case of non-null cycles. 
%Furthermore, our bound on the function $f(k,\cF)$ improves significantly upon that of Reed.

\begin{restatable}{theorem}{maintheorem}\label{thm:erdos-posa}
	There is a function 
	$\tau: {\mathbb N} \to {\mathbb N}$ such that 
	$\tau(x)=2^{2^{\Theta(x\log x)}}$ and  for all $k\in {\mathbb N}$ and  groups $\Gamma$, for any $\Gamma$-labeled graph $(G,\Lambda)$, one of the following statements hold.
	\begin{itemize}
		\item $G$ contains a set of $k$ non-null cycles such that any vertex of $G$ is contained in at most 2 of these cycles or
		\item there is a set of $\tau(k)$ vertices that intersects every non-null cycle in $G$.
	\end{itemize}
	
	\noindent
	Furthermore, there is an algorithm that, given a $\Gamma$-labeled graph $(G,\Lambda)$ and an integer $k$, runs in time $2^{2^{\Theta(k\log k)}} n^{\bigoh(1)}$ and either returns a $\frac{1}{2}$-integral $k$-packing of non-null cycles or a set $X$ of size at most $\tau(k)$ which intersects every non-null cycle in $G$. 
	\end{restatable}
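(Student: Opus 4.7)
The plan is to establish the theorem via a win-win argument driven by treewidth, combined with an irrelevant-vertex reduction and a bounded-treewidth algorithm for non-null cycle packing/covering. At the top level, I would compute a tree decomposition of the underlying graph. If the treewidth is bounded by some function $g(k)$, solve the problem directly by dynamic programming; otherwise, invoke a structural argument on a large wall minor to either extract a $\frac{1}{2}$-integral packing of $k$ non-null cycles or identify an \emph{irrelevant vertex} whose removal does not change the optimal hitting set, and recurse. Choosing $g(k) = 2^{\Theta(k \log k)}$ will yield the claimed double-exponential bound $\tau(k) = 2^{2^{\Theta(k \log k)}}$.

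\noindent\textbf{Bounded treewidth case.} For the small-treewidth regime I would develop a dynamic programming algorithm on a nice tree decomposition. A key technical trick is to first \emph{shift} the labeling by a spanning-tree potential so that all tree-edges carry the identity $1_\Gamma$; this does not change the set of non-null cycles but ensures that only $\bigoh(\mathrm{tw})$ non-tree edges contribute to the group labels crossing any bag. The DP state at a bag records, for each partition of bag vertices into connected pieces, the accumulated $\Gamma$-label of each piece; although $\Gamma$ may be infinite, only a subgroup generated by at most $\mathrm{tw}$ elements is relevant per bag. The DP then computes, via a standard matroid/LP-duality type argument for $\frac{1}{2}$-integral packing, either a hitting set or a half-integral non-null cycle packing in time $2^{2^{\bigoh(\mathrm{tw})}} \cdot n$.

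\noindent\textbf{Large treewidth case.} If the treewidth exceeds $g(k)$, use the polynomial Grid Minor Theorem to extract a wall $W$ of order $h(k)$ as a topological minor, together with a planar embedding. I would then analyse the labels $\Lambda$ restricted to $W$ and the paths linking $W$ to the rest of $G$. The dichotomy is the natural generalization of Reed's Escher-wall alternative: either (i) the labels within $W$ realise an ``Escher-like'' twist, in which case the nested concentric cycles of $W$, combined with carefully chosen crossing paths, yield $k$ non-null cycles with every vertex used at most twice; or (ii) after shifting, the wall's interior behaves like a ``null'' region, and the central vertex of $W$ can be shown to be irrelevant via a rerouting argument exploiting the many vertex-disjoint linkages available inside the wall.

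\noindent\textbf{Main obstacle.} The hardest step will be the rerouting argument in case (ii) for non-abelian $\Gamma$, since the group element accumulated along a path depends on the order of arcs and cannot be manipulated by abelian flow-type techniques (which is precisely where the prior work of Huynh et al.\ was restricted). To handle this, I plan to introduce a notion of a \emph{$\Gamma$-labeled handle} attached to a wall and show that inside a large enough wall one can always find either (a) many vertex-disjoint handles whose ``closure-labels'' lie in pairwise distinct cosets, producing the required non-null cycles, or (b) a structural certificate (a small bounded-cost separator together with a shifted labeling making an interior subwall uniformly null) that justifies deleting the central vertex. Iterating the irrelevant-vertex deletion reduces the treewidth to $g(k)$, after which the bounded-treewidth DP completes both the combinatorial theorem and the algorithmic statement within the stated time.
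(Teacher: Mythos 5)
Your high-level skeleton (win-win on treewidth, irrelevant vertex, recursion) matches the paper, but the two places where the actual work happens are either flawed or left as a plan rather than a proof. First, the bounded-treewidth step: your dynamic program whose states record ``the accumulated $\Gamma$-label of each piece'' is not justified for an arbitrary, possibly infinite and non-abelian group $\Gamma$ --- the subgroup generated by $\mathrm{tw}$ elements can be infinite, so the table size is not bounded by a function of the treewidth, and the ``matroid/LP-duality type argument'' for half-integral packing inside the DP is asserted, not given. This machinery is also unnecessary: the paper's Lemma on low treewidth is a short induction that, at a lowest bag whose subtree contains a non-null cycle, either extracts that cycle and recurses on the rest with $k-1$, or adds the bag to the hitting set, yielding an \emph{integral} $k$-packing or a gfvs of size $(k-1)(w+1)$ with no group-valued DP at all.

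Second, and more seriously, in the high-treewidth case you explicitly flag the non-abelian rerouting as ``the main obstacle'' and then only propose to handle it via an undefined notion of $\Gamma$-labeled handles with ``closure-labels in pairwise distinct cosets'' and a ``structural certificate'' making the central wall vertex irrelevant; no argument is supplied for either alternative, so the crux of the theorem is missing. The paper closes exactly this gap with two concrete tools absent from your proposal: (a) the known Gallai-type packing/covering theorem for non-null $S$-paths, which holds for arbitrary groups --- either $k$ disjoint non-null $S$-paths exist, which are closed up through a \emph{clean} clique expansion or a \emph{null} subwall to give a $\frac{1}{2}$-integral packing (half-integrality comes from summing two linkages, not from Escher-type twists), or a hitting set of size $2k$ yields a small separation $(A,B)$ with $G[A\setminus B]$ clean; and (b) a new treewidth-reduction lemma showing that in a graph with a well-linked set $Z$ disjoint from a terminal set $T$, all but $t^{6t}$ vertices of $Z$ lie in no minimal $T$-multiway cut of size at most $t$ --- such a vertex $v$ on the clean side is then proved $k$-equivalent (safe to delete) by showing that a minimum gfvs of $G-v$ failing in $G$ would force $v$ into a small minimal multiway cut of $A\cap B$, a contradiction. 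Note also that the irrelevant vertex is not ``the central vertex of the wall,'' and that your wall extraction is incomplete: the grid-minor theorem gives a wall with no planarity or flatness guarantee, so one must invoke the Flat Wall Theorem, handle the apex set, and treat the large clique-expansion outcome as a separate case (the paper devotes a subsection to it); your proposal addresses none of these.
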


%	Reed \cite{Reed99} showed that odd cycles have the $\frac{1}{2}$-integral {\erdosposa} property. Since the family of Escher Walls (see also \cite{}) provide an infinite set of counterexamples to odd cycles having the {\erdosposa} property, this is the best possible result one could hope for. Since odd cycles are a special case of non-null cycles (the graph is labeled with the group ${\mathbb Z}_2$), it follows that non-null cycles also do not have the {\erdosposa} property. There have been several results addressing the {\erdosposa} property of ....

%	
%	\paragraph{Previous work.}
%	
%	\begin{itemize}
%	\item    Original {\erdosposa} for cycles and many other forbidden structures
%		\item Reed -- mangoes and blueberries -- 1/2 integral {\erdosposa} for odd cycles, integral on planar graphs.
%		\item Paul Wollan --- modularity constraints
%		\item kenichi subset oct 1/2 integral EP.
%		\item Kenichi +wollan -- integral erdos posa for oct on highly connected graphs
%		\item same people -- erdos posa for gfvs on highly connected graphs
%		\item Tony Hunyh's phd thesis ... disjoint paths problem on graphs labeled with a fixed abelian group
%		\item Recent work by Tony Hunyh... imp thing is they work with abelian groups while we do not.
%	\end{itemize}
%	

%	More recently, ....

\begin{table}[t]
\centering
\setlength{\tabcolsep}{4pt}
{\footnotesize
\begin{tabular}{l  l}
\toprule
Family $\cF$          &  Reference       \\
\midrule
{\sc Undirected cycles} &   Erd\H{o}s and P\'{o}sa \cite{Erdosposa65} \\
{\sc Directed cycles} &   Reed, Robertson, Seymour and Thomas \cite{ReedRST96} \\

{\sc Odd-cycles} &   Reed \cite{Reed99}    \\
{\sc Odd-cycles (highly connected graphs)} &   Thomassen \cite{Thomassen01}, Rautenback and Reed \cite{RautenbachR01}, Joos \cite{Joos}     \\
{\sc Non-null cycles (highly connected graphs) } &   Kawarabayashi and Wollan \cite{KawarabayashiW06}\\
{\sc  Non-null Cycles ({\rm Abelian Groups}) } &      Wollan \cite{Wollan11}\\
{\sc $S$-cycles}  &  Kakimura et al. \cite{KakimuraKM11}  and Pontecorvi and Wollan \cite{PontecorviW12}             \\
{\sc Odd/Even $S$-cycles} &     Kakimura, Kawarabayashi and Kobayashi \cite{KakimuraKK12}     \\

%{\sc Bounded Treewidth}  &      &   \\

{\sc  Long Cycles } &      Bruhn, Heinlein and Joos  \cite{BruhnJS}\\
{\sc Doubly non-zero cycles ({\rm Abelian Groups})} &   Huynh, Joos and Wollan \cite{Huynharxiv} \\

\bottomrule
\end{tabular}
		
	}
	\caption{\label{fig:vertexresults} Summary of some related {\erdosposa} results.}
	% and DNI means that our approach does not improve over fastest previous results.}
\end{table}

\paragraph{Our techniques.} At a high level we follow the same  approach used in the proofs of almost all fractional  {\erdosposa} theorems. However, our implementation introduces interesting techniques for treewidth reduction.   We divide the cases into graphs with low treewidth and high treewidth. In the former case, the fact that the forbidden structure we are looking for is connected immediately implies the existence of an integral {\erdosposa} function. In the latter case, we show that the graph contains an \emph{irrelevant} vertex which can be computed in {\FPT} time. For this, we develop a new `treewidth-reduction' lemma which is similar in spirit to the Treewidth Reduction Theorem of Marx et al. \cite{MarxOSR13}. They showed that the set of all minimal $s$-$t$ separators of  size at most some $k\in {\mathbb N}$ is contained in a graph of bounded treewidth where the dependence of the treewidth on $k$ is at least doubly exponential. We on the other hand, obtain a result that essentially states that if the graph has treewidth exceeding a bound in $2^{\bigoh(k)}$, then there is a vertex which is \emph{not} part of any minimal $s$-$t$ separator of size at most $k$ and furthermore, this vertex can be computed in time $2^{\bigoh(k)}n^{\bigoh(1)}$. Since one can easily construct a graph where the treewidth is $2^{\Omega(k)}$ and all vertices are part of a minimal $s$-$t$ separator of size at most $k$, our result is asymptotically optimal in this respect. However, we point out that the improved bound in our treewidth reduction lemma comes at the cost of a bigger dependence of our algorithm on the input size as opposed to the approach of Marx et al. which allows one to design \emph{linear-time} {\FPT} algorithms. Furthermore, our proof technique can be easily adapted to show that non-null cycles on \emph{planar} graphs do admit the {\erdosposa} property and the function $f(k,\cF)=2^{\Theta(k \log k)}$.

 \paragraph{Related work on Feedback Set problems in Group Labeled graphs.} Guillemot \cite{Guillemot11a} initiated the study of the parameterized complexity of the problem of deciding whether a given group labeled graph has a set of at most $k$ vertices or arcs that intersect every non-null cycle. He showed that the vertex-version of the problem is fixed-parameter tractable ({\FPT}) parameterized by $k$ and the size of the group while the edge-version is {\FPT} parameterized by $k$ alone.
  His results were improved upon by Cygan et al. \cite{CyganPP16} who gave the first {\FPT} algorithm for the vertex-version of this problem parameterized by $k$ alone and subsequently by Wahlstr\"{o}m \cite{Wahlstrom14} who obtained an {\FPT} algorithm with optimal asymptotic dependence on $k$.

%Finally, we remark that we have not attempted to optimize the constants in our lemmas and in fact we have chosen loose bounds (preserving the asymptotics) whenever possible to keep the proofs simpler.

%\begin{theorem}
%	\label{thm:planar}
%For all $k\in {\mathbb N}$ and  groups $\Gamma$, for any $\Gamma$-labeled \textbf{planar} graph $(G,\Lambda)$, one of the following statements hold.
%	\begin{itemize}
%		\item $G$ contains a set of $k$ vertex-disjoint non-null cycles or
%		\item there is a set of $\bigoh(k^{\bigoh(1)})$ vertices that intersects every non-null cycle in $G$.
%	\end{itemize}
%\end{theorem}

%\begin{theorem}[Algorithmic Version]
%	There is an algorithm that, given a $\Gamma$-labeled graph $(G,\Lambda)$ and an integer $k$, runs in time $\bigoh(\tau(k)n^{\bigoh(1)})$ and either returns a $\frac{1}{2}$-integral $k$-packing of non-null cycles or a set $X$ of size at most $\tau(k)$ which intersects every non-null cycle in $G$. 
%\end{theorem}

%We ignore the explicit reference to $\Lambda$ whenever it is clear from the context.
%This generalizes a theorem of Reed \cite{} for odd cycles.

%Our proof also shows that if we only want a $\frac{1}{3}$-integral {\erdosposa} theorem, then the function $\tau$ can be improved to $\tau(k)=2^{\Theta(k^2 \log k)}$. 

%\begin{theorem}
%There is an algorithm that, given a $\Gamma$-labeled graph $(G,\Lambda)$ and an integer $k$, runs in time $f(k,|\Gamma|)n^{\bigoh(1)}$ and correctly concludes one of the following. Either $G$ has a 1/2-integral $k$-packing of cycles or $G$ has no $k$-packing.
%	
%\end{theorem}
\section{Preliminaries}

\paragraph{{\sf Group Labeled Graphs}.}
Let $\Gamma$ be a group. A $\Gamma$-labeled graph is a pair $(G,\Lambda)$ where $G$ is a digraph and $\Lambda:A(G)\to \Gamma$. For every $a=(u,v)\in A(G)$, we define $\Lambda((v,u))=\Lambda((u,v))^{-1}$. For a digraph $G$, we denote by $\tilde G$ the underlying undirected digraph.
 Let $C$ be a cycle in the undirected graph underlying $G$. 
Let $P=v_1,\dots, v_\ell$ be a path in $\tilde G$. We denote by $\Lambda(P)$ the element $\Lambda((v_1,v_2))\cdot \Lambda((v_2,v_3))\cdot \Lambda ((v_{\ell-1},v_\ell))$. Let $C=v_1,\dots, v_\ell,v_1$ be a cycle in $\tilde G$. We denote by $\Lambda(C)$ the element $\Lambda((v_1,v_2))\cdot \Lambda((v_2,v_3))\cdot \Lambda ((v_{\ell-1},v_\ell)) \cdot \Lambda ((v_{\ell},v_1)) $. We call $C$ non-null if $\Lambda(C)\neq 1_\Gamma$. Note that even though different choices of the start vertex $v_1$ in the same cycle may lead to different values for $\Lambda(C)$, it is easy to see that if for one choice of $v_1$ the value of $\Lambda(C)$ is not $1_\Gamma$ then for no choice of $v_1$ is it $1_\Gamma$. Whenever we refer to a path or cycle in the graph $G$, we will be talking about the path or cycle in $\tilde G$. For a pair of vertices $u,v$ we call a $u$-$v$ path $P$ in $\tilde(G)$ a \emph{non-null} $u$-$v$ path if $\Lambda(P)\neq 1_\Gamma$. We call $P$ an \emph{identity} path if $\Lambda(P)=1$. For a set $A\subseteq V(G)$ we say that a path in $\tilde G$ is a non-null $A$-path if it is a non-null $u$-$v$ path for some $u,v\in A$.
A set $X\subseteq V(G)$ that intersects every non-null cycle in $G$ is called a group feedback vertex set (\emph{gfvs}) of $G$. A set ${\cal C}=\{C_1,\dots, C_k\}$ of $k$ vertex-disjoint non-null cycles is called a $k$-packing of non-null cycles or an \emph{integral} $k$-packing of non-null cycles. If every vertex of $G$ appears in at most 2 cycles in the set $\cal C$, then we call $\cal C$ a $\frac{1}{2}$-integral $k$-packing of non-null cycles. 

%For a path P = (v1, . . . , vℓ) we denote Λ(P) = Λ((v1, v2)) · . . . · Λ((vℓ−1, vℓ)). Similarly, for a cycle
%C = (v1, . . . , vℓ
%, v1) we denote Λ(C) = Λ((v1, v2)) · . . . · Λ((vℓ−1, vℓ)) · Λ((vℓ
%, v1)). We call a cycle C a
%non-null cycle, iff Λ(C) 6= 1Σ. Observe that if the group Σ is non-abelian, then it may happen that cyclic
%shifts of the same cycle yield different elements of the group; nevertheless, the notion of a non-null cycle is
%well-defined, as either all of them are equal to 1Σ or none of them

\begin{definition}
	Let $(G,\Lambda)$ be a $\Gamma$-labeled graph with labeling function $\Lambda$. Let $\lambda:V(G)\to \Gamma$. We say that $\lambda$ is a \textbf{consistent labeling} if for each arc $(u, v)\in A(G)$ we have $\lambda(v)=\lambda(u)\cdot \Lambda(a)$.
\end{definition}

\begin{lemma}{\sc \cite{Guillemot11a,CyganPP16}} Let $(G,\Lambda)$ be a $\Gamma$-labeled graph and suppose that $G$ has no non-null cycles. Then, $G$ has a consistent labelling.
\end{lemma}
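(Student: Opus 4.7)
The plan is to build $\lambda$ from a spanning structure of $G$ and then use the hypothesis on non-null cycles to verify consistency on the remaining arcs. First, I would reduce to the case where the underlying graph $\tilde G$ is connected: connected components can be labeled independently, since a consistent labeling on each component glues together into a consistent labeling on the whole graph. So assume $\tilde G$ is connected.

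Next, I would fix an arbitrary root vertex $r$, set $\lambda(r):=1_\Gamma$, and fix a spanning tree $T$ of $\tilde G$ rooted at $r$. For every $v\in V(G)$, let $P_v$ denote the unique path from $r$ to $v$ in $T$, oriented from $r$ to $v$. Define $\lambda(v):=\Lambda(P_v)$, where $\Lambda$ on a path is defined, as in the preliminaries, by multiplying labels of forward arcs and inverses of labels of reverse arcs. By construction, $\lambda$ satisfies the consistency condition $\lambda(v)=\lambda(u)\cdot \Lambda((u,v))$ for every arc $(u,v)$ whose underlying edge lies in $T$.

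The main step is to verify consistency for every arc $(u,v)\in A(G)$ whose underlying edge does \emph{not} lie in $T$. Let $w$ be the lowest common ancestor of $u$ and $v$ in $T$, let $Q_u$ be the subpath of $P_u$ from $w$ to $u$ and $Q_v$ the subpath of $P_v$ from $w$ to $v$. Concatenating $Q_u$, the arc $(u,v)$, and the reverse of $Q_v$ yields a (simple) cycle $C$ in $\tilde G$. By hypothesis $\Lambda(C)=1_\Gamma$, and expanding gives $\Lambda(Q_u)\cdot \Lambda((u,v))\cdot \Lambda(Q_v)^{-1}=1_\Gamma$. Since $\lambda(u)=\lambda(w)\cdot \Lambda(Q_u)$ and $\lambda(v)=\lambda(w)\cdot \Lambda(Q_v)$, rearranging yields exactly $\lambda(v)=\lambda(u)\cdot \Lambda((u,v))$, as required.

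The only subtlety I foresee is making sure the group-theoretic bookkeeping works in the non-abelian setting, and in particular that the definition of $\Lambda$ on a path is compatible with reversal: this is guaranteed by the convention $\Lambda((v,u))=\Lambda((u,v))^{-1}$ already fixed in the preliminaries, so $\Lambda(\overline{Q})=\Lambda(Q)^{-1}$ for any path $Q$. With that convention in hand, the cancellation at the LCA $w$ in the last step goes through verbatim. No decomposition of closed walks into cycles is needed, because the spanning-tree construction reduces the verification to a \emph{single simple} fundamental cycle per non-tree arc, on which the hypothesis applies directly.
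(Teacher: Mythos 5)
Your proof is correct: the spanning-tree construction with $\lambda(v):=\Lambda(P_v)$ and the verification on fundamental cycles (using $\Lambda(\overline{Q})=\Lambda(Q)^{-1}$ and cancellation at the LCA, which works fine non-abelianly) is exactly the standard argument, and it is the one behind the cited sources \cite{Guillemot11a,CyganPP16}; the paper itself states the lemma with a citation and gives no proof. No gaps worth noting beyond trivial degenerate cases (loops and parallel edges), which your argument handles under the usual convention that these count as cycles.
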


%\begin{lemma} 
%	Let $(G,\Lambda)$ be a $\Gamma$-labeled graph and suppose that $G$ has no non-null cycles. Let $u,v$ be arbitrary vertices in $G$ and $P_1,P_2$ be $u$-$v$ paths. Then, $\Lambda(P_1)=\Lambda(P_2)$.
%\end{lemma}
%
%\begin{proof}
%	If this were not the case, then the graph obtained by taking the union of the paths $P_1$ and $P_2$ can be easily seen to contain a non-null cycle, a contradiction.
%\end{proof}

\begin{observation}
\label{obs:trivial}
Let $(G,\Lambda)$ be a $\Gamma$-labeled graph. Then, $(G,\Lambda)$ has no non-null cycles if and only if $G$ has a closed walk $W$ such that $\Lambda(W)\neq 1_\Gamma$. 
\end{observation}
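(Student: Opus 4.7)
The biconditional as written reads ``$(G,\Lambda)$ has no non-null cycles iff $G$ has some closed walk $W$ with $\Lambda(W)\neq 1_\Gamma$,'' which appears to contain a typographical error: if $(G,\Lambda)$ has no non-null cycles, then the preceding Lemma yields a consistent labelling and a telescoping argument will force \emph{every} closed walk to have label $1_\Gamma$, so the right-hand condition as literally stated would fail rather than hold. I will therefore plan the proof of what is plainly the intended statement, namely: $(G,\Lambda)$ has a non-null cycle if and only if $G$ has a closed walk $W$ with $\Lambda(W)\neq 1_\Gamma$; equivalently, negating both sides simultaneously, $(G,\Lambda)$ has no non-null cycle iff every closed walk $W$ satisfies $\Lambda(W)=1_\Gamma$. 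Both readings are standard ``reducing non-null cycles to closed walks'' statements, and the proof I propose handles them uniformly.

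The ``easy'' direction of the intended equivalence is immediate: any non-null cycle $C$ is itself a closed walk with $\Lambda(C)\neq 1_\Gamma$, so if $(G,\Lambda)$ contains a non-null cycle then a witnessing closed walk exists. For the other direction I will invoke the preceding Lemma (Guillemot, Cygan et al.) under the assumption that $(G,\Lambda)$ has no non-null cycles. The Lemma then furnishes a consistent labelling $\lambda:V(G)\to\Gamma$ satisfying $\lambda(v)=\lambda(u)\cdot\Lambda((u,v))$ for every arc $(u,v)\in A(G)$, equivalently $\Lambda((u,v))=\lambda(u)^{-1}\lambda(v)$. The convention $\Lambda((v,u))=\Lambda((u,v))^{-1}$ declared at the start of the preliminaries guarantees that the identity $\Lambda(a)=\lambda(\mathrm{tail}(a))^{-1}\lambda(\mathrm{head}(a))$ holds for \emph{either} orientation of each edge, which is precisely what is needed to evaluate labels on arbitrary walks in the underlying undirected graph $\tilde G$. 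Given any closed walk $W=v_0 v_1\cdots v_\ell v_0$ in $\tilde G$, the product defining $\Lambda(W)$ then telescopes in $\Gamma$ as $\lambda(v_0)^{-1}\lambda(v_1)\cdot\lambda(v_1)^{-1}\lambda(v_2)\cdots\lambda(v_\ell)^{-1}\lambda(v_0)=\lambda(v_0)^{-1}\lambda(v_0)=1_\Gamma$, completing the argument.

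The proof is essentially a routine consequence of the preceding Lemma and does not present any substantive obstacle. The only point requiring genuine care is checking that the consistency relation $\Lambda(a)=\lambda(\mathrm{tail}(a))^{-1}\lambda(\mathrm{head}(a))$ extends correctly to \emph{both} orientations of each underlying edge, since a closed walk in $\tilde G$ may traverse an arc ``backwards'' relative to its orientation in $G$; this is exactly where the inverse convention on arc labels is used, and once it is in place the telescoping identity applies uniformly to every closed walk.
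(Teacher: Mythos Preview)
The paper states this as an observation without proof (note the label \texttt{obs:trivial}), so there is no authors' argument to compare against. Your diagnosis of the typo is correct, and your proof of the intended statement via the consistent-labelling lemma and telescoping is sound.

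One remark on alternatives: given the ``trivial'' label, the authors may have had in mind the direct combinatorial argument that avoids the consistent-labelling lemma altogether. Namely, take a shortest closed walk $W$ with $\Lambda(W)\neq 1_\Gamma$; if $W$ revisits a vertex $v$, split it there as $W=P_1\cdot C\cdot P_2$ with $C$ a closed subwalk at $v$, and observe that either $\Lambda(C)\neq 1_\Gamma$ (so $C$ is shorter and non-null) or $\Lambda(C)=1_\Gamma$ (so $P_1\cdot P_2$ is shorter with the same label as $W$), contradicting minimality in either case. This works even for non-abelian $\Gamma$ because inserting or deleting an identity factor in a product is harmless. Your route through the preceding lemma is equally valid and arguably cleaner; the direct argument simply makes the observation self-contained.
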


\begin{proposition}{\rm \cite{ChudnovskyGGGLS06, ChudnovskyCG08}}\label{prop:gallai} Let $(G,\Lambda)$ be a $\Gamma$-labeled graph and let $S\subseteq V(G)$. For any $k\in {\mathbb N}$ one of the following holds:
 
 \begin{itemize}
 	\item there exist vertex-disjoint non-null $S$-paths $P_1,\dots, P_k$.
 	\item there exists $X\subseteq V(G)$ with $|X|\leq 2k-2$ such that $X$ hits all non-null $S$-paths.
 \end{itemize}

 \noindent
 	Furthermore there is an algorithm that, given $(G,\Lambda)$, $k$ and $S$ runs  in time $\bigoh(|V(G)|^6)$ and computes either the paths $P_1,\dots, P_k$ or the set $X$.
 \end{proposition}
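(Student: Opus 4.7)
The plan is to reduce the question to a linear matroid parity problem and then invoke Lov\'asz's min-max theorem for linear matroid parity, which is the high-level strategy behind the original proofs in \cite{ChudnovskyGGGLS06, ChudnovskyCG08}. First, I would reduce to a finitely generated group: only the subgroup of $\Gamma$ generated by the labels on the edges of $G$ is relevant, and any such subgroup admits a faithful matrix representation over some field $\mathbb{F}$, so I may assume $\Gamma \leq \mathrm{GL}_d(\mathbb{F})$ for a suitable $d$.

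The core of the reduction is to build a linear matroid $M$ on a ground set of ``half-edges'' of $G$, equipped with a natural pairing (two half-edges per edge of $G$, together with auxiliary elements incident to the terminals in $S$), in such a way that a matroid parity solution of size $k$ in $M$ corresponds bijectively to a collection of $k$ pairwise vertex-disjoint non-null $S$-paths. The representing matrix is designed so that a sequence of paired half-edges tracing a walk $P$ in $G$ is linearly independent precisely when $\Lambda(P) \neq 1_\Gamma$; this is where the matrix embedding of $\Gamma$ enters crucially. Internal vertices of a path contribute dependencies that ``telescope'' along the path, so that only the cumulative product of labels governs independence, exactly matching the definition of non-null paths.

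Applying Lov\'asz's min-max theorem for linear matroid parity then yields either a parity solution of size $k$, which unpacks into the required $k$ disjoint non-null $S$-paths, or a ``blocker'' whose size translates, back in $G$, to a vertex set of size at most $2k-2$ intersecting every non-null $S$-path; the factor $2$ arises because each blocker element of $M$ covers at most two half-edges meeting at a single vertex of $G$. The main technical obstacle is designing the matroid representation so that both directions of this correspondence are exact and the cover bound is tight; getting the constant right (as opposed to some larger multiple of $k$) is the most delicate part and forces the careful choice of gadgets at internal vertices versus at vertices of $S$. For the algorithmic statement, Lov\'asz's polynomial-time algorithm for linear matroid parity, as refined by Gabow--Stallmann and Orlin, runs in time $\bigoh(|V(G)|^6)$ once the representation dimension is controlled, which is possible since only the subgroup of $\Gamma$ relevant to potential $S$-paths needs to be represented.
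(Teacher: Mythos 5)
This proposition is imported by the paper as a black box from \cite{ChudnovskyGGGLS06, ChudnovskyCG08}; the paper contains no proof of it, so the only question is whether your sketch would actually establish the statement, and there is a genuine gap at its very first step. You assume that the subgroup of $\Gamma$ generated by the edge labels ``admits a faithful matrix representation over some field $\mathbb{F}$.'' That is false in general: by Malcev's theorem every finitely generated linear group is residually finite, and there are finitely generated groups that are not residually finite, hence not linear over any field. Since the proposition is stated for an arbitrary group $\Gamma$ (and the paper really needs this generality), the reduction to \emph{linear} matroid parity cannot even be set up. This is not a cosmetic issue: it is known (Yamaguchi; Tanigawa--Yamaguchi) that reductions of non-zero $A$-path packing to linear matroid parity work only for restricted classes of groups/labelings, precisely because a coherent linear representation of the labels is needed. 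Even when $\Gamma$ happens to be linear, the dimension $d$ of a faithful representation is not bounded by any function of $|V(G)|$, so the claimed $\bigoh(|V(G)|^6)$ running time would not follow from Gabow--Stallmann/Orlin without an additional argument controlling $d$, which you do not have.

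The cited sources avoid this entirely: the min--max theorem of Chudnovsky, Geelen, Gerards, Goddyn, Lohman and Seymour is proved by a direct combinatorial (inductive) argument, and the Chudnovsky--Cunningham--Geelen algorithm is an augmenting-path/blossom-type algorithm that manipulates group elements only through multiplication, inversion and identity tests, which is exactly why it applies to arbitrary groups in the stated time bound. If you want to salvage the matroid-parity route, you would have to either restrict the class of groups (which would weaken Proposition~\ref{prop:gallai} below what Theorem~\ref{thm:erdos-posa} needs) or replace the faithful-representation step by a construction that works in the oracle model for arbitrary $\Gamma$; as written, the proposal does not prove the proposition.
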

 
 \begin{definition}
	For a $\Gamma$-labeled graph $(G,\Lambda)$, the result of {\bf untangling} $\Lambda$ around a vertex $v$ with group element $g\in \Gamma$ is the graph $(G,\Lambda')$ where $\Lambda'$ is defined as follows. For every arc $a$ which is not incident on $v$, $\Lambda'(a)=\Lambda(a)$. For every arc $(v,u)\in A(G)$, we set $\Lambda'(v,u)=g\cdot \Lambda(v,u)$ and for every arc $(u,v)\in A(G)$, we set $\Lambda'(u,v)=\Lambda(u,v)\cdot g^{-1}$.

	 Let $A\subseteq V(G)$ such that $G[A]$ has no non-null cycles. Let $\lambda:V(G[A])\to \Gamma$ be a consistent labeling of $G[A]$. We denote by $\untangle(G,\Lambda,A)$ the graph obtained from $G$ by untangling $\Lambda$ around every vertex $v\in A$ with the group element $\lambda(v)$ (in an arbitrary order). We will not explicitly refer to $\lambda$ when performing untangling as it can be chosen to be any consistent labeling of $G[A]$.\end{definition}
	
	As a direct consequence of the definition of the untangling operation, we have the following.

\begin{proposition}
	\label{prop:untangling} Let $(G,\Lambda)$ be a $\Gamma$-labeled graph and let $X\subseteq V(G)$ such that $(G[X],\Lambda)$ has no non-null cycles. Then, there is a labelling $\Lambda'$ for $A(G)$ such that the set of non-null cycles in $(G,\Lambda)$ and those in $(G,\Lambda')$ are the same and for every arc $a=(u,v)\in A(G)$ such that $u,v\in X$, we have that $\Lambda'(a)=1_\Gamma$.
\end{proposition}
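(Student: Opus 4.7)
The plan is to take $\Lambda' := \untangle(G,\Lambda,X)$ as defined in the statement of the Definition just above the proposition, using a consistent labeling of $G[X]$, and then verify the two required properties separately. By the quoted lemma of Guillemot/Cygan--Pilipczuk--Pilipczuk, since $(G[X],\Lambda)$ has no non-null cycles, there is a consistent labeling $\lambda:V(G[X])\to\Gamma$, i.e.\ $\lambda(v)=\lambda(u)\cdot \Lambda(u,v)$ for every arc $(u,v)\in A(G[X])$. Fix such a $\lambda$ and let $\Lambda'$ be the labeling obtained by untangling around every $v\in X$ with group element $\lambda(v)$ in some arbitrary order.

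To establish the first property, I will show the stronger statement that each individual untangling step preserves the label of every closed walk. Consider one untangling at a vertex $v$ with element $g$, producing $\Lambda''$ from $\Lambda$, and let $C=w_1,\dots,w_\ell,w_1$ be a cycle. If $v\notin V(C)$ the labels of all arcs of $C$ are unchanged, so $\Lambda''(C)=\Lambda(C)$. If $v=w_i$ for some $i$, the two arcs of $C$ incident to $v$ are $(w_{i-1},v)$ and $(v,w_{i+1})$ (with the convention on orientations already built into $\Lambda$ via $\Lambda(u,v)=\Lambda(v,u)^{-1}$); their product within the cyclic word $\Lambda(C)$ changes from $\Lambda(w_{i-1},v)\cdot \Lambda(v,w_{i+1})$ to $\Lambda(w_{i-1},v)\cdot g^{-1}\cdot g\cdot \Lambda(v,w_{i+1})$, which is identical. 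Hence $\Lambda''(C)=\Lambda(C)$, and in particular $\Lambda''(C)=1_\Gamma$ iff $\Lambda(C)=1_\Gamma$. Iterating over every $v\in X$ yields that the non-null cycles of $(G,\Lambda)$ and $(G,\Lambda')$ coincide.

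For the second property, fix an arc $a=(u,v)\in A(G)$ with $u,v\in X$. All untangling steps at vertices outside $\{u,v\}$ leave $\Lambda(a)$ unchanged. The untangling at $u$ (with element $\lambda(u)$) multiplies the label of $a$ on the left by $\lambda(u)$, and the untangling at $v$ (with element $\lambda(v)$) multiplies it on the right by $\lambda(v)^{-1}$; the two steps commute because they act on disjoint sides of the arc label. Thus $\Lambda'(a)=\lambda(u)\cdot \Lambda(u,v)\cdot \lambda(v)^{-1}$, and by consistency of $\lambda$ we have $\lambda(v)=\lambda(u)\cdot\Lambda(u,v)$, so $\Lambda'(a)=\lambda(v)\cdot \lambda(v)^{-1}=1_\Gamma$, as required.

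No serious obstacle is expected; the entire statement is a direct book-keeping consequence of the definition of untangling and the existence of a consistent labeling on $G[X]$. The only mild subtlety is ensuring well-definedness under an arbitrary order of untangling operations, which follows from the observation above that each step only affects labels on arcs incident to the chosen vertex and that two untanglings at distinct endpoints of an arc act on opposite sides of its label.
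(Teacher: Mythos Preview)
Your proposal is correct and follows exactly the approach indicated in the paper: pick a consistent labeling $\lambda$ of $G[X]$ (which exists by the cited lemma) and untangle around each vertex of $X$ with the corresponding group element. The paper leaves the verification implicit, while you spell out both that each untangling step preserves the label of every cycle and that consistency of $\lambda$ forces the resulting label on each arc inside $X$ to be $1_\Gamma$; your computations are accurate and the remark about order-independence is a nice touch.
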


The statement follows by picking an arbitrary consistent labeling $\lambda$ for $G[X]$ and untangling $\Lambda$ around each $x\in X$ with the group element $\lambda(x)$.

\paragraph{{\sf Tree-decompositions}.}

A \emph{tree decomposition} of a graph~$G$ is a pair~$(T, \chi)$, where~$T$ is a tree and~$\chi : V(T) \rightarrow 2^{V(G)}$ is a function such that the following conditions are satisfied:
\begin{enumerate}\setlength\itemsep{-.7mm}
	\item For each edge~$uv \in E(G)$ there is a node~$b \in V(T)$ such that~$\{u,v\} \subseteq \chi(b)$.
	\item For each~$v \in V(G)$ the nodes~$\{b \mid v \in \chi(b)\}$ induce a non-empty connected subtree of~$T$.
\end{enumerate}

The sets $\chi(b)$ for $b \in V(T)$ are called {\em bags} of the tree decomposition. We will assume that the reader is familiar with tree decompositions and their basic properties. 
For an introduction to tree decompositions, see~\cite{diestelbook}. 
%We extend the definition of $\chi$ so that it also may take subsets of vertices of $T$ as input. In particular, for $B \subseteq V(T)$ we define $\chi(B) = \bigcup_{b \in B} \chi(b)$. The {\em width} of a tree-decomposition is the maximum size of a bag in the decomposition, minus $1$. 
The tree-width of a graph $G$ is the minimum width of any tree-decomposition of $G$.
% A useful fact is that a graph $G$ has tree-width at most $1$ if and only if $G$ is a forest~\cite{Diestel10}.

\paragraph{{\sf Minors}.} We follow the notation and terminology from \cite{GeelenGRSV09}. Let $H$ and $G$ be graphs.
An $H$-expansion in $G$ is a function $\eta$ with domain $V(H)\cup E(H)$ such that for every $v\in V(H)$, $\eta(v)$ is a subgraph of $G$ that is a tree, for every $u,v\in V(H)$, $\eta(u)$ and $\eta(v)$ are vertex disjoint, for every $e=(u,v)\in E(H)$, $\eta(e)$ is an edge $f\in E(G)$ such that $f$ is incident in $G$ with both $\eta(u)$ and $\eta(v)$. The trees $\{\eta(v):v\in V(H)\}$ are called the \emph{supernodes} of the expansion. For each supernode $\eta(v)$, we fix an arbitrary vertex, called the \emph{center} and denote it by $\gamma(v)$.
We use $\bigcup \eta$ to denote the subgraph of $G$ given by the union of the nodes and use $\eta(u,v)$ to denote the edge between $\eta(u)$ and $\eta(v)$ corresponding to the edge $(u,v)\in E(H)$.  The graph $K_\ell$ denotes the clique on $\ell$ vertices.
%
%
%
%Let $\eta$ be a $K_\ell$-expansion in $G$ and let $B$ be a $s$-preference of order at least $\ell$ in $G$. We say that $B$ \emph{controls} $\eta$ if for every set of at most $\ell-1$ vertices, $B(X)$ completely contains $\eta(v_i)$ for some $i\in [\ell]$.
  We will also require the following simple observation regarding the interaction of a clique-expansion with  blocks of the graph. Recall that a block of a graph is a maximal 2-vertex-connected subgraph of the graph.

\begin{observation}\label{obs:unique_block}
Let $G$ be a graph, $\ell\in {\mathbb N}$ and  $\eta$ be a $K_\ell$-expansion in $G$. For any set $X\subseteq V(G)$ of at most $\ell-1$ vertices, there is a unique component of $G-X$ that completely contains every $\eta(v_i)$ disjoint from $X$ and further, there is a unique block of $G-X$ that intersects every $\eta(v_i)$ disjoint from $X$.
\end{observation}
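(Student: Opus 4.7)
The plan is to reduce both assertions to the simple fact that the supernodes are pairwise vertex-disjoint, so at most $|X|\le \ell-1$ of them meet $X$; set $S=\{v_i\in V(K_\ell):\eta(v_i)\cap X=\emptyset\}$, which has $|S|\ge 1$.

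For the component part, I would observe that for any two distinct $v_i,v_j\in S$ the edge $\eta(v_i,v_j)$ has both endpoints in $\eta(v_i)\cup\eta(v_j)\subseteq V(G)\setminus X$ and hence survives in $G-X$. Since each $\eta(v_i)$ is itself connected, the union of all supernodes indexed by $S$ together with all such cross-edges is a connected subgraph of $G-X$; the (unique) component of $G-X$ containing this union is then the one asserted.

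For the block part, the key idea is to exhibit cycles that stitch triples of supernodes together. When $|S|\ge 3$, picking any $v_i,v_j,v_k\in S$ and concatenating the three cross-edges with the unique tree paths between their endpoints inside $\eta(v_i)$, $\eta(v_j)$, $\eta(v_k)$ gives a cycle $C_{ijk}\subseteq G-X$ that lies in a single block $B$. For any further $v_m\in S$, the analogous cycle $C_{ijm}$ shares the edge $\eta(v_i,v_j)$ with $C_{ijk}$, and since each edge belongs to a unique block, $C_{ijm}\subseteq B$ as well; thus $B$ meets every $\eta(v_m)$ with $v_m\in S$. Uniqueness I would deduce from the block-cut tree of $G-X$: if two distinct blocks $B,B'$ both met every such supernode, then any cut vertex $c$ on the block-cut-tree path between $B$ and $B'$ would separate $B\setminus\{c\}$ from $B'\setminus\{c\}$ in $G-X$, forcing each connected $\eta(v_i)$ with $v_i\in S$ to contain $c$, which contradicts their pairwise disjointness as soon as $|S|\ge 2$.

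The main obstacle is tidying up the degenerate regimes $|S|\in\{1,2\}$, where the triangular-cycle argument is not available. The case $|S|=2$ is handled by noting directly that the single cross-edge $\eta(v_i,v_j)$ belongs to a unique block of $G-X$, and that block meets both supernodes at the endpoints of this edge; the case $|S|=1$ is vacuous for the block claim in every use we will make of the observation, since the $K_\ell$-expansions produced later will always be invoked with $\ell$ large enough that at most $\ell-2$ vertices are removed.
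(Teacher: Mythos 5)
Your proof is correct, and while the component part and the $|S|=2$ case coincide with the paper's argument, your treatment of the block claim takes a different route. The paper argues by contradiction: if a block $\cB$ of $G-X$ met $\eta(v_i)$ but missed $\eta(v_j)$ (with a third surviving supernode $\eta(v_r)$ also meeting $\cB$), the edges $\eta(v_iv_j)$ and $\eta(v_rv_j)$ yield two internally vertex-disjoint $\eta(v_j)$--$\cB$ paths with distinct endpoints in $\cB$, which forces a vertex of $\eta(v_j)$ into $\cB$; this simultaneously gives existence (via the block containing a cross-edge) and, implicitly, uniqueness. You instead build explicit cycles through triples of surviving supernodes, glue them into one block using the fact that every edge lies in a unique block, and then prove uniqueness separately via a cut vertex on the block-cut-tree path between two candidate blocks. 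Your route is slightly longer but more self-contained: it makes the uniqueness step explicit (the paper essentially leaves it to the reader), and your closing remark about the degenerate regime is apt -- when only one supernode survives, the uniqueness of the block can genuinely fail, a case the paper's proof also tacitly excludes and which never arises in the applications, where $|X|$ is far smaller than the number of supernodes.
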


\begin{proof}
	The first part of the statement follows immediately from the fact that for any pair $i\neq j\in [\ell]$ with $\eta(v_i)$, $\eta(v_j)$  disjoint from $X$, the edge $\eta(v_iv_j)$ ensures that both these sets are contained in the same component of $G-X$.
	We now argue the second part which is stronger. That is,  there is a unique block of $G-X$ that \emph{intersects} every $\eta(v_i)$ disjoint from $X$. Observe that if there are exactly 2 supernodes $\eta(v_i)$ and $\eta(v_j)$ disjoint from $X$, then the unique block of $G-X$ containing the edge $\eta(v_iv_j)$ satisfies the statement. Hence, we may assume that there are at least 3 supernodes of $\eta$ disjoint from $X$.

	 Now, suppose that for a pair $i\neq j\in [\ell]$ with $\eta(v_i)$, $\eta(v_j)$ disjoint from the $X$, there is a block of $G-X$, say $\cB$ such that $\cB$ intersects $\eta(v_i)$ and is disjoint from $\eta(v_j)$. Furthermore, we may assume without loss of generality that $\cB$ also intersects the supernode $\eta(v_r)$ for some $r\neq i,j$. This is due to the edge $\eta(v_rv_i)$. But now, we can use the edges $\eta(v_iv_j)$ and $\eta(v_rv_j)$ to construct a pair of $\eta(v_j)$-$\cB$ paths which are internally vertex-disjoint and have distinct endpoints in $\cB$. But this implies that some vertex of $\eta(v_j)$ is in the  block $\cB$, a contradiction to our assumption that $\eta(v_j)$ is disjoint from $\cB$. Hence we conclude that there is a unique block of $G-X$ that intersects every $\eta(v_i)$ disjoint from $X$.	 
	 	\end{proof}

\begin{observation}\label{obs:non-null-cycle-block}
Let $(G,\Lambda)$ be a $\Gamma$-labeled graph and let $\cB$ be a block in this graph. If $\cB$ contains a non-null cycle, then for every $u,v\in \cB$, there is a non-null cycle that contains both $u$ and $v$.
\end{observation}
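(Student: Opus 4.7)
The plan is to combine the 2-vertex-connectivity of $\cB$ with the hypothesized non-null cycle $C$. Given $u, v \in V(\cB)$, Menger's theorem yields two internally vertex-disjoint $u$-$v$ paths $P_1, P_2$ in $\cB$, whose union is a cycle $C_{uv}$ passing through both $u$ and $v$. If $\Lambda(C_{uv}) \neq 1_\Gamma$, we are done, so assume $\Lambda(C_{uv}) = 1_\Gamma$ and we will modify $C_{uv}$ using $C$.

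To perturb $C_{uv}$ into a non-null cycle still through $u, v$, I apply Menger's theorem once more inside $\cB$ to obtain two internally vertex-disjoint paths $R_1, R_2$ connecting $V(C_{uv})$ to $V(C)$, with endpoints $y_1, y_2$ on $C_{uv}$ and $x_1, x_2$ on $C$; if $V(C_{uv}) \cap V(C) \neq \emptyset$, one or both $R_i$ may be taken trivial at a shared vertex. The pair $\{x_1, x_2\}$ splits $C$ into two arcs $c_1, c_2$ whose labels satisfy $\Lambda(c_1) \cdot \Lambda(c_2)^{-1} = \pm \Lambda(C) \neq 1_\Gamma$ (up to orientation). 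Similarly $\{y_1, y_2\}$ splits $C_{uv}$ into two arcs.

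In the favorable configuration, where both $u$ and $v$ lie on the same $y_1$-$y_2$ arc of $C_{uv}$, I replace the other $y_1$-$y_2$ arc by the detour $R_1 \cdot c_j \cdot R_2^{-1}$ for $j = 1, 2$, producing two candidate cycles $D_1, D_2$ through both $u$ and $v$. A direct label computation gives $\Lambda(D_1) \cdot \Lambda(D_2)^{-1} = \pm \Lambda(C) \neq 1_\Gamma$, so at least one of $D_1, D_2$ is non-null. The main technical obstacle is the remaining "crossed" configuration, where $y_1$ and $y_2$ separate $u$ from $v$ along $C_{uv}$, and the naive replacement drops one of $u, v$ from the resulting cycle. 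To handle this, I exploit the freedom in the Menger pair $\{P_1, P_2\}$: by re-applying Menger in $\cB$ while requiring the new pair of internally vertex-disjoint $u$-$v$ paths to route through a selected vertex of $V(C) \cup V(R_1) \cup V(R_2)$, I obtain a new cycle $C_{uv}'$ on which the attachment points land on the same $u$-$v$ arc, reducing to the favorable case. Iterating this re-routing and applying the label computation then yields the desired non-null cycle through $u$ and $v$.
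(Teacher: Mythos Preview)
Your handling of the ``crossed'' configuration is the genuine gap, and it cannot be repaired: the observation as literally stated is false. Take the theta graph on $\{a,b,p_1,p_2,p_3\}$ consisting of three internally disjoint $a$--$b$ paths $a,p_i,b$ for $i\in\{1,2,3\}$, and label every arc $1_\Gamma$ except $\Lambda((a,p_2))=g\neq 1_\Gamma$. This graph is $2$-connected, hence its own block, and the cycle $a,p_1,b,p_2,a$ has label $g^{-1}\neq 1_\Gamma$, so the block contains a non-null cycle. But for $u=p_1$ and $v=p_3$ the \emph{only} cycle in the graph containing both is $a,p_1,b,p_3,a$, and its label is $1_\Gamma$. (One checks that every arc lies on some non-null cycle, so even the standing assumption made at the start of Section~4 does not rescue the statement.)

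Your re-routing idea cannot succeed on this example simply because no second cycle through $p_1$ and $p_3$ exists; ``re-applying Menger while requiring the new $u$--$v$ paths to route through a selected vertex'' is not something Menger's theorem gives you, and here it is outright impossible. The paper offers no proof of this observation. In the two places where it is invoked (the clique-expansion lemma and the flat-wall lemma) what is actually needed is a non-null cycle through \emph{some} suitable pair $u,v$ drawn from a large supply of candidates, so a weaker existential form may suffice for those applications; but the universal claim you are attempting to prove is false, and the crossed case is precisely where any purported proof must break down.
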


\paragraph{{\sf Linkages}.} A set of vertex disjoint paths is called  a {\em linkage}. Let $X$ and $Y$ be vertex sets and  $\cP$ be a set of vertex-disjoint $X$-$Y$ paths which are internally disjoint from $X\cup Y$. If every vertex in $X\cup Y$ appears as the endpoint of a path in $\cP$, we call $\cP$ an $X$-$Y$ linkage. If $X$ and $Y$ are sets in a group labeled graph and each path in $\cP$ is a non-null path, then we call $\cP$ a non-null $X$-$Y$ linkage. Note that a single vertex in $X\cap Y$ is considered to be an $X$-$Y$ path. If $|Y|\geq |X|$ and there is a $Y'\subseteq Y$ such that there is an $X$-$Y'$ linkage then we say that $X$ is linked to $Y$.

\begin{definition}

A set $X\subseteq V(G)$ is $p$-linked in $G$ if $|X|\geq p$ and for all subsets $X_1,X_2\subseteq X$ with $|X_1|=|X_2|\leq p$, there is an $X_1$-$X_2$ linkage in $G$. The set $X$ is said to be well-linked if it is $\frac{|X|}{2}$-linked.

\end{definition}

% \begin{proposition}\label{prop:find_linked}{\sc \cite{}}
%If $tw(G)>w$, then $G$ contains a well-linked set of size at least $\frac{2w}{3}$. Furthermore, given $G$ and $w$, one can compute a tree decomposition of width at most $w$ or a well-linked set of size at least $\frac{2w}{3}$ in time $2^{\bigoh(w)}n^{\bigoh(1)}$.
% \end{proposition} 

 \begin{observation}\label{obs:well-linked-separator}
Let $G$ be an undirected graph, $r\in {\mathbb N}$ and let $Z$ be a well-linked set in $G$. For any set $X\subseteq V(G)$ such that $|X|< |Z|/2$ and any connected component $C$ of $G-X$, either $|C\cap Z|\leq |X|$ or $(Z\setminus (C\cup X))\leq |X|$. 
\end{observation}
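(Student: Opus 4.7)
The plan is to prove the observation by contradiction, exploiting the well-linkedness of $Z$ to build a linkage that overloads the separator $X$. Assume for contradiction that both $|C\cap Z|>|X|$ and $|Z\setminus(C\cup X)|>|X|$. Set $Z_1:=C\cap Z$ and $Z_2:=Z\setminus(C\cup X)$; these are disjoint subsets of $Z$ lying in different sides of $G-X$, each of size strictly greater than $|X|$.

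Next, I would pick arbitrary subsets $Z_1'\subseteq Z_1$ and $Z_2'\subseteq Z_2$ of common size $|X|+1$. The hypothesis $|X|<|Z|/2$ gives $|X|+1\le |Z|/2$, so this size falls within the range in which the well-linkedness of $Z$ guarantees the existence of a $Z_1'$-$Z_2'$ linkage $\cP$. Thus $\cP$ consists of $|X|+1$ pairwise vertex-disjoint paths in $G$, each going from a vertex of $Z_1'\subseteq C$ to a vertex of $Z_2'\subseteq V(G)\setminus(C\cup X)$.

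The finishing step is to observe that because $C$ is a connected component of $G-X$, any path from a vertex in $C$ to a vertex outside $C\cup X$ must pass through at least one vertex of $X$. Applying this to every path in $\cP$ and using vertex-disjointness, we obtain $|X|+1$ distinct vertices of $X$, a contradiction. This completes the argument. There is really no hard step here; the only point that deserves a moment of care is the verification that $|X|+1$ is a legal subset size for the well-linkedness definition, which is immediate from the strict inequality $|X|<|Z|/2$.
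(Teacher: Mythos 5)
Your argument is correct and is the natural (surely intended) one -- the paper states this observation without proof: assuming both $|C\cap Z|>|X|$ and $|Z\setminus(C\cup X)|>|X|$, well-linkedness yields $|X|+1$ vertex-disjoint paths between an $(|X|+1)$-subset of $C\cap Z$ and an $(|X|+1)$-subset of $Z\setminus(C\cup X)$, and each such path must meet $X$ because $C$ is a connected component of $G-X$, giving $|X|+1$ distinct vertices of $X$, a contradiction. The one imprecise point is exactly the step you flag: $|X|<|Z|/2$ alone does \emph{not} give $|X|+1\le |Z|/2$ when $|Z|$ is odd (take $|Z|=2m+1$ and $|X|=m$; then $|X|+1=m+1>|Z|/2$), so the appeal to $\frac{|Z|}{2}$-linkedness is not justified as written. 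The repair is immediate and stays inside your argument: under the contradiction hypothesis, $C\cap Z$ and $Z\setminus(C\cup X)$ are disjoint subsets of $Z$, each of size at least $|X|+1$, so $2(|X|+1)\le |Z|$, i.e.\ $|X|+1\le |Z|/2$ and the well-linkedness condition does apply; equivalently, in the leftover parity case $|Z|$ odd, $|X|=\lfloor |Z|/2\rfloor$, the desired conclusion already follows from the counting bound $|C\cap Z|+|Z\setminus(C\cup X)|\le |Z|$ without any linkage. With that one-line fix your proof is complete.
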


%\begin{observation}\label{obs:well-linked-separator}
%Let $G$ be an undirected graph, $r\in {\mathbb N}$ and let $Z$ be a well-linked set in $G$. For any set $X\subseteq V(G)$ such that $|X|< |Z|/2$, there is at most one connected component of $G-X$ that contains more than $|X|$ vertices of $Z$.
%\end{observation}

\paragraph{{\sf Separations}.}  Given a graph $G = (V, E)$, an ordered pair $(A, B)$ of vertex sets such that
$A \cup B = V(G)$ is called a \emph{separation} of $G$ if there is no edge $(u,v)$ with $u\in A\setminus B$ and  $v\in B\setminus A$. The sets $A$ and $B$ are called the \emph{sides} of this separation. A separation
$(A, B)$ such that neither $A \subseteq B$ nor $B \subseteq A$ is called a \emph{proper} separation. The order
of a separation $(A, B)$ is the cardinality of its separator $A \cap B$. A separation of order $k$ is called a $k$-separation.

%\begin{lemma}\label{lem:multiway_cut}
%	Let $(A,B)$ be a separation in $G$ such that $G[A]$ is clean and $G$ is untangled around $A$....
%\end{lemma}

\begin{definition}
	Let $k,p\in \mathbb N$ and let $(G,\Lambda)$  be  a $\Gamma$-labeled graph. We denote by $\ep(G,k,p)$ the predicate : $G$ has either a $\frac{1}{2}$-integral $k$-packing or a gfvs of size at most $p$.
\end{definition}

\begin{definition}
	Let $k\in \mathbb N$ and let $(G,\Lambda)$ and $(G',\Lambda')$ be  $\Gamma$-labeled graphs where $G'$ is a subgraph of $G$. We say that $(G',\Lambda')$ is $k$-{\bf equivalent} to $(G,\Lambda)$  if for every $p\in {\mathbb N}$, if $\ep(G',k,p)$ holds then $\ep(G,k,p)$ holds. Again, we ignore the explicit reference to $\Lambda$ and $\Lambda'$ when clear from the context. 
%	We say that a vertex $v\in V(G)$ is {\bf irrelevant} if $G-v$ is $k$-equivalent to $G$ for every $k\in {\mathbb N}$.
\end{definition}

%\begin{definition}
%Let $(G,\Lambda)$ be a $\Gamma$-labeled graph and let $S\subseteq V(G)$. 
%	The graph obtained by {\bf making $S$ an identity clique} is the graph $(G',\Lambda')$ obtained from $G$ as follows. Initially, set $G'=G$ and $\Lambda'=\Lambda$. For every pair of vertices $u,v\in S$ such that there is no arc in $G$ between them, add the arc $(u,v)$ to $G'$ and set $\Lambda'(u,v)=1_\Gamma$.
%\end{definition}

\begin{definition}
Let $(G,\Lambda)$ be a $\Gamma$-labeled graph and let $S\subseteq V(G)$. If $G[S]$ has no non-null cycles then we call $G[S]$ a {\bf clean subgraph}.
\end{definition}

\paragraph{{\sf Separators}.} Let $X$ and $Y$ be disjoint vertex sets in an undirected graph $G$. We say that a set $S\subseteq V(G)\setminus (X\cup Y)$ is an $X$-$Y$ {\bf separator} if it intersects all $X$-$Y$ paths. We denote by $R(X,S)$ the set of vertices in the components of $G-S$ that intersect $X$.

\begin{definition}An $X$-$Y$ separator $S$ is called {\bf important} if there is no other $X$-$Y$ separator $S'$ such that $|S'|\leq |S|$ and $R(X,S')\supset R(X,S)$. If such a separator $S'$ does exist, then it is said to {\bf dominate} the separator $S$.
\end{definition}

\begin{proposition}\label{prop:impsep}{\sc \cite{CyganFKLMPPS15}} The number of important $X$-$Y$ separators of size at most $k$ is bounded by $4^k$ and these can be enumerated in time $\bigoh(4^k n^{\bigoh(1)})$.	
\end{proposition}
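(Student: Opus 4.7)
The plan is to enumerate all important $X$-$Y$ separators of size at most $k$ by a bounded-depth branching procedure, using the measure $\mu = 2k - \lambda(X,Y)$, where $\lambda(X,Y)$ denotes the size of a minimum $X$-$Y$ vertex cut in the current instance. Since every important separator has size between $\lambda(X,Y)$ and $k$, the procedure can immediately terminate if $\lambda(X,Y) > k$ or the instance is otherwise trivial. Initially $\mu \leq 2k$, so if each branching step strictly decreases $\mu$ and each leaf of the recursion tree outputs at most one important separator, then the number of important separators is at most $2^{2k} = 4^k$.

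At each recursive call I would first compute a maximum $X$-$Y$ flow and extract the unique minimum $X$-$Y$ separator $S_0$ that is \emph{closest to} $X$, i.e.\ the one maximizing $R(X, S_0)$, which can be read off the residual graph in polynomial time. Then I would pick any vertex $v \in S_0$ and branch on whether the important separator $S$ being enumerated contains $v$: in branch (i) I guess $v \in S$ and recurse on $G - v$ with budget $k - 1$; in branch (ii) I guess $v \notin S$ and recurse on $G$ with the enlarged terminal set $X \cup \{v\}$ and the same budget $k$.

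The heart of the argument, and the main obstacle, is showing that branch (ii) is lossless, i.e.\ that every important $X$-$Y$ separator $S$ of size at most $k$ satisfies either $v \in S$ or $v \in R(X, S)$. Suppose for contradiction that $v$ lies strictly on the $Y$-side of $S$. Then one would uncross $S$ with $S_0$ via the submodularity of vertex cuts to produce a separator $S'$ of size at most $|S|$ whose $X$-side strictly contains $R(X, S)$, contradicting the importance of $S$. This is the classical uncrossing argument for minimum cuts, combined essentially with the defining property of $S_0$ as the minimum cut closest to $X$, and it is where the real technical work of the proof sits.

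Finally I would verify the measure drop. In branch (i), removing $v \in S_0$ reduces $\lambda$ by exactly $1$ while $k$ decreases by $1$, so $\mu$ decreases by $1$. In branch (ii), because $v$ lies in a minimum $X$-$Y$ cut that is closest to $X$, augmenting $X$ to $X \cup \{v\}$ strictly increases $\lambda$, so $\mu$ again drops by at least $1$. Each recursive call performs a single maximum-flow computation in polynomial time, and the recursion tree has at most $4^k$ leaves; together these give the enumeration algorithm running in time $\bigoh(4^k n^{\bigoh(1)})$.
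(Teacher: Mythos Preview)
The paper does not prove this proposition; it is quoted as a known result from \cite{CyganFKLMPPS15}. Your branching argument with the measure $\mu = 2k - \lambda(X,Y)$ is exactly the standard proof given there, so in substance your proposal is correct and matches the cited source.

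One terminological slip worth fixing: you call $S_0$ the minimum separator ``closest to $X$'' and then define it as the one \emph{maximizing} $R(X,S_0)$. These are opposites --- maximizing $R(X,S_0)$ means $S_0$ is farthest from $X$ (closest to $Y$). The latter is what your argument actually needs: in branch (ii), the claim that $\lambda(X\cup\{v\},Y) > \lambda(X,Y)$ relies on there being no minimum $X$-$Y$ cut lying strictly between $v$ and $Y$, which is guaranteed precisely when $S_0$ has maximal $R(X,S_0)$. With $S_0$ genuinely closest to $X$ the measure need not drop in branch (ii) (e.g.\ on a path $x\text{--}a\text{--}b\text{--}y$ with $S_0=\{a\}$, adding $a$ to $X$ leaves $\lambda=1$). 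Since your parenthetical makes the correct choice explicit, the proof goes through; just align the wording with it.
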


%We conclude this section with the following remark.

\paragraph{{\sf Remark}.} If a subgraph is known to be clean, then we occasionally treat it as a simple undirected graph by forgetting the edge labels and directions. When it is clear from the context that the subgraph under consideration is clean, we will not explicitly mention this operation of removing the group labels and directly invoke lemmas that normally only apply to undirected graphs. 
Finally, we remark that we have not attempted to optimize the constants in our lemmas and in fact we have chosen loose bounds (preserving the asymptotics) whenever possible to keep the proofs simpler.

%\input{overview}
%\newpage
\section{Treewidth reduction lemma}
In this section, we begin by presenting our treewidth reduction lemma.  We prove a more general statement than the one mentioned in the introduction, in terms of multiple terminals and node multiway cuts instead of two terminals $s$ and $t$ and $s$-$t$ seperators.

\begin{definition}
	Let $G$ be an undirected graph and $T\subseteq V(G)$. Let $\cP=\{P_1,\dots, P_\ell\}$ be a partition of $T$ where $\ell>1$. We say that $X\subseteq V(G)\setminus T$ is a $(\cP,T)$-{\bf multiway cut} if $G-X$ has no $P_i$-$P_j$ path for any $1\leq i\neq j\leq \ell$. We say that $X$ is a $T$-{\bf multiway cut} if there is a partition of $T$, say $\cP=\{P_1,\dots, P_\ell\}$ such that it is a $(\cP,T)$-multiway cut.
	We say that $X$ is a minimal $(\cP,T)$-multiway cut ($T$-multiway cut) if there is no strict subset which is also a $(\cP,T)$-multiway cut ($T$-multiway cut).
\end{definition}

\begin{lemma}\label{lem:tw_reduction_one}
%There is a function $\iota:{\mathbb N}\to {\mathbb N}$ such that $\iota(x)=2^{\Theta(x^2\log x)}$ and  
For any  undirected graph $G$, an integer $t>1$ and a vertex set $T$ of size at most $t$, 
%if $tw(G-T)>\iota(t)$
if $G$ contains a well-linked set $Z$ of size at least $7t$ which is disjoint from $T$, 
 then there is a set $\cX\subseteq Z$ of at most 
%$2^{\bigoh(t^3\log tt)}$ 
$t^{6t}$ vertices such that no vertex $v\in Z\setminus \cX$ is part of a minimal $T$-multiway cut of size at most $t$.  Furthermore, there is an algorithm that, given $G,T,Z$ and $t$, runs in time $2^{\bigoh(t\log t)}n^{\bigoh(1)}$ and computes the set $\cX$.	
	
\end{lemma}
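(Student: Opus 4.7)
My approach is to define $\mathcal{X}$ through an enumeration of important separators, taken over all possible ways to partition the terminal set $T$, and to use the well-linkedness of $Z$ to guarantee that this enumeration captures every relevant vertex.

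First, the plan is to iterate over all ordered partitions $\mathcal{P}=(P_1,\ldots,P_\ell)$ of $T$ with at most $t$ parts; there are at most $t^t$ of these. For each partition and each index $i\in[\ell]$, I invoke Proposition~\ref{prop:impsep} to enumerate the at most $4^t$ important $P_i$-$(T\setminus P_i)$ separators of size at most~$t$, each of size at most~$t$. Setting $\mathcal{X}$ to be the union over all these separators $S$ of $S\cap Z$ yields $|\mathcal{X}|\le t^t\cdot t\cdot 4^t\cdot t \le t^{6t}$ for $t\ge 2$ (using $4^t\le t^{2t}$ in this range), and the construction runs in $2^{\bigoh(t\log t)}n^{\bigoh(1)}$ time since each family of important separators is enumerable in $\bigoh(4^t n^{\bigoh(1)})$ time.

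The main correctness task is to show that every $v\in Z$ lying in some minimal $T$-multiway cut $X$ of size at most~$t$ belongs to $\mathcal{X}$. Given such $X$ and $v\in X\cap Z$, let $\mathcal{P}$ be the partition of $T$ induced by the components of $G-X$, let $R_i$ denote the union of components containing $P_i$, and set $X_i=N(R_i)$. By minimality of $X$, every vertex of $X$ lies in some $X_i$; in particular $v\in X_i$ for some~$i$, making $X_i$ a $P_i$-$(T\setminus P_i)$ separator of size $\le t$ containing $v$. If $X_i$ is already an important separator then $v\in\mathcal{X}$; otherwise the standard push argument towards $T\setminus P_i$ dominates $X_i$ by an important separator $X_i^\ast$ with $|X_i^\ast|\le|X_i|$ and $R(P_i,X_i^\ast)\supseteq R_i$, and we are done if $v\in X_i^\ast$.

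The remaining ``absorption'' case, where $v\notin X_i^\ast$ and $v\in R(P_i,X_i^\ast)\setminus R_i$, is the technical heart of the proof and the main obstacle. The plan here is to derive a contradiction from the well-linkedness of $Z$. Applying Observation~\ref{obs:well-linked-separator} to $X_i$---valid because $|X_i|\le t<|Z|/2$---at most $|X_i|\le t$ vertices of $Z$ lie on one side of $X_i$, so at least $|Z|-2t\ge 5t$ lie on the other side. I expect to exploit this imbalance by exhibiting, via the linkages guaranteed by well-linkedness among $Z$, a collection of more than $|X_i^\ast|$ internally disjoint $P_i$-to-$(T\setminus P_i)$ paths that are forced through $v$, contradicting that $X_i^\ast$ is a cut of size $\le t$ (by Menger's theorem). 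Formalizing this quantitative argument---tracking how the push operation rearranges vertices of $Z$ relative to both $X_i$ and $X_i^\ast$, and ensuring that the forced $Z$-linkages really do route through $v$ rather than being absorbed into the expanded reachability region---is where I expect the bulk of the work to lie, and may require an auxiliary case split depending on which side of $X_i$ holds the bulk of $Z$.
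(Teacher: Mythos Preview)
Your construction of $\mathcal{X}$ as the union of $S\cap Z$ over all important $P_i$--$(T\setminus P_i)$ separators $S$ does not work, and the ``absorption case'' you flag is not merely a technicality but a genuine obstruction. Here is a concrete counterexample. Take $T=\{s,t\}$ (so the parameter is $t=2$), let $Z=\{v,z_2,\dots,z_{14}\}$ induce a clique, and let the only further edges be the path $s\,a\,v\,b\,t$ with $a,b\notin Z$. Then $Z$ is well-linked of size $7t$, and $\{v\}$ is a minimal $T$-multiway cut of size $1\le t$. The unique important $\{s\}$--$\{t\}$ separator is $\{b\}$ and the unique important $\{t\}$--$\{s\}$ separator is $\{a\}$; both are disjoint from $Z$, so your $\mathcal{X}$ is empty while $v\in Z$ lies in a minimal $T$-multiway cut. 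Your proposed rescue via well-linkedness cannot succeed here either: the sentence ``more than $|X_i^\ast|$ internally disjoint $P_i$-to-$(T\setminus P_i)$ paths forced through $v$'' is self-contradictory (at most one internally vertex-disjoint path can pass through a single vertex), and in the example no amount of $Z$--$Z$ linkage yields extra $s$--$t$ paths, since every such path uses the single edge $bt$.

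The paper handles exactly this absorption phenomenon by two coupled changes. First, it adds an auxiliary apex $q^\star$ joined by length-2 paths to every vertex of $Z$, and enumerates important $P_i$--$(q^\star\cup(T\setminus P_i))$ separators of size at most $2t$ in the augmented graph. Second, and crucially, it adds to $\mathcal{X}$ not $Q\cap Z$ but the set $Z\cap R(P_i,Q)$, i.e.\ all $Z$-vertices on the $P_i$-side of $Q$. The apex forces this set to be small: any $z\in Z$ on the $P_i$-side has its private degree-2 neighbour $z^\star$ on the path to $q^\star$, so $z^\star\in Q$ and hence $|Z\cap R(P_i,Q)|\le |Q|\le 2t$. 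Now when a minimal multiway cut $S$ containing $v$ is pushed to an important separator $J$, either $v\in J$ or $v$ lands in $R(P_i,J)$; in both cases $v$ is collected. In your example this would place $v$ into $\mathcal{X}$ via the $s$-side of the dominating separator, which is precisely what your construction misses.
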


\begin{proof}

%We begin by using Proposition \ref{prop:find_linked} to conclude that since $tw(G-T)>\iota(t)$, it must be the case that $G$ contains a well-linked set $Z$ of size at least $\frac{2\iota(t)}{3}$ which is disjoint from $T$. 
%Observe that if $|Z|\leq 4t^{3t}$, the lemma is vacuously true and hence we may assume that $Z$ is large enough.
We first construct a graph $G^\star$ as follows. We start with the graph $G$ and add a new  vertex $q^\star$ and  a path of length 2 between $q^\star$ and every vertex of $Z$. This completes the description of the graph $G^\star$. For every vertex $x\in Z$,  we refer to the unique vertex adjacent to $x$ and $q^\star$ as $x^\star$. We refer to the set of new degree-2 vertices adjacent to $Z$ as $Z^\star$.

For each partition of $T$,  $\cP=\{P_1,\dots, P_\ell\}$, where $\ell>1$, we compute $\ell+1$ vertex sets $\cX^{\cP},\cX^{\cP}_1,\dots, \cX^\cP_\ell$.
%The graph $G^{\cP}_i$ is defined to be the graph obtained from $G$ by adding a    
For each $i\in [\ell]$, the set $\cX^{\cP}_i$ is defined in the following way. Initially, $\cX^\cP_i=\emptyset$. For every important $P_i$-$(q^\star\cup (T\setminus P_i))$ separator $Q$ of size at 
most $2t$ in the graph $G^{\cP}_i$ (if it exists), we add the vertices in $Z\cap R_{G^\star}(P_1,Q)$ to the set $\cX^{\cP}_i$.
Finally, we define the set $\cX^{\cP}$ to be $\bigcup_{i\in [\ell]} \cX^{\cP}_{i}$ and the set $\cX$ claimed in the statement of the lemma as $\cX=\bigcup_{\cP:\texttt{ partition of }T} \cX^{\cP}$. We now argue the claimed bound on the size of $\cX$. For that, we prove a bound on the size of the set $\cX^\cP$ for a specific partition $\cP$.

\begin{claim}
	$|\cX^\cP|\leq 16^t\cdot 2t$.
\end{claim}

\begin{proof}
Due to Observation \ref{obs:well-linked-separator}, we know that for any $P_i$-$(q^\star\cup (T\setminus P_i))$ separator $Q$ of size at 
most $2t$ in the graph $G^{\cP}_i$, one of the sets $Z\cap R_{G^\star}(P_1,Q)$ or $Z\cap NR_{G^\star}(P_1,Q)$ has size at most $2t$.  Suppose that that $Z\cap NR_{G^\star}(P_1,Q)$ has size at most $2t$. Since $|Z|\geq 7t$, it must be the case that $Z\cap R_{G^\star}(P_1,Q)$ has size at least $3t$ (after removing the intersection with $Q$ also). But observe that for every vertex $v\in Z\cap R_{G^\star}(P_1,Q)$, it must be the case that $v^\star \in Q$. But since the size of $Q$ is at most $2t$, this leads to a contradiction. Hence, we conclude that that for any $P_i$-$(q^\star\cup (T\setminus P_i))$ separator $Q$ of size at 
most $2t$ in the graph $G^\star$, the size of the set $Z\cap R(P_1,Q)$ is at most $2t$. 
Since the number of important $P_i$-$(q^\star\cup (T\setminus P_i))$ separators of size at 
most $2t$ is bounded by $4^{2t}$ (by Proposition \ref{prop:impsep}), the claim follows.
%This completes the proof of the claim.
\end{proof}

Due to the above claim, the size of $\cX$ is bounded by $t^t\cdot 4^{2t} \cdot 2t\leq  t^{6t}$ since $t>1$.
Having proved the size bound on $\cX$, it remains to argue that no vertex $v\in Z\setminus \cX$ is part of a minimal $T$-multiway cut of size at most $t$ in $G$. Suppose to the contrary that for some partition of $T$, say $\cP=\{P_1,\dots, P_\ell\}$, there is a minimal $(\cP,T)$-multiway cut $S$ of size at most $t$ which also contains the vertex $v$. Since $S$ is minimal, we may assume without loss of generality that $G-S$ has two connected components $C_1,C_2$ such that $C_1$ intersects $P_1$, $C_2$ intersects $P_2$ both $C_1$ and $C_2$ are adjacent to the vertex $v$. By Observation \ref{obs:well-linked-separator}, we know that one of $C_1$ or $C_2$ contains at most $t$ vertices of $Z$. Without loss of generality, suppose that $C_1$ is such a component and let $Q=C_1\cap Z$ and $S=N(C_1)$. 
Let $Q^\star$ denote the set of degree-2 vertices adjacent to the vertices in $Q$. Let $S'=S\cup Q^\star$. Observe that $S'$ is now a \emph{minimal} $P_1$-$(q^\star\cup (T\setminus P_i))$ separator of size at most $2t$. Since $S'$ contains the vertex $v$, it cannot be an important $P_1$-$(q^\star\cup (T\setminus P_i))$ separator. Hence, there is a $P_1$-$(q^\star\cup (T\setminus P_i))$ separator $J$ that dominates $S'$ and furthermore, $J$ does not contain $v$. However, this implies that $v\in R_{G^\star}(P_1,J)$ and hence, we would have added $v$ to the set $\cX^\cP_1$, a contradiction. Therefore, we conclude that no vertex $v\in Z\setminus \cX$ is part of a minimal $T$-multiway cut of size at most $t$ in $G$. Finally, the running time bound follows from that of Proposition \ref{prop:impsep} and the fact that the number of partitions of a set of size at most $t$ is upper bounded by $t^t$. This completes the proof of the lemma.
	\end{proof}

Observe that when the above lemma is specialized to the case of minimal $s$-$t$ separators of size at most $t$, the step where we go over all partitions of $T$ is redundant and hence we obtain a bound of $2^{\bigoh(t)}$ on the size of the set $\cX$. We now show how to use the lemma above to detect an irrelevant vertex when certain conditions are satisfied.

\begin{lemma}\label{lem:linkage2}
%	There is a function $\rho(x)=2^{\Theta(x\log x)}$ such that 
	For every $p\in \mathbb N$ and $\Gamma$-labeled graph $G$, if $G$ has  a separation $(A,B)$  such that \begin{enumerate}
%	\item $|A\cap B|\leq \ell$, 
	\item $G[A]$ is clean,  
	\item $1<|A\cap B|\leq p$ and
	\item $G[A]$ contains a well-linked set $Z$ of size $> 2^{p}\cdot p^{6p}$ disjoint from $A\cap B$,
 \end{enumerate}

\noindent
then there is a vertex $v$ in $A\setminus B$ such that 
%which is not part of a minimal $((A\cap B)\setminus J)$-multiway cut of size at most $|A\cap B|-|Z|$ for any $J\subseteq A\cap B$ in the graph $G[A]-J$. Furthermore, 
the graph $G-v$ is $k$-equivalent to the graph $G$ for any $k\in {\mathbb N}$ and given $G,p,Z$ and the separation $(A,B)$, the vertex $v$ can be computed in time $2^{\bigoh(p\log p)}n^{\bigoh(1)}$.

\end{lemma}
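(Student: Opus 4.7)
The plan is: (i) normalize the labels inside $G[A]$ to $1_\Gamma$ so that every walk internal to $A$ contributes the identity element; (ii) apply the treewidth reduction lemma to $G[A]$ with terminal set $T := A \cap B$ to locate a small exceptional set $\cX \subseteq Z$; (iii) pick $v$ from $Z \setminus \cX$ (after a secondary filtering justified by the $2^p$ slack) and use the untangled structure of $G[A]$ to reroute every non-null cycle through $v$ around $v$.

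For step (i), since $G[A]$ is clean, Proposition~\ref{prop:untangling} lets me assume without loss of generality that every arc internal to $G[A]$ has label $1_\Gamma$, preserving the family of non-null cycles of $(G,\Lambda)$. For step (ii), since $|Z| > 2^p \cdot p^{6p} \geq 7p$, I invoke Lemma~\ref{lem:tw_reduction_one} on the underlying undirected graph of $G[A]$ with terminal set $T$ (of size at most $p$) and threshold $t = p$. In time $2^{\bigoh(p \log p)} n^{\bigoh(1)}$ this produces a set $\cX \subseteq Z$ with $|\cX| \leq p^{6p}$ such that no $v \in Z \setminus \cX$ belongs to any minimal $(\cP,T)$-multiway cut of $G[A]$ of size at most $p$, for any partition $\cP$ of $T$. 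The candidate $v$ is selected from $Z \setminus \cX$, whose size exceeds $(2^p - 1) p^{6p}$.

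For the $k$-equivalence claim, the packing direction is immediate. For the gfvs direction, suppose $X$ is a gfvs of $G - v$ of some size $q$ and, for contradiction, there is a non-null cycle $C \subseteq G - X$ through $v$. Then $C$ contains an $A$-segment $P$ through $v$ with endpoints $t_1, t_2 \in T$, and by the untangling step $P$ has group contribution $1_\Gamma$. If I can exhibit any $t_1$-$t_2$ walk $P'$ in $G[A] - (X \cap A) - v$, substituting $P'$ for $P$ in $C$ yields a non-null closed walk in $(G - v) - X$, and hence by Observation~\ref{obs:trivial} a non-null cycle in $(G - v) - X$, contradicting $X$ being a gfvs of $G - v$. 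Existence of $P'$ is equivalent to $t_1$ and $t_2$ lying in the same component of $G[A] - (X \cap A) - v$; if not, then $(X \cap A) \cup \{v\}$ is a $(\cP,T)$-multiway cut of $G[A]$, for the partition $\cP$ induced by connectivity in $G[A] - (X \cap A) - v$. Extracting a minimal subset $S \subseteq (X \cap A) \cup \{v\}$ that is still a $(\cP,T)$-multiway cut and applying the defining property of $v$ forces $v \notin S$ whenever $|S| \leq p$, so $S \subseteq X \cap A$ already separates $t_1$ from $t_2$ in $G[A]$, contradicting the existence of $P$.

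The main obstacle I expect is securing the size bound $|S| \leq p$ uniformly in $q$: the treewidth reduction lemma only governs minimal multiway cuts of size at most $p$, whereas $(X \cap A) \cup \{v\}$ may be much larger when $q > p$. This is precisely where the $2^p$ slack in $|Z|$ is spent. For each of the at most $2^{|T|} \leq 2^p$ possible \emph{trace patterns} that a minimal subcut $S$ can induce on $T$ (equivalently, a candidate side $T' \subseteq T$ for $t_1$), I discard from $Z$ an additional $p^{6p}$-sized batch of vertices via an important-separator enumeration in the spirit of Lemma~\ref{lem:tw_reduction_one}, applied to the pair $(T', T \setminus T')$. Pigeonhole still leaves a valid $v \in Z \setminus \cX$ surviving all exclusions, and for this $v$ the minimal subcut $S$ extracted above is forced to be a small important separator of the corresponding type and therefore to avoid $v$. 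The additional filtering stays within the $2^{\bigoh(p \log p)} n^{\bigoh(1)}$ runtime budget.
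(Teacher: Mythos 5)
There is a genuine gap, and it is exactly at the point you flagged. Your rerouting argument is fine as far as it goes, but the final step needs the minimal multiway cut $S\subseteq (X\cap A)\cup\{v\}$ to have size at most $p$, and nothing in your proposal forces this: the gfvs $X$ of $G-v$ can be arbitrarily large (the definition of $k$-equivalence quantifies over \emph{all} budgets), so $X\cap A$ and hence the minimal subcut containing $v$ can have size far exceeding $p$. Your proposed fix---an extra round of important-separator enumeration over the $2^{|T|}$ ``trace patterns'' $(T',T\setminus T')$---cannot repair this, because Proposition~\ref{prop:impsep} only enumerates separators of bounded size; the assertion that ``the minimal subcut $S$ is forced to be a small important separator'' is precisely what is missing and is false in general (a minimal $t_1$-$t_2$ separator through $v$ inside $G[A]$ can be huge). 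The paper closes this hole with a different, non-enumerative idea: take $Q$ to be a \emph{minimum} gfvs of $G-v$ and assume for contradiction that $G$ has no gfvs of size $|Q|$; then $|Q\cap A|<|A\cap B|$ must hold, since otherwise $(Q\setminus A)\cup(A\cap B)$ would already be a gfvs of $G$ of size at most $|Q|$ (any non-null cycle avoiding $A\cap B$ lies in the clean graph $G[A\setminus B]$ or in $G[B\setminus A]-Q$, and the latter is impossible). This swap argument is what yields the bound $|(Q\cap A)\cup\{v\}|\le |A\cap B|\le p$ and lets the treewidth reduction lemma apply; without it your contradiction never gets off the ground.

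A second, smaller discrepancy: the paper spends the $2^p$ factor differently from you, and for a reason. Since the gfvs $Q$ may contain vertices of the terminal set $A\cap B$ itself, and multiway cuts are by definition disjoint from the terminals, one cannot work with $T$-multiway cuts in $G[A]$ alone. The paper guesses $J=Q\cap(A\cap B)$, invokes Lemma~\ref{lem:tw_reduction_one} on $G[A]-J$ with terminal set $(A\cap B)\setminus J$ for every $J\subset A\cap B$, and takes $\cZ=\bigcup_J \cZ_J$; this is what the factor $2^p$ in the bound on $|Z|$ is for. Your single invocation with terminal set $T=A\cap B$ does not cover the situation where the hitting set deletes terminals, and your trace-pattern filtering does not substitute for it. So both places where you invest the $2^p$ slack need to be redirected: one copy of the union over $J\subset A\cap B$, plus the minimum-gfvs swap argument to cap the cut size at $p$.
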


\begin{proof} 	 
Let $X=A\cap B$ and $G'=G[A]$. We assume without loss of generality that the labels of the arcs of $G$ have been untangled around the vertices in $A$ and hence any arc with both endpoints in $A$ is labeled with the identity element $1_\Gamma$.
For each $J\subset X$, we apply Lemma \ref{lem:tw_reduction_one} on the graph $G'-J$ with the terminal set $T=X\setminus J$, the well-linked set $Z$ and $t=|X\setminus J|$ and compute a set $\cZ_J\subseteq Z$ such that any vertex of $Z\setminus \cZ_J$ is not part of a $(X\setminus J)$-multiway cut of size at most $|X\setminus J|$ in the graph $G'-J$. We define the set $\cZ=\bigcup_{J\subset X}\cZ_J$. It follows from Lemma \ref{lem:tw_reduction_one} that $|\cZ|\leq 2^{p}\cdot p^{6p}$. But we  know that $|Z|> 2^{p}\cdot p^{6p}$ by the premise of the lemma. Hence, the set $Z\setminus \cZ$ is non-empty. Let $v$ be an arbitrary vertex in this set. We now argue that $G-v$ is $k$-equivalent to $G$ and hence we can simply return $v$ as the required vertex.

%which is not part of a minimal $((A\cap B)\setminus J)$-multiway cut of size at most $|A\cap B|-|Z|$ for any $J\subseteq A\cap B$ in the graph $G[A]-J$.

   Since $G-v$ is a subgraph of $G$, for any $k\in {\mathbb N}$, if $G-v$ contains a $\frac{1}{2}$-integral $k$-packing then so does $G$. Now, let $Q$ be an arbitrary minimum gfvs of $G-v$. If $G$ also has a gfvs of size $|Q|$, then we are done. Suppose that this is not the case. In particular, $Q$ is not a gfvs of $G$.  This implies that there is a non-null cycle $C$ in $G-Q$ which intersects $v$. Furthermore, since $G[A]$ is clean and $v\in A\setminus B$, it must be the case that the cycle $C$ intersects $X$ in at least 2 vertices. Let $\alpha$ and $\beta$ be two vertices in $C\cap X$ such that the subpath of $C$ between them is internally disjoint from $X$ and  contains the vertex $v$. Let this subpath be $P$, let $J=Q\cap X$ and $G'=G[A]-J$. We now claim the following.

   \begin{claim}
   	There is no $\alpha$-$\beta$ path in $G'-v-(Q\cap A)$.
   \end{claim}
   
   \begin{proof}   Suppose that this is not the case and there is an $\alpha$-$\beta$ path $P'$ in $G'-v-(Q\cap A)$. But this implies that $P$ is an identity path in $G$ that is disjoint from $Q\cup \{v\}$. Hence, it is also present in $(G-v)-Q$. Also recall that the $\alpha$-$\beta$ path $P$ which is a subpath of $C$ is also an identity path since it is contained in the clean graph $G[A]$. We can now construct a closed walk $C'$ starting from $C$ by simply replacing the subpath $P$ with the subpath $P'$. Since both paths are identity paths, it follows that $C'$ is a non-null walk in the graph $(G-v)-Q$, which is a contradiction to our assumption that $Q$ is a gfvs for the graph $G-v$.
   This completes the proof of the claim.
   \end{proof}
   
     Let $\cP={P_1,\dots, P_r}$ denote the partition of $X\setminus J$ induced by the connected components of $G'-(Q\cap A)=(G[A]-J)-(Q\cap (A\setminus B))$.  Without loss of generality, we assume that $\alpha,\beta\in P_1$. Now, let $\cP'={P_1',\dots, P_r'}$ denote the partition of $X\setminus J$ induced by the connected components of $G'-v-(Q\cap A)=(G[A]-J)-v-(Q\cap (A\setminus B))$. Due to the claim above, we know that $\alpha$ and $\beta$ lie in distinct sets of this partition. But this implies that $v$ is part of a minimal $(\cP',X\setminus J)$-multiway cut which is a (not necessarily strict) subset of $(Q\cap A)\cup \{v\}$.
    %    , which is $(Q\cap A) \cup \{v\}$. 
%   which is not part of a minimal $((A\cap B)\setminus J)$-multiway cut of size at most $|A\cap B|-|Z|$ for any $J\subseteq A\cap B$ in the graph $G[A]-J$.

    We now observe that $|Q\cap A|< |X|$.  If this were not the case, then $(Q\setminus A)\cup X$ is also a gfvs of $G$, contradicting our assumption that $G$ has no gfvs of size at most $|Q|$.
     Hence, we conclude that $v$ is part of a minimal $(\cP',X\setminus J)$-multiway cut of size at most $|X\setminus J|$ in the graph $G'-J$, a contradiction to the fact that $v$ was not in the set $\cZ_J$ computed using the algorithm of Lemma \ref{lem:tw_reduction_one}. Hence, we conclude that $G-v$ is $k$-equivalent to $G$. Finally, the time required to compute $v$ is dominated by the time required for $2^{p}$ executions of the algorithm of Lemma \ref{lem:tw_reduction_one},  which implies the claimed bound on the running time. This completes the proof of the lemma.	
\end{proof}

\section{The main theorem}

The proof of Theorem \ref{thm:erdos-posa} is divided into 3 cases. In Subsection \ref{subsec:lowtw}, we handle the low treewidth case while Subsection \ref{subsec:clique} handles  graphs having high treewidth and containing a large order clique expansion. In Subsection \ref{subsec:flatwall}, we handle the last case where the graph under consideration has large treewidth but no large order clique expansion. Finally, we combine the main lemmas of each subsection to give the full proof of Theorem \ref{thm:erdos-posa}. We assume throughout the rest of the paper that in any group-labeled graph we are dealing with, every arc is part of a non-null cycle since otherwise these can be detected and removed in polynomial time \cite{Guillemot11a}. For ease of presentation, we will when it is clear from the context that the graph under consideration is the underlying undirected graph of a group labeled graph $(G,\Lambda)$, we will refer to this graph as $G$ instead of $\tilde G$.

\subsection{Low treewidth graphs}
\label{subsec:lowtw}
\begin{lemma}\label{lem:bounded_treewidth_case}
	Let $(G,\Lambda)$ be a $\Gamma$-labeled graph, where $tw(G)\leq w$. For all $k\in {\mathbb N}$, either $G$ has $k$ vertex disjoint non-null cycles or has a set $X\subseteq V(G)$ of size at most $(k-1)(w+1)$ such that $G-X$ has no non-null cycles. Furthermore, there is an algorithm that, given $(G,\Lambda)$, $k$ and a tree-decomposition of $G$ of width at most $w$, runs in polynomial time and either returns a $k$-packing of non-null cycles or a gfvs of size at most $(k-1)(w+1)$.
\end{lemma}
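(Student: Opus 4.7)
The plan is to prove this by induction on $k$, using a rooted tree decomposition of $G$ to locate a bag whose deletion simultaneously destroys a freshly-extracted non-null cycle and separates it from every remaining non-null cycle. Fix a tree decomposition $(T,\chi)$ of $G$ of width at most $w$, rooted at an arbitrary node; for each $b \in V(T)$ let $V_b$ be the union of $\chi(b')$ over all descendants $b'$ of $b$ (including $b$), and set $G_b := G[V_b]$.

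The base case $k = 1$ is trivial: if $G$ is clean return $X = \emptyset$, otherwise return any non-null cycle. For $k \geq 2$, if $G$ has no non-null cycle return $X = \emptyset$; otherwise pick a \emph{deepest} node $b$ of $T$ for which $G_b$ contains a non-null cycle, and let $C$ be such a cycle. The key structural claim, which I label $(*)$, is that every non-null cycle $C^{\star}$ of $G - \chi(b)$ lies inside $V(G) \setminus V_b$, and hence is vertex-disjoint from $V(C)$. To see $(*)$, observe first that a standard tree-decomposition argument shows $G$ has no edge between $V_b \setminus \chi(b)$ and $V(G) \setminus V_b$; hence the connected cycle $C^{\star}$, which avoids $\chi(b)$, is either contained in $V_b \setminus \chi(b)$ or in $V(G) \setminus V_b$. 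In the former case the subtree of $T$ formed by the bags meeting $V(C^{\star})$ is connected, misses $b$ (since $V(C^{\star}) \cap \chi(b) = \emptyset$), and sits inside the subtree rooted at $b$; it therefore lies entirely within the subtree rooted at a single child $b'$ of $b$, giving $V(C^{\star}) \subseteq V_{b'}$ and contradicting the choice of $b$ as deepest.

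With $(*)$ in hand the induction closes cleanly. Apply the inductive hypothesis to $G - \chi(b)$, whose treewidth is still at most $w$ (remove $\chi(b)$ from every bag), with parameter $k - 1$. If the recursion returns $k-1$ vertex-disjoint non-null cycles in $G - \chi(b)$, then $(*)$ places them inside $G - V_b$, making them vertex-disjoint from $C$, and together with $C$ they form a $k$-packing in $G$. Otherwise the recursion returns a gfvs $X'$ of $G - \chi(b)$ of size at most $(k-2)(w+1)$, and then $X := X' \cup \chi(b)$ is a gfvs of $G$ of size at most $(k-1)(w+1)$, since any non-null cycle of $G$ either uses a vertex of $\chi(b)$ or persists in $G - \chi(b)$ and is hit by $X'$.

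Algorithmically I would unfold this induction iteratively. Maintain a current graph $G'$ (initially $G$), a packing $\mathcal{C}$, and a hitting set $X$ (both initially empty). While $G'$ still contains a non-null cycle and $|\mathcal{C}| < k$, traverse the restricted tree decomposition of $G'$ in post-order and find the deepest node $b$ such that $G'_b$ contains a non-null cycle; the test at each node reduces, via the consistent-labeling lemma from the preliminaries, to checking whether $G'_b$ admits a consistent labeling, which runs in linear time. Extract a non-null cycle from $G'_b$ by standard means (e.g.\ following a closed walk witnessing the failure of consistency and short-cutting it to a simple cycle), append it to $\mathcal{C}$, add $\chi(b)$ to $X$, and replace $G'$ by $G' - \chi(b)$. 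By $(*)$ the cycles in $\mathcal{C}$ remain pairwise vertex-disjoint throughout; the loop runs at most $k$ times, and on termination either $|\mathcal{C}| = k$ or $G'$ is clean and $|X| \leq (k-1)(w+1)$. The main conceptual obstacle is really establishing $(*)$ correctly; once that is in hand, both the existential statement and its algorithmic counterpart follow from the clean induction described above.
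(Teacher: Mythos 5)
Your proof is correct and follows essentially the same route as the paper: pick a deepest bag $b$ whose descendant subgraph contains a non-null cycle, pull out that cycle, add $\chi(b)$ to the hitting set, and induct on $k$, giving the bound $(k-1)(w+1)$. The only (immaterial) difference is that you recurse on $G-\chi(b)$ and justify disjointness via your explicit claim $(*)$, whereas the paper recurses on $G-\alpha(b)$ so that disjointness of the packing is automatic and the analogous structural fact is used implicitly on the gfvs side; both versions are valid and your statement of $(*)$ actually makes the argument more transparent.
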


\begin{proof} The proof is by induction on $k$. In the base case, $k=1$ and the statement holds vacuously. We now consider the case when $k>1$. 
 
 Consider an arbitrary tree-decomposition of $G$, $(T,\chi)$ of width at  most $w$. For every $b\in V(T)$, let $\alpha(b)$ denote the vertices in the bags corresponding to vertices of $T$ which are descendants of $b$.
 Let $T$ be rooted arbitrarily and let $b\in V(T)$ be a node such that $G[\alpha(b)]$ has a non-null cycle and $G[\alpha(b')]$ has no non-null cycles for any $b'$ which is a child of $b$. Consider the graphs $G_1=G[\alpha(b)]$ and $G_2=G-\alpha(b)$. By the induction hypothesis, we know that the graph $G_2$ has $k-1$ vertex disjoint non-null cycles or a gfvs of size at most $(k-2)(w+1)$. In the former case, these $k-1$ vertex disjoint non-null cycles along with an arbitrary non-null cycle in $G_1$ implies the presence of a set of $k$ vertex disjoint non-null cycles in $G$. In the latter case, an arbitrary gfvs of $G_2$ along with the set $\chi(b)$  is a gfvs of $G$. Hence, this implies the existence of a gfvs of $G$ whose size is at most $(w+1) +(k-2) (w+1) =(k-1)(w+1)$. This completes the induction step. Clearly, this argument can be made algorithmic, completing the proof of the lemma.
% 
% 
% 
%% Observe that if $G_2$ has no non-null cycles, then it follows that $G$ has a gfvs of size at most $t+1$, which is the set $\chi(b)$. Since $k>1$, the statement holds. Hence, we may assume that $G_2  
%
%
%
\end{proof}

\subsection{Graphs containing a large order clique-expansion}
\label{subsec:clique}
%The main lemma we prove in this section is the following.

In this subsection, we describe our algorithm for the case when the graph under consideration has a clique-minor or clique-expansion whose size exceeds a specified bound depending on $k$. At a high level, we will show that either we can find the desired $\frac{1}{2}$-integral $k$-packing of non-null cycles or we can find a small set of vertices, say $X$ and a connected component of $G-X$, say $C$ which is clean (and so can be treated as an undirected graph) and  contains a large well-linked set. Finally, we show that we can either greedily add $X$ to the gfvs we are constructing or we can invoke Lemma \ref{lem:tw_reduction_one} on the graph $G[C\cup X]$ to find a vertex whose deletion from $G$ results in an equivalent instance.

\begin{lemma}\label{lem:fine_linkage1}
	There is a function $\rho(x)=2^{\Theta(x\log x)}$ such that 
	for every $k\in \mathbb N$ and $\Gamma$-labeled graph $G$, if $G$ has a $K_{\rho(k)}$-expansion $\eta$  and a separation $(A,B)$  such that \begin{enumerate}
%	\item $|A\cap B|\leq \ell$, 
	\item $G[A]$ is clean,  
	\item $1<|A\cap B|\leq 3k$ and
	\item $\bigcup \eta$ is contained in $A\setminus B$
 \end{enumerate}

\noindent
then there is a vertex $v$ in $A\setminus B$ such that 
%which is not part of a minimal $((A\cap B)\setminus J)$-multiway cut of size at most $|A\cap B|-|Z|$ for any $J\subseteq A\cap B$ in the graph $G[A]-J$. Furthermore, 
the graph $G-v$ is $k$-equivalent to the graph $G$ and given $G,k, \eta$ and the separation $(A,B)$, the vertex $v$ can be computed in time $2^{\bigoh(k\log k)}n^{\bigoh(1)}$.

\end{lemma}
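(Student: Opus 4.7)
The plan is to deduce the lemma directly from Lemma~\ref{lem:linkage2} by using the $K_{\rho(k)}$-expansion $\eta$ to manufacture a well-linked set of the required size inside $A\setminus B$. I will set $p = 3k$, which is an upper bound on $|A\cap B|$, and choose $\rho(k)$ to be any integer strictly larger than $2^{p}\cdot p^{6p} = 2^{3k}(3k)^{18k} = 2^{\Theta(k\log k)}$; this matches the claimed growth rate for $\rho$.

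From $\eta$ I extract the candidate set $Z = \{\gamma(v_i) : i \in [\rho(k)]\}$ consisting of the chosen center of each supernode. To verify that $Z$ is well-linked in $G[A]$, I need to exhibit an $X_1$-$X_2$ linkage inside $G[A]$ for any $X_1,X_2\subseteq Z$ with $|X_1|=|X_2|\leq |Z|/2$. Fix any bijection $\pi : X_1 \to X_2$; for each pair $(\gamma(v_i),\gamma(v_j)) = (x,\pi(x))$ with $i\neq j$, form a path that walks inside the tree $\eta(v_i)$ from $\gamma(v_i)$ to the endpoint of the edge $\eta(v_iv_j)$ sitting in $\eta(v_i)$, crosses this edge, and walks inside $\eta(v_j)$ to $\gamma(v_j)$; pairs with $i=j$ (i.e.\ common vertices in $X_1\cap X_2$) contribute paths of length zero. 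Because the supernodes $\eta(v_\ell)$ are pairwise vertex-disjoint by definition of an expansion, the paths assigned to different pairs share no vertices, so this family is indeed a linkage. Because $\bigcup \eta \subseteq A\setminus B$ by hypothesis, every path lies entirely within $G[A]$ and no path meets $A\cap B$. Hence $Z$ is well-linked in $G[A]$, is disjoint from $A\cap B$, and has size $\rho(k)>2^p\cdot p^{6p}$.

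All three hypotheses of Lemma~\ref{lem:linkage2} are now satisfied with this choice of $p$: $G[A]$ is clean by assumption; $1 < |A\cap B|\leq p$ by assumption and the choice $p=3k$; and $Z$ is the required well-linked set. Invoking Lemma~\ref{lem:linkage2} produces a vertex $v \in A \setminus B$ such that $G - v$ is $k$-equivalent to $G$, and it is computable in time $2^{O(p\log p)}n^{O(1)} = 2^{O(k\log k)}n^{O(1)}$, as required.

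I expect no serious obstacle; the proof is essentially a bookkeeping exercise once Lemma~\ref{lem:linkage2} is available. The only delicate point is to make sure that the linkages produced from the clique expansion stay inside $G[A]$ and avoid $A\cap B$, but this is immediate because both the branch sets and the expansion edges used to build the paths come from $\eta$, and $\bigcup \eta \subseteq A\setminus B$. Verifying that $\rho(k) = 2^{\Theta(k\log k)}$ dominates $2^{3k}(3k)^{18k}$ is routine, and the extraction of $Z$ from the given $\eta$ takes time linear in $|V(G)|$.
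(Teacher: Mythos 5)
Your proof is correct and follows essentially the paper's own argument: the paper likewise sets $\rho(k)=2^{3k}\cdot(3k)^{18k}+1$, observes that the expansion supplies a well-linked set of that size inside $A\setminus B$ (leaving the verification implicit), and invokes Lemma~\ref{lem:linkage2} with $p=3k$. The only slip is the phrase ``fix any bijection'': if $\pi$ does not fix $X_1\cap X_2$ pointwise, two of your paths may share a supernode (one path's target center can be another path's source center), so you should take trivial paths on $X_1\cap X_2$ and biject $X_1\setminus X_2$ with $X_2\setminus X_1$, after which the pairwise disjointness of the supernodes involved does give vertex-disjoint paths.
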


\begin{proof} 	We set $\rho(k)= 2^{3k}\cdot (3k)^{18k}+1$. Let $X=A\cap B$ and $G'=G[A]$. Since every supernode of $\bigcup \eta$ is contained in $A\setminus B$, it follows that $G[A]-X$ contains  a well-linked set $Z$ of size $\rho(k)$. That is, $|Z|> 2^{3k}\cdot (3k)^{18k}$.  We now invoke Lemma \ref{lem:linkage2} with input $G,k,Z$ and this separation $(A,B)$ to compute a vertex $v$ such that $G-v$ is $k$-equivalent to $G$.

Finally, the time required to compute $v$ is dominated by the time to execute the algorithm of Lemma \ref{lem:linkage2}, which implies the claimed bound on the running time. This completes the proof of the lemma.
	\end{proof}

We now prove the following lemma to demonstrate how one can get to a point such that the premises of the previous lemma are (almost) satisfied.

\begin{lemma}\label{lem:fine_linkage2}
%	There is a function $\rho(x)=2^{\Theta(x^2\log x)}$ 
%	such that 
	For every $k\in \mathbb N$, $\ell >6k^2$ and $\Gamma$-labeled graph $G$, if $G$ has a $K_{\ell}$-expansion $\eta^\star$ then it has a $\frac{1}{2}$-integral $k$-packing of non-null cycles or 
	 a separation $(A,B)$  such that \begin{enumerate}
%	\item $|A\cap B|\leq \ell$, 
	\item $G[A\setminus B]$ is clean,  
	\item $1<|A\cap B|\leq 3k$ and
	\item every supernode of $\eta^\star$ disjoint from $A\cap B$ is contained in $A\setminus B$.
 \end{enumerate}
 
 \noindent
 Furthermore, there is an algorithm that, given $G,k$ and $\eta^\star$, runs in polynomial time and returns either  a $\frac{1}{2}$-integral $k$-packing of non-null cycles or the separation $(A,B)$.

\end{lemma}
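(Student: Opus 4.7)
My plan is to apply Proposition~\ref{prop:gallai} to $(G,\Lambda)$ with the set of centers $S=\{\gamma(v_i):i\in[\ell]\}$ of the supernodes of $\eta^\star$ and parameter $k$, after first invoking Proposition~\ref{prop:untangling} on each supernode tree so that every intra-supernode arc carries the identity label $1_\Gamma$ (this preprocessing does not alter the set of non-null cycles). Proposition~\ref{prop:gallai} returns either $k$ pairwise vertex-disjoint non-null $S$-paths $P_1,\dots,P_k$ (Case~(a)) or a set $X\subseteq V(G)$ with $|X|\leq 2k-2$ hitting every non-null $S$-path in $G$ (Case~(b)).

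In Case~(a) the endpoints of the $P_i$'s lie in $2k$ distinct supernodes indexed $\{a_i,b_i\}_{i=1}^k$, leaving a ``fresh'' index set $F\subseteq[\ell]$ of size at least $\ell-2k>6k^2-2k$. I partition $F$ into $k$ private pools $F_1,\dots,F_k$, each of size $\Omega(k)$, and for each $i$ attempt to close $P_i$ into a non-null cycle $C_i=P_i\cup Q_i$ where $Q_i$ is routed inside the sub-expansion on $\{v_{a_i},v_{b_i}\}\cup F_i$. The key observation is that whenever this sub-expansion contains any non-null cycle, Observation~\ref{obs:non-null-cycle-block} applied to the 2-connected sub-expansion yields a non-null cycle through $\gamma(v_{a_i})$ and $\gamma(v_{b_i})$, and hence two $\gamma(v_{a_i})$--$\gamma(v_{b_i})$ paths in the sub-expansion with distinct labels; at least one of them, concatenated with $P_i$, produces a non-null $C_i$. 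Because the private pools $F_i$ and the endpoint pairs $\{v_{a_i},v_{b_i}\}$ are pairwise disjoint, every vertex of $G$ lies in at most one $Q_j$ and at most one $P_j$, and therefore in at most two of the $C_i$'s, which is the desired $\tfrac12$-integral $k$-packing. If some private sub-expansion turns out to be clean, I instead use it as the $A\setminus B$ side of the separation, placing the at most $2k$ interface vertices that represent the non-fresh endpoint supernodes into $A\cap B$.

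In Case~(b), Observation~\ref{obs:unique_block} (applicable since $|X|\leq 2k-2<\ell$) yields a unique block $\cB$ of $G-X$ that intersects every supernode disjoint from $X$. I first argue $\cB$ is clean: were it not, Observation~\ref{obs:non-null-cycle-block} would produce a non-null cycle through any two vertices of $\cB$ lying in two supernodes disjoint from $X$; this cycle, extended by identity paths inside those supernodes to the corresponding centers, would become a non-null $S$-path in $G-X$, contradicting the choice of $X$. I then examine the dangling subtrees $T_u$ of $\cB$ in the block-cut tree of $G-X$, one for each cut vertex $u$ of $G-X$ sitting inside $\cB$; these subtrees, together with the components of $G-X$ other than the one containing $\cB$, are pairwise vertex-disjoint from $\cB$ and (except at their roots) from one another. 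If at least $k$ vertex-disjoint non-null cycles can be extracted from these regions, they provide an integral $k$-packing. Otherwise at most $k-1$ dangling subtrees carry a non-null cycle; let $U$ denote the corresponding $\leq k-1$ cut vertices and absorb every clean $T_u$ (for $u\notin U$) into $\cB$ to obtain an enlarged subgraph $\cB^{*}$ which remains clean, because any simple cycle through a cut vertex stays inside one block. Setting $A\cap B=X\cup U$ and $A\setminus B=V(\cB^{*})\setminus U$ gives a separation with $|A\cap B|\leq(2k-2)+(k-1)<3k$ and $G[A\setminus B]$ clean as a subgraph of $\cB^{*}$; every supernode disjoint from $A\cap B$ is fully contained in $A\setminus B$, since a connected supernode straying into an unabsorbed $T_u$ would have to contain the cut vertex $u\in U\subseteq A\cap B$. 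Any non-null cycles still present in unabsorbed $T_u$'s or in the other components of $G-X$ lie entirely in $B\setminus A$ and do not threaten the cleanness of $A\setminus B$.

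The hardest part will be the degenerate sub-case of Case~(a), in which some private sub-expansion on $\{v_{a_i},v_{b_i}\}\cup F_i$ is unexpectedly clean so that the closing strategy fails for some $P_i$. Handling this requires using the clean sub-expansion itself as $A\setminus B$ and showing that only $O(k)$ interface vertices, chosen to meet each non-fresh endpoint supernode of the remaining $P_j$'s, must be absorbed into $A\cap B$ to preserve both the size bound $|A\cap B|\leq 3k$ and condition~(3) on the supernodes of $\eta^\star$.
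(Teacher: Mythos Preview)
Your Case (b) is essentially the paper's argument and is sound. The genuine gap is in your Case (a) fallback. When a private sub-expansion $\bigcup\eta_i$ on $\{v_{a_i},v_{b_i}\}\cup F_i$ turns out to be clean, you propose to take $A\setminus B=V(\bigcup\eta_i)$ with ``at most $2k$ interface vertices'' forming $A\cap B$. This cannot work, for two independent reasons. First, nothing bounds the boundary of $V(\bigcup\eta_i)$ in $G$: the supernode vertices may have arbitrarily many neighbours outside the expansion, so there is no separation of order $\leq 3k$ available in general. Second, condition (3) demands that \emph{every} supernode of $\eta^\star$ disjoint from $A\cap B$ lie in $A\setminus B$; but the $\ell-|F_i|-2=\Omega(k^2)$ supernodes indexed outside $\{a_i,b_i\}\cup F_i$ are by construction disjoint from your proposed $A\setminus B$, so each of them would have to meet $A\cap B$, far exceeding the budget $3k$. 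The ``$O(k)$ interface vertices meeting the non-fresh endpoint supernodes'' you describe address neither issue.

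The paper avoids the degenerate sub-case altogether by inserting a preliminary step \emph{before} invoking Proposition~\ref{prop:gallai}: partition $\eta^\star$ into $k$ vertex-disjoint $K_{\ell/k}$-sub-expansions; either each carries a non-null cycle (yielding an integral $k$-packing immediately) or some sub-expansion $\eta$ has $\bigcup\eta$ clean. One then untangles all of $\bigcup\eta$ to identity and applies Proposition~\ref{prop:gallai} to the centers of $\eta$, not of $\eta^\star$. In the resulting Case (a) the closing path for $P_i$ is simply the unique tree path in $\eta(v_{a_i})\cup\eta(v_{b_i})\cup\{\eta(v_{a_i}v_{b_i})\}$; since all of $\bigcup\eta$ now carries identity labels, this is an identity path, so the closed walk $P_i\cup Q_i$ is automatically non-null. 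There is no degenerate sub-case, no private pools, and no need to reason about $2$-connectivity of sub-expansions. With this preliminary step in place, your Case (b) goes through essentially verbatim (the paper's Case (b) argument, like yours, only uses that individual supernodes are clean and that Observation~\ref{obs:unique_block} applies).
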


%
%\begin{proof}
%We define the function $\rho$ as $\rho(x)=$ for every $x\in {\mathbb N}$. 
%
%
%
%
%	
%\end{proof}

\begin{proof}
 We  divide $\eta^\star$ into $k$ vertex-disjoint $K_{\ell'}$-expansions where $\ell'=\ell/k$. We may assume without loss of generality that $\ell$ is divisible by $k$. If the graph induced on the vertex set of each individual $K_{\ell'}$-expansion contains a non-null cycle then we are done since there is an \emph{integral} $k$-packing of non-null cycles. Hence, we may assume the presence of a $K_{\ell'}$-expansion $\eta$ such that the graph $\bigcup \eta$ is clean. Going forward, we assume without loss of generality that $G$ has been untangled around every vertex in $\bigcup \eta$ and hence every arc in $\bigcup \eta$ is labeled with the identity element. Let $\gamma(v_1),\dots, \gamma(v_{\ell'})$ denote arbitrarily chosen centers for the supernodes of $\eta$. Let $S=\{\gamma(v_1),\dots, \gamma(v_{\ell'})\}$. We invoke Proposition \ref{prop:gallai} to either conclude that there is a set of $k$ vertex-disjoint non-null $S$-paths in $G$ or there is a set  of at most $2k$ vertices which intersects every non-null $S$-path. We now consider each case separately.
	
	In the first case, let $P_1,\dots, P_k$ denote the $k$ vertex-disjoint non-null $S$-paths and without loss of generality, let $x_i=\gamma(v_{2i-1})$ and $y_i=\gamma(v_{2i})$ be the endpoints of the path $P_i$ for each $i\in [k]$. Then, we define a set of closed walks $W_1,\dots, W_k$ with $W_i$ defined as the closed walk obtained by taking the sum of $P_i$ and the unique $x_i$-$y_i$ path contained within the tree induced by $\eta(v_{2i-1})\cup \eta(v_{2i})$. Since each $x_i$-$y_i$ path contained within $\eta$ is an identity path, it follows that the closed walks $W_1,\dots, W_k$ are all non-null closed walks. Furthermore, it is easy to see that every vertex of $G$ appears in at most 2 of these closed walks. This is because these walks were constructed by simply taking the sum of two linkages. Let $C_1,\dots, C_k$ denote non-null cycles in $G$ such that $C_i$ is contained in $W_i$. Since each $W_i$ is non-null, such cycles exist, implying the presence of a $\frac{1}{2}$-integral $k$-packing of non-null cycles. It remains to address the second case. That is, there is a set $X$  of at most $2k$ vertices which intersects every non-null $S$-path in $G$.

	We now use Observation \ref{obs:unique_block} to infer that there is a unique block $\cB$ in $G-X$ which intersects each $\eta(v_i)$ disjoint from $X$. We now observe that this block is clean. If this is not the case, then there are distinct indices $i,j\in [\ell]$ and vertices $u\in \eta(v_i)$ and $v\in \eta(v_j)$ such that $u,v\in \cB$, $\eta(v_i)$ and $\eta(v_j)$ are disjoint from $X$ and there is a non-null cycle $C$ in $G-X$ that contains both $u$ and $v$ (Observation \ref{obs:non-null-cycle-block}). Now, let $Q_u$ denote the unique path in $\eta(v_i)$ from $\gamma(v_i)$ to $u$. Similarly, let $Q_v$ denote the unique path in $\eta(v_j)$ from $\gamma(v_j)$ to $v$. By definition, $Q_u$ and $Q_v$ are vertex disjoint. Let $\alpha$ denote the \emph{first} vertex of the cycle $C$ encountered when traversing  $Q_u$ from $\gamma(v_i)$ to $u$. Similarly, let $\beta$ denote the \emph{first} vertex of the cycle $C$ encountered when traversing  $Q_v$ from $\gamma(v_j)$ to $v$. Since $C$ is non-null, it contains a pair of $\alpha$-$\beta$ paths $J_1$ and $J_2$ such that $\Lambda(J_1)\neq \Lambda(J_2)$. Hence, we infer the presence of  a pair of $\gamma(v_i)$-$\gamma(v_j)$ paths with differing values in $G-X$. This implies that $G-X$ contains a non-null $S$-path, a contradiction. Hence, we conclude that the block $\cB$ is clean.
	
	Let $z_1,\dots, z_r$ denote the cut-vertices of $G-X$ which are in the block $\cB$. Let $G_i$ denote the graph induced on $z_i$ and those vertices which have a path to $z_i$ without intersecting any other vertex of $\cB$. Observe that the graphs $G_1,\dots, G_r$ are vertex-disjoint. 
	If there are at least $k$ cut-vertices say $z_1,\dots, z_k$ such that  the graphs $G_1,\dots, G_k$ are not clean, then we are done since we have $k$ vertex disjoint non-null cycles. Otherwise, there are at most $k-1$ vertices $z_1,\dots, z_q$ (where $q\leq k-1$) such that the corresponding graphs $G_1,\dots, G_q$ are not clean. We now define $X'=X\cup \{z_1,\dots, z_q\}$. Since $q\leq k-1$, it follows that $|X'|\leq 3k$.
	
	Since $\ell>6k^2$, it follows that $\ell'>6k$ and hence we may conclude that there is an $i\in [\ell']$ such that $|X'|$ is disjoint from $\eta(v_i)$. Let $C$ denote the component of $G-X'$ which contains $\eta(v_i)$. We now argue that $G[C]$ is clean. If this were not the case, then there is a non-null cycle in $G[C]$. Furthermore, recall that the block $\cB$ in $G-X$ contains at least one vertex from every supernode of $\eta$. Hence, this non-null cycle is present in the connected component of $G-X$ which contains the block $\cB$. However, we have already concluded that $\cB$ is clean. Hence, this non-null cycle is contained in the graph $G_s$ for some $s\in [q]$. But by the definition of $X'$, we know that the vertex $z_s$ is contained in $X'$, implying that there is no path in $G-X'$ from $\eta(v_i)$ to this particular non-null cycle, a contradiction. Hence we conclude that $G[C]$ is clean.

	We now define the separation $(A,B)$ as $A=C\cup X'$ and $B=X'\cup (V(G)\setminus C)$. Since $G[A\setminus B]=G[C]$ is clean by definition, the first property of the separation is satisfied.
	Furthermore, observe that $A\cap B=X'$ which has size at most $3k$. Also, if $|A\cap B|=1$, then it follows that there is an arc with both endpoints in $A\setminus B$ which is \emph{not} part of a non-null cycle in $G$, which is a contradiction to our assumption that in every graph we consider, each arc is part of a non-null cycle.
	This implies that $|A\cap B|>1$ and hence the second property of the separation is satisfied. 
	 It only remains to argue that the final property is satisfied as well. That is, every supernode in $\eta^\star$ disjoint from $X'$ is contained in $A\setminus B$. But observe that every supernode in $\eta$ which is disjoint from $X'$ is indeed contained in $A\setminus B$ because at least one supernode of $\eta$ is contained entirely in $A\setminus B$. Since any supernode of $\eta$ is also a supernode of $\eta^\star$, the required property is satisfied. Finally, the time required to compute this separation is dominated by the time required to execute the algorithm of Proposition \ref{prop:gallai}, which is a polynomial time algorithm. 
	This completes the proof of the lemma.
\end{proof}

Observe that the only significant difference between the separation in the premise of Lemma \ref{lem:fine_linkage1} and the separation returned by the algorithm of Lemma \ref{lem:fine_linkage2} is that in the former, the graph $G[A]$ is clean and in the latter the graph $G[A\setminus B]$ is clean. Hence, we must still handle the case when 
 the algorithm of Lemma \ref{lem:fine_linkage2} returns a separation where $G[A]$ is not clean. However, in this case, we can simply delete the set $A\cap B$ and recurse on the graph $G[B\setminus A]$ with the parameter $k-1$. The correctness of doing so is formally described in the proof of Theorem \ref{thm:erdos-posa}.
%  then we can invoke Lemma \ref{lem:fine_linkage1}. However, if this does not happen then..........
This completes the subsection dealing with graphs containing a large order clique-expansion.

\subsection{Graphs excluding a large clique-expansion}\label{subsec:flatwall}
%The main lemma of this subsection is the following.

%\begin{lemma}\label{lem:main_wall_lemma}
%	Let $(G,\Lambda)$ be a $\Gamma$-labeled graph and $k,\ell\in \mathbb N$. Suppose that $G$ has a set $Y\subseteq V(G)$ of size at most $\ell$ and a flat wall $W$ of size $..$ in $G-Y$. 
%	Then, \begin{itemize} \item either $G$ has a $\frac{1}{2}$-integral $k$-packing or \item there exists a $\Gamma$-labeled graph $(G',\Lambda')$ such that $V(G')\subset V(G)$ and $G$ is $k$-equivalent to $G'$ \item or $G$ has a separation $(A,B)$ of order at most $6k+2\ell$ such that $G[A\setminus B]$ is clean but $G[A]$ is not.
% 	
% \end{itemize}
%  
%\end{lemma}

In this subsection, we describe our algorithm for the case when the graph under consideration has large treewidth but does not have a clique-expansion whose size exceeds $\rho(k)$. In this case, we will use the recently improved bounds for the Flat Wall Theorem proved by Chuzhoy \cite{Chuzhoy16} to infer the existence of a small `apex set' whose deletion leaves a graph with a large enough flat wall. Working with this flat wall, we will show that either we can find the desired $\frac{1}{2}$-integral $k$-packing of non-null cycles or we can again find a small set of vertices, say $X$ and a connected component of $G-X$, say $C$ which is clean and  contains a large well-linked set. Finally, we will be able to argue that we can either greedily add $X$ to the gfvs we are constructing or we can invoke Lemma \ref{lem:tw_reduction_one} on the graph $G[C\cup X]$ to find a vertex whose deletion from $G$ results in an equivalent graph. We begin by recalling basic definitions relating to walls in graphs with high treewidth.

\paragraph{Walls.} An \emph {elementary wall} of height $h$ is the graph depicted in Figure \ref{fig:wall} (see also \cite{Chuzhoy16}).
The rows are numbered $R_1,\dots, R_{h+1}$ and the columns are numbered $C_1,\dots, C_{h+1}$. The {\em bricks} of the wall are the 6 cycles created by $R_i,R_{i+1},C_{j},C_{j+1}$ for $1\leq i,j<h$ (see Figure \ref{fig:wall}). Each elementary wall of height $h$ consists of $h$ levels, each containing $h$ bricks. A {\em wall} of height $h$ is obtained from an elementary wall of height $h$ by subdividing some of the edges. We are now ready to state and prove a lemma that is analogous to Lemma \ref{lem:fine_linkage1}. Recall that  this lemma guarantees the existence of a function $\rho:{\mathbb N}\to {\mathbb N}$ which satisfies certain properties. We will use this function in the premise of the next lemma as well as in that of subsequent lemmas.

\begin{lemma}\label{lem:wall_linkage1}
	There is a function $\pi(x)=2^{2^{\Theta(x\log x)}}$ such that 
	for every $k\in \mathbb N$ and $\Gamma$-labeled graph $G$, if $G$ has a $W_{\pi(k)}$-wall $\cW$ and a separation $(A,B)$  such that \begin{enumerate}
%	\item $|A\cap B|\leq \ell$, 
	\item $G[A]$ is clean,  
	\item $1<|A\cap B|\leq 3k+\rho(k)$ and
	\item $\cW$ is contained in $A\setminus B$
 \end{enumerate}

then there is a vertex $v$ in $A\setminus B$ such that 
the graph $G-v$ is $k$-equivalent to the graph $G$ and given $G,k, \cW$ and the separation $(A,B)$, the vertex $v$ can be computed in time $2^{2^{\Theta(k\log k)}}n^{\bigoh(1)}$.

\end{lemma}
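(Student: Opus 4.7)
The plan is to mirror the proof of Lemma \ref{lem:fine_linkage1}, with the wall $\cW$ playing the role of the clique-expansion as a source of a large well-linked set lying in $A\setminus B$. Writing $p = 3k+\rho(k)$, the hypotheses already give us that $G[A]$ is clean and $1 < |A\cap B| \leq p$, so Lemma \ref{lem:linkage2} will apply directly as soon as we exhibit inside $G[A]$ a well-linked set $Z$, disjoint from $A\cap B$, of size strictly greater than $2^{p}\cdot p^{6p}$.

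To produce $Z$, I would first fix $\pi(k)$ large enough that any wall of height $\pi(k)$ contains a well-linked set of the required size. It is a standard fact (essentially folklore, provable using the horizontal and vertical path systems of the wall together with Menger's theorem) that a wall of height $h$ contains $\Theta(h)$ branch vertices forming a well-linked set of size $\Theta(h)$: any two equal-sized subsets of size at most a constant fraction of $h$ can be joined by vertex-disjoint paths that remain inside the wall. Substituting the bound $\rho(k)=2^{\Theta(k\log k)}$ yields $p=2^{\Theta(k\log k)}$, and therefore $2^{p}\cdot p^{6p} = 2^{2^{\Theta(k\log k)}}$. Hence choosing $\pi(k) = 2^{2^{\Theta(k\log k)}}$ is enough.

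Since the premise forces $\cW \subseteq A\setminus B$, the set $Z$ extracted from $\cW$ automatically lies in $A\setminus B$ and thus is disjoint from $A\cap B$. I would then invoke Lemma \ref{lem:linkage2} on $G$ with parameter $p$, the well-linked set $Z$, and the given separation $(A,B)$, obtaining a vertex $v \in A\setminus B$ such that $G-v$ is $k$-equivalent to $G$. Extracting $Z$ from $\cW$ is polynomial-time, so the running time is dominated by the single call to Lemma \ref{lem:linkage2}, which is $2^{\bigoh(p\log p)}n^{\bigoh(1)} = 2^{2^{\Theta(k\log k)}}n^{\bigoh(1)}$, matching the claim.

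The main obstacle I expect is pinning down, cleanly, the well-linkedness of the wall in the strict $|Z|/2$-linked sense used in the paper: one must specify which branch vertices of $\cW$ make up $Z$ (avoiding subdivision vertices in the general, non-elementary case), and verify that the internally-disjoint paths certifying each linkage can be found entirely inside $\cW$ so that they remain in $A\setminus B$. This is routine from the row/column path structure of a wall, but is the one place where the standard wall machinery must be invoked carefully before feeding $Z$ into Lemma \ref{lem:linkage2}.
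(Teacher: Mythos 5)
Your proposal matches the paper's proof essentially verbatim: the paper also sets $r=3k+\rho(k)$, chooses $\pi(k)$ just large enough that the wall $\cW\subseteq A\setminus B$ yields a well-linked set $Z$ of size $>2^{r}\cdot r^{6r}$ disjoint from $A\cap B$, and then makes a single call to Lemma~\ref{lem:linkage2} to obtain the vertex $v$, with the running time coming from that call. The one point you flag as delicate (extracting a well-linked set of size $\Theta(h)$ from a wall of height $h$) is exactly the standard fact the paper uses implicitly, so there is no gap.
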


\begin{figure}[t]
\begin{center}

  \includegraphics{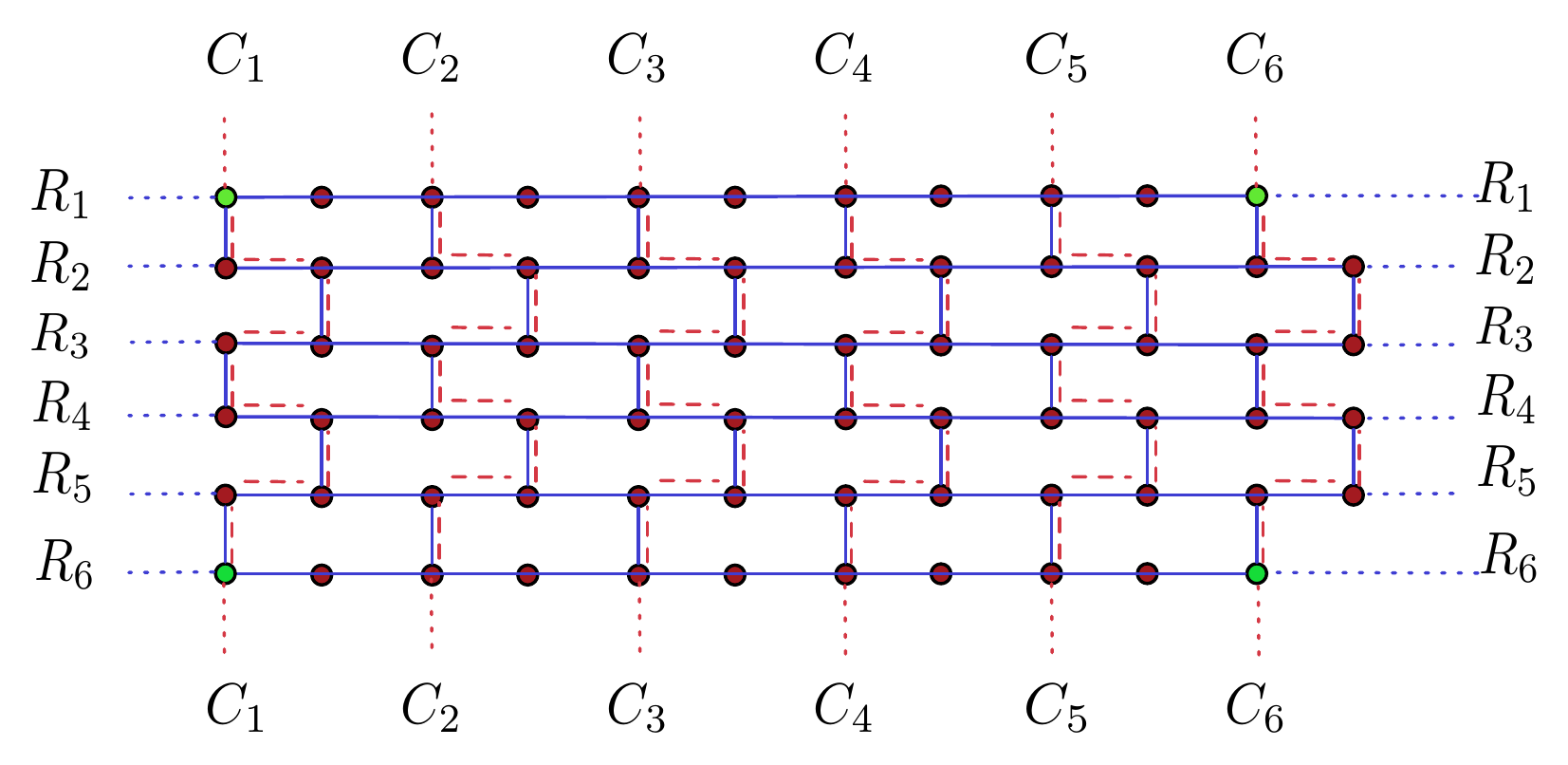}
  \end{center}
  \caption{An illustration of an elementary wall of height 5.}
  \label{fig:wall}
\end{figure}

\begin{proof}
Let $r=3k+\rho(k)$ and $\pi(k)=2(2^r \cdot r^{6r})+1)$. This implies the presence of a   well-linked set $Z$ of size $>2^r\cdot r^{6r}$ which is contained in $A\setminus B$. We then invoke Lemma \ref{lem:linkage2} on the graph $G[A]$ with the terminal set $T=A\cap B$, $p=|A\cap B|\leq r$ and the well-linked set $Z$ to compute a vertex $v\in Z$ such that $G-v$ is $k$-equivalent to $G$. The claimed running time follows from that of Lemma \ref{lem:linkage2} and the definition of $r$ and the function $\rho$. This completes the proof of the lemma.	
\end{proof}

What remains is to prove a lemma similar to Lemma \ref{lem:fine_linkage2}.  That is, a lemma which guarantees  a separation that (almost) satisfies the conditions in the premise of the above lemma.  Before we do so, we recall the requisite definitions regarding \emph{flat walls}.

\paragraph{Flat walls.} 
% The {\em perimeter} of the elementary wall of height $h$ are the vertices on the rows $R_1,R_{h+1}$ and the columns $C_1,C_{h+1}$. 
The {\em perimeter} of a wall are the vertices which appear on the subdivided paths corresponding to the edges in rows $R_1,R_{h+1}$ and the columns $C_1,C_{h+1}$. We refer to the vertices in the perimeter of the wall $\cW$ as $\per(\cW)$. The {\em corners} of a wall are the 4 vertices $R_1\cap C_1$, $R_1\cap C_{h+1}$, $R_{h+1}\cap C_1$ and $R_{h+1}\cap C_{h+1}$.
The \emph{nails} of a wall are the vertices of degree 3 within it and its corners. The \emph{pegs} of a wall are the vertices of degree 2 on the outer boundary.
%The \emph{perimeter} of a wall $W$, denoted $\per(W)$ are the vertices on  the unique cycle bounding a face in this embedding which contains more than 6 nails. 
For any wall $\cW$ in a graph $G$, there is a unique component $U$ of $G-\per(\cW)$ containing $W-\per(\cW)$. The \emph{compass} of $\cW$, denoted by $\comp(\cW)$, consists of the graph with vertex set $V(U)\cup \per(\cW)$ and edge set $E(U)\cup E(\per(\cW))\cup \{xy|x\in V(U), y\in V(\per(\cW))\}$. A {\em subwall} of a wall $\cW$ is a wall which is a subgraph of $\cW$. A subwall of $\cW$ of height $\ell$ is  proper if it consists of $\ell$ consecutive bricks  from each of $\ell$ consecutive levels of $\cW$. The {\em exterior} of a subwall $\cW'$ is $\cW-\cW'$. A proper subwall is {\em dividing} if its compass is vertex-disjoint from its exterior. 
%We say a {\em proper} subwall $W'$ of $W$ is dividing in a subgraph $H$ if $W'\subseteq H$ and the compass of $W'$ in $H$ is disjoint from $(W-W')\cap H$. 
We say that two dividing subwalls $\cW_1$ and $\cW_2$ of a wall $\cW$ are \emph{fully vertex-disjoint} if $\comp(\cW_1)$ is vertex-disjoint from $\comp(\cW_2)$. We now come to the definition of a \emph{flat wall}.

\begin{definition}{\sc \cite{Reed99}}
A wall $\cW$ is {\bf flat} if the graph $\comp(\cW)$ does not contain two vertex-disjoint paths connecting the diagonally opposite corners.
\end{definition}

\begin{proposition}
Any proper subwall of a flat wall is both flat and dividing. Furthermore, if $\cW_1$ and $\cW_2$ are proper subwalls of a flat wall $\cW$ and their perimeters are disjoint, then so are their compasses.	
\end{proposition}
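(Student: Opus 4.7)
The plan is to prove all three assertions by contradiction with flatness of $\cW$, in each case exploiting the grid-like routing flexibility of walls to produce two vertex-disjoint paths between diagonally opposite corners of $\cW$ inside $\comp(\cW)$. The common feature is that, because $\per(\cW')\subseteq V(\cW')$ for every proper subwall $\cW'$, any failure of dividedness or flatness supplies exactly the extra routing needed.

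First I would tackle the dividing property. Suppose toward a contradiction that a proper subwall $\cW'$ is not dividing, so some vertex $u\in V(\cW)\setminus V(\cW')$ lies in $\comp(\cW')$. Since $\per(\cW')\subseteq V(\cW')$ is disjoint from $V(\cW)\setminus V(\cW')$, the vertex $u$ must in fact lie in the component $U'$ of $G-\per(\cW')$ that contains $\cW'-\per(\cW')$. This yields a path $P$ in $G-\per(\cW')$ from the interior of $\cW'$ to a vertex of $\cW$ outside $\cW'$, i.e., a ``shortcut'' around the perimeter of $\cW'$. The rows and columns of $\cW$ that lie outside $\per(\cW')$ still form a subgrid with many disjoint routes, and combining $P$ with suitable horizontal/vertical paths in $\cW$ I would route two vertex-disjoint paths between diagonally opposite corners of $\cW$ inside $\comp(\cW)$, contradicting flatness.

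For the flat property, suppose $\cW'$ is a proper (hence, by the previous step, dividing) subwall of the flat wall $\cW$ which itself is not flat, so that $\comp(\cW')$ contains two vertex-disjoint paths $Q_1,Q_2$ between diagonally opposite corners of $\cW'$. I would extend each $Q_i$ through $\cW-\cW'$ to reach the corresponding diagonally opposite corner of $\cW$, using the preserved rows and columns of $\cW$ outside $\cW'$: since $\cW'$ occupies only a block of consecutive bricks inside $\cW$, the exterior $\cW-\cW'$ still contains a subgrid with four pairwise disjoint ``corridors'' to the corners of $\cW$. The dividing property ensures that these extensions stay disjoint from $\comp(\cW')$, so the assembled paths are vertex-disjoint in $\comp(\cW)$, again contradicting flatness. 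For the final sentence, if $\cW_1,\cW_2$ are proper subwalls with $\per(\cW_1)\cap\per(\cW_2)=\emptyset$ and some vertex $v$ lies in $\comp(\cW_1)\cap\comp(\cW_2)$, the dividing property just proved forces every wall vertex of $\comp(\cW_i)$ to belong to $V(\cW_i)$; a short case analysis on the relative positions of $\cW_1$ and $\cW_2$ in $\cW$ (disjoint blocks of bricks, or one nested inside the other) either shows the compasses were already disjoint or produces a path in $G-(\per(\cW_1)\cup\per(\cW_2))$ from $V(\cW_1)\setminus V(\cW_2)$ to $V(\cW_2)\setminus V(\cW_1)$, which violates dividedness for at least one of $\cW_1,\cW_2$.

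The main obstacle will be the careful bookkeeping in the routing arguments: when $\cW'$ is close to the boundary of $\cW$, one must check that the extension paths between the corners of $\cW'$ and the corresponding corners of $\cW$ can be taken inside $\cW-\cW'$ and pairwise disjoint, even when $\per(\cW')$ shares vertices with $\per(\cW)$; similarly, the case analysis for disjoint perimeters in the last claim requires matching the combinatorics of brick positions with the dividing property. Once these routing lemmas are in place, each of the three implications reduces to producing two disjoint diagonally-opposite paths in $\comp(\cW)$ and invoking the flatness hypothesis for a contradiction.
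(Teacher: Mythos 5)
The paper itself offers no proof of this proposition: it is stated as a known property of flat walls imported from the literature (Reed \cite{Reed99}), so there is no in-paper argument to compare yours against, and your plan must stand on its own. Your overall strategy -- contradict flatness of $\cW$ by exhibiting two vertex-disjoint paths between diagonally opposite corners of $\cW$ inside $\comp(\cW)$ -- is the standard and correct one, but two of the steps you defer are not mere bookkeeping. Flatness of $\cW$ constrains only $\comp(\cW)$, whereas the objects you extract live in all of $G$: the shortcut $P$ is a path in $G-\per(\cW')$, and the crossing paths witnessing non-flatness of $\cW'$ live in $\comp(\cW')$, which is likewise defined with respect to $G$. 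You need to show that a suitable subpath of $P$ stays inside $\comp(\cW)$ (truncate $P$ at its first vertex of $\per(\cW)\cup(V(\cW)\setminus V(\cW'))$, using that $V(\cW')\cap\per(\cW)\subseteq\per(\cW')$ and that the interior of $\cW'$ lies in the inner component of $G-\per(\cW)$; by re-choosing endpoints you may further assume the shortcut has no internal wall vertices, which is what frees your corner-to-corner separating path to use $\per(\cW')$), and for the flatness claim you need $\comp(\cW')\subseteq\comp(\cW)$, which follows from dividedness but is nowhere argued. Without these steps the announced contradiction with flatness of $\cW$ is simply not available, and the routing itself (which diagonal to use, which corner is paired with each end of the shortcut, subwalls meeting $\per(\cW)$) is promised rather than proved -- it is the entire content of the statement.

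More seriously, the nested case of the final sentence cannot be resolved the way you claim. If $\cW_2$ is a proper subwall nested strictly inside $\cW_1$, the perimeters are disjoint yet the compasses necessarily intersect: every interior vertex of $\cW_2$ avoids $\per(\cW_1)$ and lies in the inner component of $G-\per(\cW_1)$, hence in $\comp(\cW_1)\cap\comp(\cW_2)$. So in that case there is no path violating dividedness to be found, and no argument can show the compasses are disjoint -- the literal statement fails for nested subwalls, and it is only ever applied in the paper to subwalls occupying disjoint sets of bricks (the fully vertex-disjoint subwalls in Lemma~\ref{lem:wall_linkage2}). You should restrict the second sentence to vertex-disjoint proper subwalls; there your argument is correct and nearly immediate from dividedness: a common compass vertex $v$ lies in the inner components of both $G-\per(\cW_1)$ and $G-\per(\cW_2)$, and a path inside the latter from $v$ to the interior of $\cW_2$ ends outside $\comp(\cW_1)$ (by dividedness of $\cW_1$), hence meets $\per(\cW_1)\subseteq \cW-\cW_2$, contradicting dividedness of $\cW_2$.
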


\begin{definition}
	A wall $\cW$ in a $\Gamma$-labeled graph $G$ is called a {\bf null wall} if the subgraph $\comp(\cW)$ has no non-null cycles and a {\bf non-null wall} otherwise.
\end{definition}

Before we go ahead to the statement of Lemma \ref{lem:wall_linkage2}, we state the following results of Chuzhoy \cite{Chuzhoy15,Chuzhoy16} which will be used in the proof of this lemma and Theorem \ref{thm:erdos-posa}.

\begin{proposition}\label{prop:flat wall} {\sc [Flat Wall Theorem \cite{Chuzhoy15}]}
For all integers $\ell, t > 1$, every graph $G$ containing a wall of size $\Theta(t(t+\ell))$, must contain either
\begin{itemize}
\item a $K_t$-expansion or
\item a subset $A\subset V(G)$ of at most $t-5$ vertices and a flat wall of height $\ell$ in $G-A$.	
\end{itemize}

\noindent
Furthermore, there is an algorithm that runs in time polynomial in $|V(G)|,t,\ell$ and outputs either a $K_t$-expansion or a  subset $A\subset V(G)$ of at most $t-5$ vertices and a flat wall of height $\ell$ in $G-A$.

\end{proposition}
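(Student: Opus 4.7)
The plan is to start from a wall $\cW$ of size $\Theta(t(t+\ell))$ inside $G$ and attempt either to route $\binom{t}{2}$ internally-disjoint paths realising a $K_t$-expansion on $t$ branch sets read off from the wall, or to localise the obstruction to routing in a small apex set $A$ with $|A|\leq t-5$. The wall naturally provides a supply of vertex-disjoint row and column paths, which can be cut into many internally-disjoint segments that serve as candidate branch sets. The target dichotomy is: either many of these segments can be pairwise connected by internally-disjoint paths through $\comp(\cW)$ (yielding the $K_t$-minor), or such connections are blocked by a small separator whose removal then leaves a flat portion of the wall.

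The central object to exploit is a \emph{cross} on a subwall $\cW'$: a pair of vertex-disjoint paths in $\comp(\cW')$ joining diagonally opposite corner pairs in a crossing pattern. By the very definition of flatness, $\cW'$ is flat exactly when no such cross exists on its corners. When a cross is present, it can be combined with the horizontal rows of $\cW$ to yield one additional pair of crossing connections among the candidate branch sets. The iterative scheme is therefore: scan the subwalls of $\cW$; whenever a cross is found, use it to augment the partial $K_t$-expansion; if after fewer than $t-5$ successful augmentations we already have a $K_t$-minor, return it; otherwise Menger-type arguments produce, for each failed iteration, a single apex vertex whose removal destroys all remaining crosses in the corresponding region. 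Collecting these at most $t-5$ apex vertices into a set $A$ isolates a flat subwall of height $\ell$ in $G-A$.

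The algorithmic version converts each cross-detection step into a two-disjoint-paths call on $\comp(\cW')$ with prescribed endpoints, together with a max-flow computation to extract the separator when no cross exists. The main obstacle, and the core of Chuzhoy's contribution, is the accounting: bounding the wall size sufficient to guarantee that after peeling at most $t-5$ apex vertices together with the rows and columns they damage, a dividing subwall of height exactly $\ell$ survives, with only the polynomial blow-up $\Theta(t(t+\ell))$ rather than the tower-type bounds of the original Robertson--Seymour argument. I expect this step to be the hardest, and would follow Chuzhoy by routing within the wall using low-congestion expander/cut-matching machinery to keep the wall-size requirement polynomial; the preceding dichotomy, by contrast, is essentially classical.
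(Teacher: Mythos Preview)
The paper does not prove this proposition at all: it is stated as a black-box citation to Chuzhoy~\cite{Chuzhoy15} and used without proof. There is therefore nothing in the paper to compare your argument against; the authors simply invoke the Flat Wall Theorem as an external tool.

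As to your sketch itself relative to the literature proof it purports to outline: the high-level dichotomy (many crosses yield a $K_t$-minor, few crosses yield a small apex set plus a flat subwall) is indeed the classical Robertson--Seymour picture, and you correctly identify that Chuzhoy's contribution is the polynomial accounting. However, your mechanism for producing apex vertices is not quite right. The apex vertices are not obtained one per ``failed'' cross-finding iteration via Menger; rather, they are the vertices that are highly linked to the wall (each can be connected to many disjoint bricks by vertex-disjoint paths). One first removes all such highly-linked vertices --- there are at most $t-5$ of them, else they themselves form the branch sets of a $K_t$-expansion routed through the wall --- and only then argues that in the remaining graph any surviving crosses can be used to build a $K_t$-minor, so that if no $K_t$-minor exists some large subwall must be cross-free and hence flat. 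Your proposal conflates these two stages. This is a genuine structural gap in the sketch, though for the purposes of the present paper it is moot since the result is quoted, not reproved.
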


\begin{proposition}{\sc \cite{Chuzhoy16}}
	There is a function $f:{\mathbb N}\to {\mathbb N}$ such that $f(g)=\bigoh(g^{19}\log^{\bigoh(1)}g)$ and every graph of treewidth at least $f(g)$ has a $g \times g$-grid minor and a wall of size $\Omega(g)$.
\end{proposition}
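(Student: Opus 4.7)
The plan is to establish the implication ``large treewidth implies a large grid minor'' (from which the wall of size $\Omega(g)$ follows, since any $g\times g$ grid minor contains a wall of height $\Omega(g)$ as a topological minor). As a first step I would pass from treewidth to a more combinatorial certificate: by the Seymour--Thomas bramble/treewidth duality, any graph of treewidth $w$ admits a bramble of order $\Omega(w)$, and from a bramble of order $w$ one can extract a well-linked set $Z$ of size $\Omega(w)$ which is ``concentrated'' in a single tangle of the graph.

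Given such a well-linked set $Z$, the next step would be to construct a \emph{crossbar}: two families $\cP$ and $\cQ$ of internally disjoint paths, each of size roughly $g$, such that every path of $\cP$ meets every path of $\cQ$. The well-linkedness of $Z$ guarantees enough routing flexibility to produce $\cP$ and $\cQ$ individually via Menger-type arguments inside the tangle. A $g \times g$ wall is then extracted by rerouting the crossings so that each pair $(P,Q)\in \cP\times \cQ$ meets at a distinct vertex, yielding after subdivision a wall of height $\Omega(g)$.

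The main obstacle, and the reason Chuzhoy's bound is genuinely deep, is achieving the polynomial dependence $f(g) = \bigoh(g^{19}\log^{\bigoh(1)}g)$. A direct implementation of the sketch above, along the original Robertson--Seymour lines, only gives an exponential bound in $g$: each recursive step loses a multiplicative factor that one cannot afford, because ``cleaning'' a crossbar to extract a wall destroys most of the routing capacity. Chuzhoy's improvement requires a scale-by-scale amplification where, starting from a well-linked set, one uses a \emph{well-linked decomposition} together with a cut-matching game (realized through near-expander routing) to either recognize a wall of the target size or produce a well-linked set at the next scale while losing only a polynomial (rather than exponential) factor. The hard part of my plan, and where I would expect the bulk of the technical work to lie, is controlling these polynomial losses at each scale and absorbing the polylogarithmic overhead coming from the expander-routing subroutines, so that the composition of all scale transitions still fits inside an $\bigoh(g^{19}\log^{\bigoh(1)}g)$ budget.
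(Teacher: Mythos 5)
There is a genuine gap here, and it is worth being precise about what kind of statement this is. In the paper this proposition is not proved at all: it is imported as a black box from Chuzhoy's work on the Excluded Grid Theorem with polynomial bounds (the line of work of Chekuri--Chuzhoy and Chuzhoy culminating, for the bound quoted here, in $f(g)=\bigoh(g^{19}\log^{\bigoh(1)}g)$). The only ``proof'' the paper offers is the citation, plus the standard observation that a $g\times g$ grid minor yields a wall of height $\Omega(g)$ (which is immediate because the wall has maximum degree $3$, so minor containment upgrades to topological minor containment). Your proposal, by contrast, presents itself as a proof plan but is really a roadmap of Chuzhoy's techniques: treewidth-to-well-linked-set via bramble/tangle duality, crossbar construction, well-linked decompositions, the cut-matching game and expander routing, and a scale-by-scale amplification. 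None of these steps is carried out, and you explicitly defer the entire quantitative core --- bounding the loss at each scale polynomially and composing the scales so that the final exponent is $19$ with only polylogarithmic overhead. That quantitative control \emph{is} the theorem; without it the sketch establishes nothing beyond the (long-known, but exponentially worse) Robertson--Seymour excluded grid theorem.

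Concretely, the step that would fail if pushed as written is the claim that well-linkedness of $Z$ ``guarantees enough routing flexibility'' to build two crossing path families of size roughly $g$ and then reroute them into a wall: this is exactly the place where a naive Menger-based argument loses a super-polynomial factor, and repairing it requires the full machinery of well-linked decompositions, degree reduction, and the recursive ``grid or improved well-linked set'' dichotomy, each with carefully tracked polynomial losses. A from-scratch proof of this proposition is a substantial research paper in its own right; for the purposes of this paper the correct move is simply to cite \cite{Chuzhoy16} (together with the trivial grid-to-wall conversion), not to attempt to reprove it.
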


We now proceed to prove an analogue of Lemma \ref{lem:fine_linkage2}.

\begin{lemma}\label{lem:wall_linkage2}
	There is a function $\sigma(x)=2^{2^{\Theta(x\log x)}}$ 
	such that 
	for every $k\in \mathbb N$, and $\Gamma$-labeled graph $G$, if $G$ excludes a $K_{\rho(k)}$-expansion and has a wall of height $\Theta(\sigma(k)(\sigma(k)+\rho(k)))$,   then it has a $\frac{1}{2}$-integral $k$-packing of non-null cycles or 
	 a separation $(A,B)$  such that \begin{enumerate}
%	\item $|A\cap B|\leq \ell$, 
	\item $G[A\setminus B]$ is clean,  
	\item $1<|A\cap B|\leq \rho(k)+3k$ and
	\item there is a $W_{\pi(k)}$-wall $\cW^\star$ which is contained in $A\setminus B$ (where $\pi$ is the function from Lemma \ref{lem:wall_linkage1}).
 \end{enumerate}
 
 \noindent
 Furthermore, there is an algorithm that, given $G,k$, runs in time $2^{2^{\Theta(k \log k)}}n^{\bigoh(1)}$ and returns either  a $\frac{1}{2}$-integral $k$-packing of non-null cycles or the separation $(A,B)$.

\end{lemma}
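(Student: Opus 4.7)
The plan is to mimic the structure of Lemma~\ref{lem:fine_linkage2}, using a flat wall in place of the clique expansion. First, apply the Flat Wall Theorem (Proposition~\ref{prop:flat wall}) with $t = \rho(k)$: since $G$ excludes a $K_{\rho(k)}$-expansion and has a wall of height $\Theta(\sigma(k)(\sigma(k) + \rho(k)))$, we obtain an apex set $A^\star$ with $|A^\star| \le \rho(k) - 5$ and a flat wall $\cW$ in $G - A^\star$ of height $\Theta(\sigma(k))$. Split $\cW$ into $k$ fully vertex-disjoint proper subwalls $\cW_1, \ldots, \cW_k$ of height roughly $\sigma(k)/k$. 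If every $\comp(\cW_i)$ contains a non-null cycle we already have an integral $k$-packing, so assume some $\cW^{null} := \cW_i$ is a null wall, i.e. $\comp(\cW^{null})$ is clean. Using Proposition~\ref{prop:untangling}, untangle $\Lambda$ so that every arc inside $\comp(\cW^{null})$ carries $1_{\Gamma}$.

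Inside $\cW^{null}$, choose $m = 3k + \rho(k) + 1$ fully vertex-disjoint proper subwalls $\cW'_1, \ldots, \cW'_m$, each of height $h_2 = \pi(k) + \Theta(k + \rho(k))$, and designate one corner $s_j$ of each to form $S = \{s_1, \ldots, s_m\}$. Apply Proposition~\ref{prop:gallai} to $S$ with parameter $k$. In the first case it returns $k$ vertex-disjoint non-null $S$-paths $P_1, \ldots, P_k$; each $P_i$ has endpoints $s_{a_i}, s_{b_i}$ in two distinct subwalls $\cW'_j$, and since the nails of a wall of height $h_2$ form a $k$-linked set in the clean compass $\comp(\cW^{null})$, we can route $k$ internally vertex-disjoint identity paths $Q_1, \ldots, Q_k$ inside $\comp(\cW^{null})$ pairing up the $P_i$-endpoints. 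Each $W_i = P_i + Q_i$ is a non-null closed walk, every vertex of $G$ lies in at most one $P_i$ and at most one $Q_i$ (so in at most two $W_i$), and extracting a non-null cycle from each $W_i$ yields a $\tfrac{1}{2}$-integral $k$-packing.

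In the second case, Proposition~\ref{prop:gallai} returns a set $X$ with $|X| \le 2k - 2$ hitting every non-null $S$-path. Set $X^\star = X \cup A^\star$, so $|X^\star| \le 2k + \rho(k)$; since $m > |X^\star|$ at least two subwalls $\cW'_j$ survive intact in $G - X^\star$. By an analogue of Observation~\ref{obs:unique_block} applied along the remaining wall backbone, there is a unique block $\cB$ of $G - X^\star$ intersecting every surviving $\cW'_j$. This block must be clean: otherwise Observation~\ref{obs:non-null-cycle-block} provides, for two surviving nails $s_i, s_j \in \cB$, a non-null cycle through $s_i$ and $s_j$, from which we extract two $s_i$-$s_j$ paths with distinct $\Lambda$-values and hence a non-null $S$-path in $G - X^\star \subseteq G - X$, contradicting the choice of $X$. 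Exactly as in Lemma~\ref{lem:fine_linkage2}, consider the cut-vertices $z_1, \ldots, z_q$ of $G - X^\star$ lying in $\cB$ whose pendant subgraphs contain non-null cycles; if $q \ge k$ we obtain $k$ vertex-disjoint non-null cycles, otherwise $q \le k - 1$ and we set $X' = X^\star \cup \{z_1, \ldots, z_q\}$, with $|X'| \le 3k + \rho(k)$. The component $C$ of $G - X'$ containing $\cB$ is clean, and at least one surviving $\cW'_j$ lies entirely inside $C$ as a $W_{\pi(k)}$-wall (this is precisely why $h_2$ was inflated by $\Theta(k + \rho(k))$, to absorb the additional removals). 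Defining $A = C \cup X'$ and $B = X' \cup (V(G) \setminus C)$ yields a separation meeting all three required conditions.

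The principal obstacle is the construction of the internally disjoint identity linkage $Q_1, \ldots, Q_k$ inside $\comp(\cW^{null})$ with prescribed endpoints on designated corners of the subwalls $\cW'_j$: this requires invoking the standard fact that nails of a subwall of height $\Theta(k)$ form a $k$-linked set in the wall's compass, and accounts for the $\Theta(k + \rho(k))$ slack built into $h_2$. Balancing $\sigma(k) = \Theta(k \cdot m \cdot h_2) = \pi(k)^{O(1)}$ against $\pi(k) = 2^{2^{\Theta(k \log k)}}$ and $\rho(k) = 2^{\Theta(k \log k)}$ yields $\sigma(k) = 2^{2^{\Theta(k \log k)}}$, and the running time is dominated by the Flat Wall algorithm and the Proposition~\ref{prop:gallai} invocation, giving $2^{2^{\Theta(k \log k)}} n^{O(1)}$.
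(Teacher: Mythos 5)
Your overall architecture mirrors the paper's (Flat Wall Theorem to get an apex set and a flat wall, splitting into $k$ subwalls to find a null wall, a Gallai-type step via Proposition~\ref{prop:gallai}, and then the block/cut-vertex analysis leading to the separation), and your case~2 is essentially sound — in fact, because you run Proposition~\ref{prop:gallai} on the whole graph rather than on a graph with a wall compass deleted, you do not need the paper's extra ``add back the compass'' claim. But this choice is exactly what creates a genuine gap in your case~1. When the Gallai step returns the non-null $S$-paths $P_1,\dots,P_k$, you close each $P_i$ into a walk using paths $Q_1,\dots,Q_k$ inside $\comp(\cW^{null})$ that must be \emph{pairwise vertex-disjoint and realize a prescribed pairing} of the terminals (each $Q_i$ must join the two specific endpoints of $P_i$). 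You justify this by ``the standard fact that nails of a wall of height $\Theta(k)$ form a $k$-linked set in the wall's compass.'' That is not a standard fact, and as stated it is false: a wall is planar, and for nails on its perimeter with an interleaved pairing no disjoint linkage exists (two paths joining crossing pairs on a common face must share a vertex); moreover the compass of a flat wall is not itself guaranteed planar, so even the correct interior-terminal statement needs an argument routed inside the wall subgraph. What you actually need is a quantitative routing claim for one designated corner per spread-out subwall (private rows/columns or tracks), and you flag this yourself as ``the principal obstacle'' but never prove it. So the central step of case~1 is unproven, and the cited justification does not hold.

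Note that the paper sidesteps this entirely: it deletes $\comp(H_k)\setminus T$ \emph{before} invoking Proposition~\ref{prop:gallai}, so the returned paths avoid the null wall, and each path is then closed canonically through its own row $R_{1+i}$ and the two columns below its endpoints; these completions are not disjoint, only half-integral, so no linkage theorem is needed — the price is precisely the Claim that the relevant block stays clean once $\comp(H_k)$ is added back. Your route is repairable either by proving the explicit routing lemma for your terminal placement, or by adopting the paper's deletion trick (in which case your case~2 would again need the transfer-of-cleanness claim). Two smaller points: you never argue $|A\cap B|>1$ (the paper gets this from the standing assumption that every arc lies on a non-null cycle), and ``the component $C$ of $G-X'$ containing $\cB$'' should be defined as the component containing a surviving subwall, since deleting the cut vertices $z_1,\dots,z_q$ may split what remains of $\cB$.
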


\begin{figure}
\centering
\begin{minipage}{.5\textwidth}
  \centering
  \includegraphics[width=.9\linewidth]{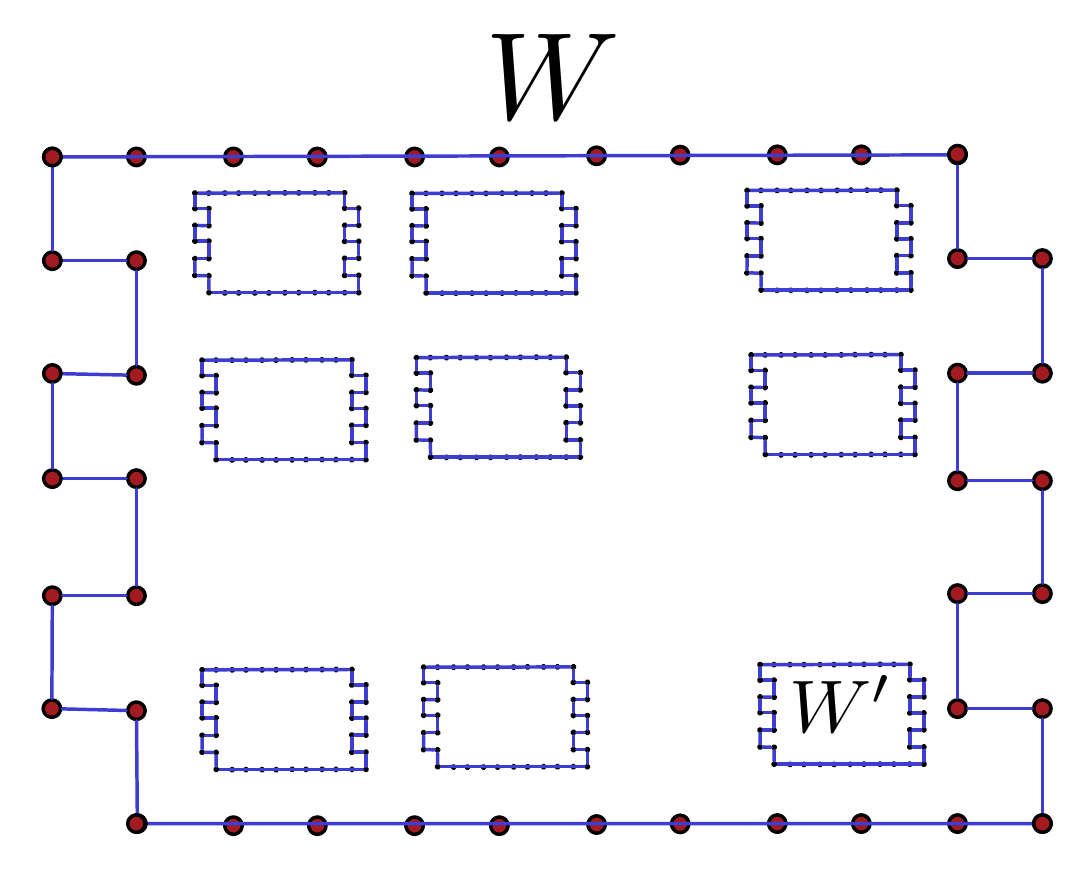}
%  \caption{Illustration of the flat wall $W$ and the flat, dividing and null subwall $W'$}
\end{minipage}%
%\hspace{5 pt}
\begin{minipage}{.5\textwidth}
  \centering
%    \vspace{45 pt}
  \includegraphics[width=.9\linewidth]{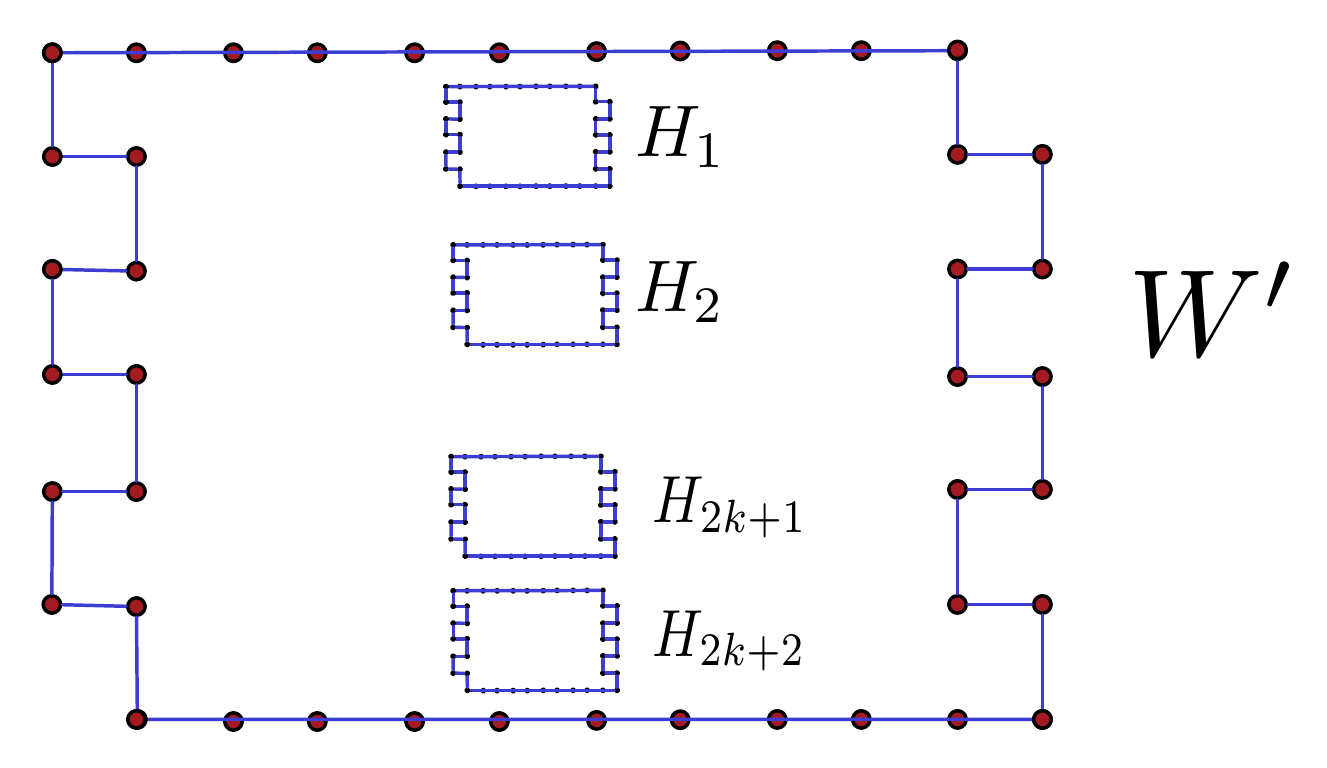}

%  \caption{An illustration of the proper subwalls $H_1,\dots, H_{2k+2}$ in the null wall $W'$ }
\end{minipage}
\caption{Illustrations of the flat wall $\cW$, the flat, dividing and null subwall $\cW'$ and the fully disjoint subwalls $H_1,\dots, H_{2k+2}$ contained in $\cW'$. }
  \label{fig:manywalls}
    \label{fig:littlewalls}

\end{figure}

%\begin{figure}[t]
%\begin{center}
%	
%
%  \includegraphics{littlewalls}
%  \end{center}
%  \caption{An illustration of the flat wall $W$ and the flat, dividing and null subwall $W'$}
%  \label{fig:littlewalls}
%\end{figure}

\begin{proof} We set $\sigma(k)=16k^2(\pi(k)+2k)$. By the Flat Wall Theorem (Proposition \ref{prop:flat wall}), we infer that $G$ contains a vertex set $\bA$ of size at most $\rho(k)$ such that $G-\bA$ has a flat wall $\cW$ of height $\sigma(k)$.

We now divide $\cW$ into $k$ fully vertex-disjoint flat and dividing proper subwalls  (see Figure \ref{fig:littlewalls}) each of height $8k(\pi(k)+2k)$. If each of these proper subwalls are non-null, then we are done since we have an \emph{integral} $k$-packing of non-null cycles. Hence, we may conclude that there is a null wall $\cW'$ of height $8k(\pi(k)+2k)$. Recall that this implies that the graph $\comp(\cW')$ is clean. Hence, we may assume without loss of generality that any arc of $G$ with both endpoints in the vertex set of $\comp(\cW')$ is labeled with the identity element $1_\Gamma$.

We now divide $\cW'$ into $4k$ pairwise fully vertex-disjoint flat and dividing proper subwalls $H_1,\dots, H_{4k}$ each of height $s=\pi(k)+2k$ such that the first and last columns of each of these walls are `subcolumns' of the same column of $\cW'$ (see Figure \ref{fig:manywalls}). 
 We now consider the wall $H_k$ and let $R_1,\dots ,R_{s+1}$ and $C_1,\dots, C_{s+1}$ denote the rows and columns that make up this wall. 
   We now pick $6k$ pegs on the \emph{bottom row} of $H_k$ which we call Row $R_1$, avoiding the corners. For each peg, we fix a unique vertex adjacent to it. We refer to these selected vertices as $p_1,\dots, p_{6k}$ and without loss of generality, let $p_i$ denote the vertex in the intersection of $R_1$ and $C_{1+i}$. Let $T=\{p_1,\dots, p_{6k}\}$.
   
   We now execute the algorithm of Proposition \ref{prop:gallai} in the graph $G'=G-\bA - (V(\comp(H_k))\setminus T)$ with the terminal set $T$. That is, we look for non-null $T$-paths in the graph obtained from $G$ by removing the apex set $\bA$ \emph{and} the interior of the wall $H_k$ and the vertices on the perimeter of $H_k$ \emph{except} those vertices in $T$. We do this so that if we find the required set of non-null paths, then we can use the interior of the wall $H_k$ to complete these into a set of $k$ non-null cycles.

  Suppose that the algorithm of Proposition \ref{prop:gallai} returned a set of $k$-vertex-disjoint non-null $T$-paths, $P_1,\dots, P_k$ and let $p_{i_1},p_{i_2}$ denote the vertices of $T$ on the endpoints of path $P_i$. 
  We now define the cycles $C_1,\dots, C_k$ where $C_i$ is defined as the cycle obtained by taking the sum of the path $P_i$, row $R_{1+i}$ in the wall $H_1$ and the two columns $C_{i_1+1}$ and $C_{i_2+1}$. Since $H_k$ is a null-wall and $P_i$ is a non-null path, it follows that each $C_i$ defined in this way is a non-null cycle. Furthermore, any vertex of $G'$ appears in at most one of these cycles since the paths $P_1,\dots, P_k$ are vertex-disjoint. On the other hand, any vertex in $V(G)-V(G')$ can be easily seen to appear in at most 2 of these non-null cycles, implying a $\frac{1}{2}$-integral $k$-packing of non-null cycles.

         \begin{figure}[t]
\begin{center}

  \includegraphics{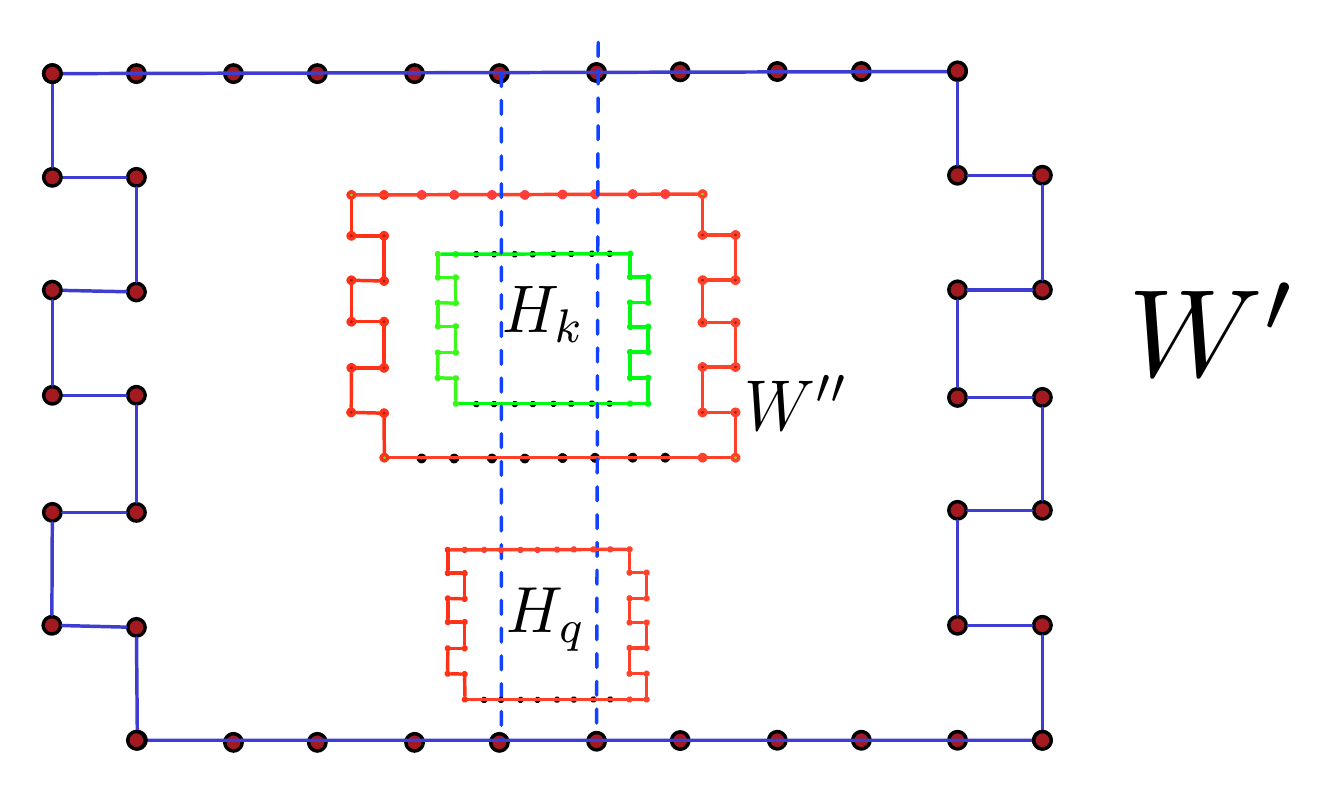}
  \end{center}
  \caption{An illustration of the walls $W''$ and $H_q$. The two columns of $W'$ in the figure intersect the subwall $H_k$ at the vertices $p_i$ and $p_j$.}
  \label{fig:doublewall}
\end{figure}
  
     Otherwise, suppose that the algorithm of Proposition \ref{prop:gallai} returns a set $X$ of size at most $2k$ which intersects all non-null $T$-paths in $G'$. Since $|X|\leq 2k$, it follows that out of the $6k$ vertices in $T$, there are $4k$ vertices which are not in $X$ and furthermore, among these $4k$ vertices there are two vertices $p_i$ and $p_j$ such that $X$ is disjoint from the columns of $\cW'$ which contain the columns $C_{1+i}$ and $C_{1+j}$ of $H_k$. Similarly, since $|X|\leq 2k$, it follows that there is a proper subwall of $\cW'$, call it $\cW''$ such that $H_k$ itself is a proper subwall of $\cW''$ and $X$ is disjoint from $\per(\cW'')$. That is, $\cW''$ is `nested' between $H_k$ and $\cW'$ and its perimeter is disjoint from $X$. Similarly, there is a wall $H_q$ such that $q\in \{k+2,\dots, 4k\}$ and $X$ is disjoint from $V(\comp(H_q))$. Finally, observe that we can select $\cW''$ and $H_q$ in such a way that $\cW''$ and $H_q$ are fully vertex-disjoint (see Figure \ref{fig:doublewall}).
     
     Observe that since $\per(H_q)$ is disjoint from $X$, it follows that there is a block of $G'-X$ that contains this set of vertices.
  We now argue that this block is clean. Suppose not. Then, there is a non-null cycle in $G'-X$ containing any pair of vertices in $\per(H_q)$ (Observation \ref{obs:non-null-cycle-block}). But $G'-X$ also has 2 vertex-disjoint null paths from $p_i$ and $p_j$ to the set $\per(H_q)$ (simply use the columns of $\cW'$ which contain the columns $C_{1+i}$ and $C_{1+j}$ in $H_k$) and hence to this non-null cycle as well. But this implies the presence of a non-null $T$-path in $G'$ disjoint from $X$, a contradiction. Hence, we conclude that the block of $G'-X$ containing the vertices of $\per(H_q)$, call it $\cB$, is clean.
  We now argue that adding back the vertices of $\comp(H_k)$ that we originally removed will not create any new non-null $\per(H_q)$-paths.
  
  \begin{claim} The block of $G''=G-(X\cup \bA)$ that contains $\per(H_q)$, call it $\cB'$, is also clean. 
  	
  \end{claim}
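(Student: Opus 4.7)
The plan proceeds by contradiction. Assume $\cB'$ contains a non-null cycle. Since $\per(H_q)\subseteq \cB'$, Observation \ref{obs:non-null-cycle-block} produces a non-null cycle $C\subseteq \cB'$ passing through the two specific vertices $x$ and $y$ of $\per(H_q)$ that are the endpoints of the null paths $Q_i$ and $Q_j$. Writing $C=P_1\cup P_2$ as two internally vertex-disjoint $x$-$y$ paths in $\cB'\subseteq G''$ with $\Lambda(P_1)\neq\Lambda(P_2)$, I would concatenate them with $Q_i$ and $Q_j$ to form walks $W_r:=Q_i\cdot P_r\cdot Q_j^{-1}$ from $p_i$ to $p_j$ in $G''$. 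Since $Q_i,Q_j$ are null, $\Lambda(W_r)=\Lambda(P_r)$, so at least one $W_r$, call it $W$, is a non-null $p_i$-$p_j$ walk in $G''$.

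The easy case is that $P_1$ (and hence $W$) avoids $V(\comp(H_k))\setminus T$ entirely: then $W$ is a non-null walk in $G'-X$ between the $T$-vertices $p_i$ and $p_j$. Using the cleanness of $\cB$ (which kills non-null cycles accessible from $p_i$ or $p_j$ in the relevant block), $W$ reduces to a non-null simple $p_i$-$p_j$ path in $G'-X$, contradicting the defining property of $X$.

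The harder case is when $P_1$ uses vertices of $V(\comp(H_k))\setminus T$. Then I decompose $P_1$ as an alternating sequence of maximal internal segments (inside $\comp(H_k)$) and external segments (outside $\comp(H_k)$, with endpoints in $\per(H_k)\cup\{x,y\}$). Because $\comp(H_k)\subseteq\comp(\cW')$ is clean and its arcs have been untangled to identity, each internal segment carries label $1_\Gamma$, so $\Lambda(P_1)$ is the product of labels of the external segments. The plan is then to substitute each internal segment by an identity walk connecting the same entry/exit points but lying in $\cW'\setminus\comp(H_k)\subseteq G'-X$; such a substitute exists because this outer portion of $\cW'$ is clean (and hence has identity labels along the remaining rows and columns), stays in $G'-X$ by the construction of $X$ (which is disjoint from the two columns through $p_i$ and $p_j$), and provides identity-labeled connectivity between appropriate vertices of $\per(H_k)$. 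After substitution, $W$ is converted to a non-null $p_i$-$p_j$ walk in $G'-X$, which again reduces to a non-null $T$-path, yielding the desired contradiction.

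The main obstacle is executing this substitution step, since the entry and exit points $\alpha,\beta\in\per(H_k)$ of an internal segment need not lie in $T$ and therefore are not present in $G'-X$. Overcoming this requires carefully exploiting the structure of $\cW'$ — specifically that $H_k$ is a proper subwall nested inside the larger subwall $\cW''\subseteq \cW'$ whose perimeter is disjoint from $X$, and that $H_q$ is chosen further inside $\cW'$ with $V(\comp(H_q))$ likewise disjoint from $X$ — so that the boundary vertices of each internal segment can be reached from $T$-vertices via identity-labeled wall paths in $G'-X$ that are glued onto the adjacent external segments without altering the product of external labels.
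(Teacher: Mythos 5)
Your overall strategy (route a hypothetical non-null structure in $\cB'$ back to a non-null $T$-path in $G'-X$ via the null columns through $p_i,p_j$, after replacing the portions that use the deleted interior of $H_k$ by identity detours in the wall) is the right shape, but the proof has a genuine gap exactly where you flag it, and the hand-waved fix does not work. The entry/exit vertices of your ``internal segments'' lie on $\per(H_k)\setminus T$, and these vertices are simply \emph{absent} from $G'-X$ (recall $G'=G-\bA-(V(\comp(H_k))\setminus T)$), so there is no identity walk in $G'-X$ ``connecting the same entry/exit points'', nor can these points be ``reached from $T$-vertices in $G'-X$'' -- they are not vertices of that graph. Gluing replacement wall paths ``onto the adjacent external segments'' is likewise not well defined: the external segments end at exactly those deleted perimeter vertices, so after deleting the internal segments you have no vertices left at which to attach the detours, and you have not specified why the resulting object is a walk at all, let alone why its label is preserved.

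The missing idea is to perform the decomposition one level further out, at $\per(\cW'')$ rather than at $\per(H_k)$. Since $H_k$ is a proper (hence dividing) subwall of $\cW''$, any path in $G''$ between vertices of $\per(H_q)$ that touches $\comp(H_k)$ must cross $\per(\cW'')$ in at least two vertices; decompose it into maximal segments $J_1,\dots,J_s$ with endpoints on $\per(\cW'')$ and interior in $\comp(\cW'')\setminus\per(\cW'')$. These endpoints \emph{do} survive in $G'-X$: $\per(\cW'')$ is disjoint from $\comp(H_k)$ (so nothing on it was deleted when forming $G'$) and was chosen disjoint from $X$. Moreover $\cW''$ is a null wall, so each $J_x$ is an identity segment and can be replaced by an identity path along $\per(\cW'')$ itself, yielding a non-null walk between two vertices of $\per(H_q)$ entirely inside $G'-X$ (indeed inside the block $\cB$), which contradicts the cleanness of $\cB$ -- equivalently, combined with the null columns from $p_i,p_j$, it gives a non-null $T$-path in $G'-X$, contradicting the choice of $X$. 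Your endgame (contradicting the defining property of $X$ rather than the cleanness of $\cB$) is a harmless variation, but without the $\per(\cW'')$-level replacement the central substitution step of your argument cannot be carried out.
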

  
  \begin{proof}
Suppose not. Then, there is a non-null path between every pair of vertices in $\per(H_q)$ in the graph $G''$. Clearly, any such non-null path must intersect the vertices in $\comp(H_k)$. However, since $H_k$ is a proper subwall of $\cW''$ which is fully vertex-disjoint from $H_q$, it follows that any such non-null path intersects $\per(\cW'')$ in at least 2 vertices. Let $P$ be such a path and let $J_1,\dots, J_s$ be the subpaths of $P$ whose endpoints are on $\per(\cW'')$ and whose internal vertices are in $\comp(\cW'')\setminus \per(\cW'')$. Since $\cW''$ is a null-wall, it follows that these subpaths are all identity paths. We now construct a non-null walk $P'$ in $G'-X$ as follows. For every path $J_x$, we replace it with the identity path between the endpoints of $J_x$ which is contained in the graph induced on $\per(\cW')$.

     Clearly $P'$ is now a non-null walk between a pair of vertices in $\per(H_q)$ in the graph $G'-X$. However, this alone does not give us a contradiction. For that, we will show that $P'$ is in fact contained entirely in the block $\cB$, which will then contradict our earlier observation that $\cB$ is clean. 
     It is straightforward to see that all vertices of $\per(\cW'')$ are contained in the block $\cB$ in $G'-X$. Hence, we only need to argue about the vertices of $P'$ which are not in $\comp(\cW'')$. That is, those vertices of $P$ which are also present in $P'$. But these vertices lie on a path in $G'-X$ between a vertex of $\per(H_q)$ and a vertex of $\per(\cW'')$ or between a pair of vertices of $\per(\cW'')$ or between a pair of vertices of $\per(H_q)$. In each case, it follows that these vertices are present in the block $\cB$ in the graph $G'-X$, implying the presence of a non-null path in the clean block, a contradiction.
      Hence, we conclude that the block $\cB'$ in the graph $G''$ is clean. This completes the proof of the claim.
     \end{proof}

     Consider the graph $G''$ and the cut-vertices of the block $\cB'$. If there are $k$ distinct cut-vertices which either appear in a non-null cycle or separate the vertices in $\cB$ from a non-null cycle, then we are done since we have $k$ vertex-disjoint non-null cycles in $G''$ and hence in $G$. Therefore, we may assume that there are at most $k-1$ such cut-vertices. We add these vertices to the set $X\cup \bA$ to obtain the set $X'$. Clearly, the set $X'$ has size at most $\rho(k)+3k$ in total and  is disjoint from at least one row and column of the wall $H_q$.
      Let $C$ denote the component of $G-X'$ which contains this row and column. We claim that the graph induced on $C$ is clean. The argument is similar to that used in Lemma \ref{lem:fine_linkage2} and hence we do not repeat it.

     We now define the separation $(A,B)$ as follows. The set $A=C\cup X'$, $B=V(G)\setminus C$. We now argue that this separation satisfies the required properties. Since $A\setminus B=C$, it follows that $G[A\setminus B]$ is clean. Furthermore, since $A\cap B=X'$ and we know that $|X'|\leq \rho(k)+3k$ and since every arc in $G$ is part of a non-null cycle, it follows that $|A\cap B|>1$. Hence, the second property is also satisfied. Finally, since $H_q$ is present in $G-(\bA\cup X)$ and has size at least $\pi(k)+k$, it follows that there is a subwall of $H_q$ of size at least $\pi(k)$ which is contained entirely in $A\setminus B$ (playing the role of the wall $\cW^\star$), satisfying the third property.  The time required to compute this separation is essentially dominated by the time required to compute the set $\bA$ and the flat wall $\cW$. That is, the time required by the algorithm of Proposition \ref{prop:flat wall}, which is polynomial in $|V(G)|,\rho(k)$ and $\sigma(k)$.    
     This completes the proof of the lemma.
\end{proof}

     We complete this section by giving the full proof of Theorem \ref{thm:erdos-posa}. We restate it here for the sake of completeness.

     \maintheorem*

     \begin{proof} 
     
     Let $\rho'(k)=\rho(k)+3k$ and let $\sigma'(k)=\sigma(k)+\rho'(k)+3k$. Let $c$ be a constant derived from the constant hidden in the $\Theta$ notation of Proposition \ref{prop:flat wall} and scaled up sufficiently so that a graph with treewidth at least $c\cdot ((\sigma'(k)(\sigma'(k)+\rho'(k)))^{20})$ will have either a $K_{\rho'(k)}$-expansion or a vertex set $\bA$ of size at most $\rho'(k)$ and a flat wall of size at least $\sigma'(k)$ in $G-\bA$.

     We first consider the case when the graph has treewidth $<w$, where  $w=c\cdot ((\sigma'(k)(\sigma'(k)+\rho'(k)))^{20})$. 
     In this case, we invoke Lemma \ref{lem:bounded_treewidth_case} and conclude that either the graph has $k$-vertex-disjoint non-null cycles or a gfvs of size at most $(k-1)(w+1)$. 
So, we choose $\tau$ in a way that it satisfies the following inequalities. 
     \begin{center}
     \begin{itemize}
    
     	\item $\tau(k)\geq \tau(k-1)+\rho'(k)+3k$
     	 	\item $\tau(k)\geq (k-1)(w+1)$
     \end{itemize}
     \end{center}
     Since $w=2^{2^{\Theta(k \log k)}}$ and $\rho(k)=2^{\Theta(k \log k)}$, we can choose $\tau$ such that $\tau(k)=2^{2^{\Theta(k \log k)}}$.
% Henceforth, we may assume that $G$ has treewidth at least $(c\cdot (\sigma(k)(\sigma(k)+\rho(k)))^{20})$.
     If the graph has treewidth $\geq w$ then we use the algorithm of Proposition \ref{prop:flat wall} to either compute a $k_{\rho'(k)}$-expansion $\eta$ or a vertex set $\bA$ of size at most $\rho'(k)$ and a flat wall $\cW$ of size $\sigma'(k)$ in $G-A$.

If the graph is found to have a $K_{\rho'(k)}$-expansion $\eta$, then we invoke Lemma \ref{lem:fine_linkage2} to either compute a $\frac{1}{2}$-integral $k$-packing of non-null cycles or a separation $(A,B)$ such that $G[A\setminus B]$ is clean, $1<|A\cap B|\leq 3k$, and $G[A\setminus B]$ contains a $K_{\rho(k)}$-expansion. 
If the graph $G[A]$ is not clean, then we recursively run the algorithm on the graph $G[B\setminus A]$ with parameter $k-1$. If the recursive call returns a $\frac{1}{2}$-integral $(k-1)$-packing then we add to this set an arbitrary non-null cycle in $G[A]$ to get the required $\frac{1}{2}$-integral $k$-packing of non-null cycles. On the other hand, if the recursive call returns a gfvs of $G[B\setminus A]$ of size at most $\tau(k-1)$, then this set along with the set $A\cap B$ is clearly a gfvs of size at most $\tau(k-1)+3k\leq \tau(k)$. Hence, we may assume that the graph $G[A]$ is clean. We then invoke Lemma \ref{lem:fine_linkage1} to compute a vertex $v$, delete it from the graph and recurse on the resulting graph which is known to be $k$-equivalent to the original graph.

In the second case, we invoke Lemma \ref{lem:wall_linkage2} to conclude that $G$ either has a $\frac{1}{2}$-integral $k$-packing of non-null cycles or a separation $(A,B)$ such that $G[A\setminus B]$ is clean, $|A\cap B|\leq \rho'(k)+3k$ and $G[A\setminus B]$ contains a $W_{\sigma(k)}$ wall.
If $G[A]$ is not-clean, then we do the same as before. That is, we recurse on the graph $G[B\setminus A]$ with parameter $k-1$ and use the obtained solution to either compute a $\frac{1}{2}$-integral $k$-packing or a gfvs of $G$ of size at most $\tau(k-1)+\rho'(k)+3k\leq \tau(k)$.
   Otherwise, we invoke Lemma \ref{lem:wall_linkage1} to compute a  vertex $v$, delete it and  recurse on the resulting graph which is $k$-equivalent to the input graph. The claimed running time in the theorem follows from those of Proposition \ref{prop:flat wall}, Lemma \ref{lem:fine_linkage1}, Lemma \ref{lem:fine_linkage2}, Lemma \ref{lem:wall_linkage1} and Lemma \ref{lem:wall_linkage2}.
   This completes the proof of the theorem.     	
     \end{proof}

\section{Conclusions}
We remark that in the case of planar graphs, the proof of Lemma \ref{lem:wall_linkage2} can be easily modified to produce an \emph{integral} $k$-packing as an output instead of a $\frac{1}{2}$-integral $k$-packing. Furthermore, since we can completely avoid the case of the graph having a large order clique-expansion and work directly with the case when the input graph has a large flat wall (without even the need to delete an apex set), the function $\tau$ can be chosen such that $\tau(k)=2^{\Theta(k \log k)}$.  As a result, our proof also implies  that non-null cycles on group-labeled \emph{planar graphs} have the {\erdosposa} property with the function $f(k,\cF)$ being bounded by a single-exponential function of $k$.
We leave open the existence of a \emph{polynomial} bound on the $\frac{1}{2}$-integral {\erdosposa} function for non-null cycles in general graphs.

\bibliographystyle{siam}
 \bibliography{references,references_new}

\begin{thebibliography}{10}

\bibitem{BruhnJS}
{\sc H.~{Bruhn}, F.~{Joos}, and O.~{Schaudt}}, {\em {Long cycles through
  prescribed vertices have the Erd\"{o}s-P\'{o}sa property}}, ArXiv e-prints,
  (2014).

\bibitem{ChekuriC14}
{\sc C.~Chekuri and J.~Chuzhoy}, {\em Polynomial bounds for the grid-minor
  theorem}, in Symposium on Theory of Computing, {STOC} 2014, New York, NY,
  USA, May 31 - June 03, 2014, 2014, pp.~60--69.

\bibitem{ChudnovskyCG08}
{\sc M.~Chudnovsky, W.~H. Cunningham, and J.~Geelen}, {\em An algorithm for
  packing non-zero \emph{A} -paths in group-labelled graphs}, Combinatorica, 28
  (2008), pp.~145--161.

\bibitem{ChudnovskyGGGLS06}
{\sc M.~Chudnovsky, J.~Geelen, B.~Gerards, L.~A. Goddyn, M.~Lohman, and P.~D.
  Seymour}, {\em Packing non-zero a-paths in group-labelled graphs},
  Combinatorica, 26 (2006), pp.~521--532.

\bibitem{Chuzhoy15}
{\sc J.~Chuzhoy}, {\em Improved bounds for the flat wall theorem}, in
  Proceedings of the Twenty-Sixth Annual {ACM-SIAM} Symposium on Discrete
  Algorithms, {SODA} 2015, San Diego, CA, USA, January 4-6, 2015, 2015,
  pp.~256--275.

\bibitem{Chuzhoy16}
\leavevmode\vrule height 2pt depth -1.6pt width 23pt, {\em Improved bounds for
  the excluded grid theorem}, CoRR, abs/1602.02629 (2016).

\bibitem{CyganFKLMPPS15}
{\sc M.~Cygan, F.~V. Fomin, L.~Kowalik, D.~Lokshtanov, D.~Marx, M.~Pilipczuk,
  M.~Pilipczuk, and S.~Saurabh}, {\em Parameterized Algorithms}, Springer,
  2015.

\bibitem{CyganPP16}
{\sc M.~Cygan, M.~Pilipczuk, and M.~Pilipczuk}, {\em On group feedback vertex
  set parameterized by the size of the cutset}, Algorithmica, 74 (2016),
  pp.~630--642.

\bibitem{diestelbook}
{\sc R.~Diestel}, {\em Graph Theory, 4th Edition}, vol.~173 of Graduate texts
  in mathematics, Springer, 2012.

\bibitem{Erdosposa65}
{\sc P.~Erd\H{o}s and L.~P\'{o}sa}, {\em On independent circuits contained in a
  graph}, Canad. Journ. Math, 17 (1965), pp.~347--352.

\bibitem{GeelenGRSV09}
{\sc J.~Geelen, B.~Gerards, B.~A. Reed, P.~D. Seymour, and A.~Vetta}, {\em On
  the odd-minor variant of hadwiger's conjecture}, J. Comb. Theory, Ser. B, 99
  (2009), pp.~20--29.

\bibitem{Guillemot11a}
{\sc S.~Guillemot}, {\em {FPT} algorithms for path-transversal and
  cycle-transversal problems}, Discrete Optimization, 8 (2011), pp.~61--71.

\bibitem{Huynharxiv}
{\sc T.~{Huynh}, F.~{Joos}, and P.~{Wollan}}, {\em {A unified
  Erd\H{o}s-P\'{o}sa theorem for constrained cycles}}, ArXiv e-prints,  (2016).

\bibitem{KawarabayashiW06}
{\sc K.~ichi Kawarabayashi and P.~Wollan}, {\em Non-zero disjoint cycles in
  highly connected group labelled graphs}, Journal of Combinatorial Theory,
  Series B, 96 (2006), pp.~296 -- 301.

\bibitem{Joos}
{\sc F.~{Joos}}, {\em {Parity linkage and the Erd{\"{o}}s-P\'{o}sa property of
  odd cycles through prescribed vertices in highly connected graphs}}, ArXiv
  e-prints,  (2014).

\bibitem{KakimuraKK12}
{\sc N.~Kakimura, K.~Kawarabayashi, and Y.~Kobayashi}, {\em Erd{\"o}s-p{\'o}sa
  property and its algorithmic applications: parity constraints, subset
  feedback set, and subset packing}, in SODA, 2012, pp.~1726--1736.

\bibitem{KakimuraKM11}
{\sc N.~Kakimura, K.~Kawarabayashi, and D.~Marx}, {\em Packing cycles through
  prescribed vertices}, J. Comb. Theory, Ser. B, 101 (2011), pp.~378--381.

\bibitem{MarxOSR13}
{\sc D.~Marx, B.~O'Sullivan, and I.~Razgon}, {\em Finding small separators in
  linear time via treewidth reduction}, {ACM} Transactions on Algorithms, 9
  (2013), p.~30.

\bibitem{PontecorviW12}
{\sc M.~Pontecorvi and P.~Wollan}, {\em Disjoint cycles intersecting a set of
  vertices}, J. Comb. Theory, Ser. {B}, 102 (2012), pp.~1134--1141.

\bibitem{RautenbachR01}
{\sc D.~Rautenbach and B.~A. Reed}, {\em The erdos-p{\'{o}}sa property for odd
  cycles in highly connected graphs}, Combinatorica, 21 (2001), pp.~267--278.

\bibitem{Reed99}
{\sc B.~A. Reed}, {\em Mangoes and blueberries}, Combinatorica, 19 (1999),
  pp.~267--296.

\bibitem{ReedRST96}
{\sc B.~A. Reed, N.~Robertson, P.~D. Seymour, and R.~Thomas}, {\em Packing
  directed circuits}, Combinatorica, 16 (1996), pp.~535--554.

\bibitem{ReedS96}
{\sc B.~A. Reed and F.~B. Shepherd}, {\em The gallai-younger conjecture for
  planar graphs}, Combinatorica, 16 (1996), pp.~555--566.

\bibitem{RobertsonS86}
{\sc N.~Robertson and P.~D. Seymour}, {\em Graph minors. v. excluding a planar
  graph}, J. Comb. Theory, Ser. {B}, 41 (1986), pp.~92--114.

\bibitem{Thomassen01}
{\sc C.~Thomassen}, {\em The erdos-p{\'{o}}sa property for odd cycles in graphs
  of large connectivity}, Combinatorica, 21 (2001), pp.~321--333.

\bibitem{Wahlstrom14}
{\sc M.~Wahlstr{\"{o}}m}, {\em Half-integrality, {LP}-branching and {FPT}
  algorithms}, in Proceedings of the Twenty-Fifth Annual {ACM-SIAM} Symposium
  on Discrete Algorithms, {SODA} 2014, Portland, Oregon, USA, January 5-7,
  2014, 2014, pp.~1762--1781.

\bibitem{Wollan11}
{\sc P.~Wollan}, {\em Packing cycles with modularity constraints},
  Combinatorica, 31 (2011), pp.~95--126.

\end{thebibliography}

\end{document}